\newcolumntype{P}[1]{>{\centering\arraybackslash}p{#1}} 
\newcommand{\thickhline}{%
    \noalign {\ifnum 0=`}\fi \hrule height 1.2  pt
    \futurelet \reserved@a \@xhline
} 
\newsavebox\myboxA
\newsavebox\myboxB
\newlength\mylenA
\newcommand*\oline[1]{%
  \vbox{%
    \hrule height 0.5pt
    \kern0.25ex
    \hbox{%
      \kern-0.1em
      \ifmmode#1\else\ensuremath{#1}\fi
      \kern-0.1em
    }
  }
}
\newtheorem{theorem}{Theorem}[section]
\newtheorem{lemma}[theorem]{Lemma}
\newtheorem{definition}[theorem]{Definition}
\crefname{theorem}{Theorem}{Theorems}
\crefname{lemma}{Lemma}{Lemmas}
\crefname{table}{Table}{Tables}
\crefname{figure}{Figure}{Figures}
\newcommand{\Tr}{\operatorname{Tr}}
\newcommand{\Rbb}{\mathbb{R}}
\newcommand{\OC}{\mathcal{O}}
\newcommand{\Var}{\operatorname{Var}}
\newcommand{\E}{\mathbb{E}}
\newcommand{\Pbb}{\mathbb{P}}
\newcommand{\Cov}{\operatorname{Cov}}
\newcommand{\bdot}{\scalebox{0.65}{\textbullet}}
\renewcommand{\vec}[1]{\boldsymbol{#1} }
\newcommand{\mat}[1]{\boldsymbol{#1} }
\DeclarePairedDelimiter{\norm}{\|}{\|}
\newcommand{\ols}[1]{\mskip.5\thinmuskip\overline{\mskip-.5\thinmuskip {#1} \mskip-.5\thinmuskip}\mskip.5\thinmuskip} 
\renewcommand*{\thefootnote}{(\arabic{footnote})}
\newcommand\blfootnote[1]{%
  \begingroup
  \renewcommand\thefootnote{}\footnote{#1}%
  \addtocounter{footnote}{-1}%
  \endgroup
} 
\begin{document}

\title{Randomized and quantum approximate matrix multiplication}

\author[1]{Simon Apers}
\author[2]{Arjan Cornelissen}
\author[3]{Samson Wang}

\affil[1]{Universit\'e Paris Cit\'e, CNRS, IRIF, Paris, France}
\affil[2]{Simons Institute, UC Berkeley, California, USA}
\affil[3]{Institute for Quantum Information and Matter, Caltech, California, USA}

\maketitle

\begin{abstract}
The complexity of matrix multiplication is a central topic in computer science.
While the focus has traditionally been on exact algorithms, a long line of literature also considers randomized algorithms, which return an approximate solution in faster time. 
In this work, we adopt a unifying perspective that frames these randomized algorithms in terms of mean estimation.
Using it, we first give refined analyses of classical algorithms based on random walks by Cohen-Lewis (`99), and based on sketching by Sarlós (`06) and Drineas-Kannan-Mahoney (`06).
We then propose an improvement on Cohen-Lewis that yields a single classical algorithm that is faster than all the other approaches, if we assume no use of (exact) fast matrix multiplication as a subroutine.
Second, we demonstrate a quantum speedup on top of these algorithms by using the recent quantum multivariate mean estimation algorithm by Cornelissen-Hamoudi-Jerbi (`22).


\end{abstract}

\blfootnote{*All authors contributed equally and are listed in alphabetical order.}





\vspace{0.5em}
\section{Introduction}

Over the last 6 decades, the complexity of multiplying two $n$-by-$n$ matrices has gone down from the na\"ive $O(n^3)$ upper bound to $O(n^\omega)$ with $\omega < 2.372$ the matrix multiplication coefficient -- see \cite{alman2025more} for the latest improvement.
Whether this coefficient can be brought down to its trivial optimum $\omega=2$ is a major open question in computer science.
In the meantime, however, the complexity can be reduced by considering algorithms for \emph{approximate} matrix multiplication. 

\begin{definition}[Approximate matrix multiplication]\label{def:problem}
Given a description of matrices $\mat{A}$ and $\mat{B}$, return a matrix $\mat{C}$ which satisfies $\| \mat{C} - \mat{AB} \|_* \leq \varepsilon$ for some chosen norm $\|\cdot \|_*$.
\end{definition}

\noindent
In our work we specifically consider $\|\cdot\|_* = \|\cdot\|_{\max}$ (the ``max-norm problem'') and $\|\cdot\|_* = \|\cdot\|_F$ (the ``Frobenius norm problem'').

Approximate matrix multiplication is a setting that allows to exploit \emph{randomized} algorithms, and this seems in contrast to exact matrix multiplication where the best known algorithms are deterministic.
In particular, there are well-known Monte Carlo algorithms for approximate matrix multiplication that build on random walks (or ``path integral Monte Carlo'') \cite{cohen1999approximating} and sketching \cite{sarlos2006improved,drineas2006fast}.
These algorithms can ultimately be viewed from the perspective of \emph{multivariate mean estimation}, i.e., they cleverly define a random variable $\mat{X}$, taking values in $\Rbb^d$ with $d = n^2$, whose mean $\vec{\mu}$ contains the entries of the matrix product $\vec{AB}$. Using e.g.~the vector Bernstein inequality, one can argue that with high probability, the empirical mean $\tilde{\vec{\mu}}$ of $L$ samples of $\mat{X}$, with covariance matrix $\mat{\Sigma} := \E[(\vec{X} - \E[\vec{X}])(\vec{X} - \E[\vec{X}])^T]$, satisfies
\begin{equation}
    \label{eq:classical-mean-estimation}
    \textit{multivariate mean estimation:} \quad
    \left\{ \begin{array}{l}
        \| \tilde{\vec{\mu}} - \vec{\mu} \|_{\infty} = \OC \left( \sqrt{\max_i [\mat{\Sigma}]_{ii}/L} \right) \\
        \| \tilde{\vec{\mu}} - \vec{\mu} \|_2 = \OC \left( \sqrt{\Tr[\mat{\Sigma}]/L} \right)
    \end{array}\right..
\end{equation}

The fact that these randomized algorithms can be phrased as a mean estimation task means that they are amenable to a quantum speedup, since we can make use of \emph{quantum multivariate mean estimation} from \cite{cornelissen2022multiVariateMeanEstimation}: given quantum access to $\widetilde{O}(L)$ samples of $\mat{X}$, we can return an estimator $\tilde{\vec{\mu}}$ that with high probability satisfies
\begin{equation}
    \label{eq:quantum-mean-estimation}
    \textit{quantum multivariate mean estimation:} \quad
    \left\{\begin{array}{l}
        \| \tilde{\vec{\mu}} - \vec{\mu} \|_{\infty} = \OC \left( \sqrt{\Tr[\mat{\Sigma}]}/L \right) \\
        \| \tilde{\vec{\mu}} - \vec{\mu} \|_2 = \OC \left( \sqrt{d \Tr[\mat{\Sigma}]}/L \right)
    \end{array}\right..
\end{equation}
For the $\ell_2$-norm, for instance, this yields a speedup over classical mean estimation when the number of quantum samples $L \gg d = n^2$. 

In this work, we study classical and quantum algorithms for approximate matrix multiplication utilizing mean estimation.
We make the following contributions:

\begin{itemize}
    \item We refine the seminal algorithm of Cohen \textit{\&} Lewis \cite{cohen1999approximating}. While the original algorithm is incomparable to the main other approach based on matrix sketching \cite{sarlos2006improved, drineas2006fast}, our refinement yields a classical algorithm with better time complexity than prior classical art (if no fast matrix multiplication is allowed).
    \item We give a unified analysis of sketching algorithms for matrix multiplication \cite{sarlos2006improved, drineas2006fast} based on mean estimation. Using this, we construct quantum analogues of all above algorithms. We find a quantum speedup in certain parameter regimes, and there is an unconditional speedup over prior quantum art.
    \item We extend our sketching-based analysis to give classical and quantum matrix multiplication algorithms for more than two matrices. 
\end{itemize}

We display a simplified account of our results in Table \ref{tab:results-introduction} (a more intricate comparison of time complexities can be found in Section \ref{sec:complexity-comparison}).
We express the runtimes using the element-wise matrix norm: for $\mat{M} \in \Rbb^{n \times m}$ and $p,q \in [1,\infty]$, we write
    \[\norm{\mat{M}}_{p,q} = \left(\sum_{j=1}^n \left(\sum_{k=1}^m \big|[\mat{M}]_{jk}\big|^p\right)^{\frac{q}{p}}\right)^{\frac1q}.\]
Specifically, $\norm{\cdot}_{\max} = \norm{\cdot}_{\infty,\infty}$ and $\norm{\cdot}_F = \norm{\cdot}_{2,2}$ are the max-norm and Frobenius norm, respectively (see Section \ref{sec:lin-algebra} for more details).
In order to facilitate navigation of the table, we also present the partial ordering of all norms that appear in Figure \ref{fig:norms}.

\begin{table}[!ht]
\renewcommand{\arraystretch}{2}
    \centering
    \begin{tabular}{rr|cc|c|c}\label{tab:results-introduction}
        &&  \multicolumn{2}{c|}{Runtime\ $(\widetilde{O}(\cdot))$} & \multirow{2}{*}{\makecell{Data model}} & \multirow{2}{*}{\makecell{Multiple \\ matrices?}} \\
        && $\|\cdot\|_{\max}$ & $\|\cdot\|_F$&    \\\hline
        \multirow{4}{*}{\rotatebox{90}{Classical}} & Cohen-Lewis \cite{cohen1999approximating} &  
        $\frac{\norm{\ols{\mat{A}} \ols{\mat{B}}}_{1,2}^2}{\varepsilon^2}$ 
        & $n\frac{\norm{\ols{\mat{A}} \ols{\mat{B}}}_{1,2}^2}{\varepsilon^2}$ & RAM &\checkmark \\
        & Sarl\'os \cite{sarlos2006improved} & $ n^2 \frac{\norm{\mat{A}}_{2,\infty}^2\norm{\mat{B}^T}_{2,\infty}^2}{\varepsilon^2}$ & $n^2 \frac{\norm{\mat{A}}_{2,2}^2\norm{\mat{B}}_{2,2}^2}{\varepsilon^2}$ & circuit & \checkmark \\
        &  \makecell[r]{Drineas-Kannan-Mahoney \\ {\cite{drineas2006fast}}}  & $ n^2 \frac{\norm{\mat{A}^T}_{\infty,2}^2\norm{\mat{B}}_{\infty, 2}^2}{\varepsilon^2} $ & $n^2 \frac{\norm{\mat{A}}_{2,2}^2\norm{\mat{B}}_{2,2}^2}{\varepsilon^2}$ & circuit &  \\
        & \makecell[r]{ \bf{This work}  \\ (random walk)} & $\frac{\norm{\mat{\ols{A}}\mat{\ols{B}}}_{\infty,\infty} \norm{\mat{\ols{A}}\mat{\ols{B}}}_{1,1}}{\varepsilon^2}$ & \cellcolor{Periwinkle!15}$\frac{\norm{\ols{\mat{A}} \ols{\mat{B}}}_{1,1}^2}{\varepsilon^2}$ & RAM & \checkmark
        \\\hline
        \multirow{3}{*}{\rotatebox{90}{Quantum}}  & Shao \cite{shao2018quantum} & $\frac{\norm{\mat{A}}_{2,1}\norm{\mat{B}^T}_{2,1}}{\varepsilon}$ & $n \frac{\norm{\mat{A}}_{2,1}\norm{\mat{B}^T}_{2,1}}{\varepsilon}$ & QROM & \\
        & \makecell[r]{\bf This work  \\ (quantum tug-of-war sketch)} & $n^2 \frac{\norm{\mat{A}}_{2,2}\norm{\mat{B}}_{2,2}}{\varepsilon}$ & & \makecell{quantum \\ circuit} & \checkmark \\
        & \makecell[r]{\bf This work \\ (quantum random walk)} & \cellcolor{Periwinkle!20}$\frac{\norm{\ols{\mat{A}} \ols{\mat{B}}}_{1,1}}{\varepsilon}$  & \cellcolor{Periwinkle!20}$n \frac{\norm{\ols{\mat{A}} \ols{\mat{B}}}_{1,1}}{\varepsilon}$ & QRAM &\checkmark 
    \end{tabular}
    \caption{\textbf{Time complexities of approximate matrix multiplication (without use of fast matrix multiplication).} For simplicity of presentation, we assume square matrices $\mat{A}$, $\mat{B} \in \mathbb{R}^{n \times n}$, and a matrix multiplication exponent of $\omega=3$. We use an overline over any matrix to denote the matrix of element-wise absolute values. The element-wise norms we use are defined in Section \ref{sec:lin-algebra}, and we present their partial ordering in Figure \ref{fig:norms}. If within a block (i.e., ``Classical" or ``Quantum") and a particular column there is an algorithm which is unconditionally faster than the others, we have shaded its complexity. Each of the algorithms utilizes $O(n^2)$ preprocessing time which we omit from the table for simplicity. We discuss the different (classical and quantum) data models in Section \ref{sec:computational-model}. We present a more detailed table of complexities in Table \ref{tab:results-full}. The cell left open would have a complexity that exceeds $O(n^3)$, and hence is not very useful.}
    \label{tbl:complexities}
\end{table}

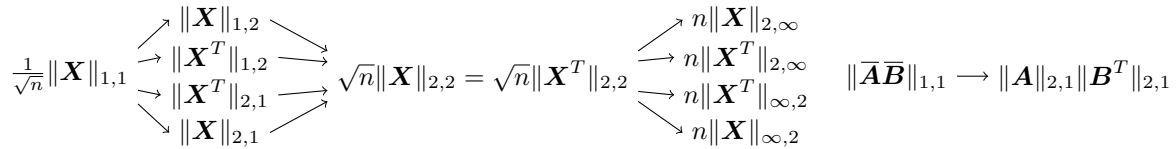
\begin{figure}[!ht]
    \centering
    \begin{tikzpicture}[xscale=2, yscale=.5]
        \node (11) at (0,0) {$\frac{1}{\sqrt{n}}\|\mat{X}\|_{1,1}$};
        
        \node (12) at (1,1.5) {$\|\mat{X}\|_{1,2}$};
        \node (12T) at (1,.5) {$\|\mat{X}^T\|_{1,2}$};
        \node (21T) at (1,-.5) {$\|\mat{X}^T\|_{2,1}$};
        \node (21) at (1,-1.5) {$\|\mat{X}\|_{2,1}$};
    
        \draw[->] (11.north east) to (12.west);
        \draw[->] (11) to (12T.west);
        \draw[->] (11) to (21T.west);
        \draw[->] (11.south east) to (21.west);

        \node (22) at (2.75,0) {$\sqrt{n}\|\mat{X}\|_{2,2} = \sqrt{n}\|\mat{X}^T\|_{2,2}$};

        \draw[->] (12.east) to (22.north west);
        \draw[->] (12T.east) to (22);
        \draw[->] (21T.east) to (22);
        \draw[->] (21.east) to (22.south west);

        \node (2inf) at (4.5,1.5) {$n\|\mat{X}\|_{2,\infty}$};
        \node (2infT) at (4.5,.5) {$n\|\mat{X}^T\|_{2,\infty}$};
        \node (inf2T) at (4.5,-.5) {$n\|\mat{X}^T\|_{\infty,2}$};
        \node (inf2) at (4.5,-1.5) {$n\|\mat{X}\|_{\infty,2}$};

        \draw[->] (22.north east) to (2inf.west);
        \draw[->] (22) to (2infT.west);
        \draw[->] (22) to (inf2T.west);
        \draw[->] (22.south east) to (inf2.west);

        \node (AB) at (5.5,0) {$\|\mat{\ols{A}}\mat{\ols{B}}\|_{1,1}$};
        \node (AtimesB) at (6.75,0) {$\|\mat{A}\|_{2,1}\|\mat{B}^T\|_{2,1}$};

        \draw[->] (AB) to (AtimesB);
    \end{tikzpicture}
    \caption{Partial ordering of matrix norms appearing in Table \ref{tab:results-introduction}. Here, $\mat{X}, \mat{A}, \mat{B} \in \Rbb^{n \times n}$, and $\mat{\ols{A}}$ represents the matrix containing all element-wise absolute values of $\mat{A}$. ``$\rightarrow$" denotes ``$\leq$". If there is no arrow in between two quantities then they are incomparable.}
    \label{fig:norms}
\end{figure}

To our knowledge the best previous quantum algorithm is that of Shao \cite{shao2018quantum}, which operates via a quantum subroutine to perform inner product estimation on each of the $n^2$ inner products that determine a matrix product.\footnote{See also the algorithm \cite{bernasconi2024quantum} which can also be adapted to give an algorithm of similar complexity, though it is specialized to outputting a quantum state.} The data model Shao requires is efficient access to the unitaries which encode the rows and columns of $\mat{A}$ and $\mat{B}$ in the amplitudes of quantum states -- this is achievable in $O(n^2)$ preprocessing time and $\textrm{polylog}(n)$ time per query with a QROM \cite{prakash2014QLinAlgAndMLThesis} (we define and discuss quantum memory models in Section \ref{sec:computational-model}). We note that Shao also presents other algorithms which depend on the condition number of the matrix which we do not display in our table. Whilst these approaches could be advantageous in certain settings, we focus on comparing algorithms for generic matrices which are efficient without additional assumptions.

We remark that our algorithms with the most refined time complexities inherit a particular property of the algorithm of Cohen \textit{\&} Lewis \cite{cohen1999approximating} --- that the complexity depends on the norm of a matrix product, rather than a product of matrix norms. This can be especially advantageous for non-negative matrices or matrices with (approximate) sparsity patterns, and this advantage can be amplified in generalizations to multiple matrix products. As an example, when $\mat{A}$ and $\mat{B}$ are stochastic matrices, all Frobenius norm algorithms are slower than their counterparts based on \cite{cohen1999approximating} by a factor up to $\Omega(n)$ . 


\subsection{Random walk algorithms}

While the Cohen-Lewis algorithm \cite{cohen1999approximating} works for arbitrary matrices, its idea is best illustrated for the multiplication of stochastic matrices $\mat{A}_1, \dots, \mat{A}_k$, i.e., every matrix $\mat{A}_j \in \Rbb^{n \times n}$ is entry-wise non-negative, and all row-sums equal $1$. The matrix product $\mat{A}_1, \dots, \mat{A}_k$ can then be interpreted as the probability transition matrix of a random walk on a directed graph, displayed in \cref{fig:random-walk}. Simulting this random walk samples from the distribution $\vec{d}^T\mat{A}_1 \cdots \mat{A}_k$, for an arbitrary initial distribution $\vec{d} \in \Rbb^n$. The original idea from Cohen \& Lewis is to reconstruct the matrix product $\mat{A}_1 \cdots \mat{A}_k$ by learning it row-by-row, i.e., learning all the probability distributions by starting from $\vec{d} = \vec{e}_j$, with $j$ running from $1$ to $n$.

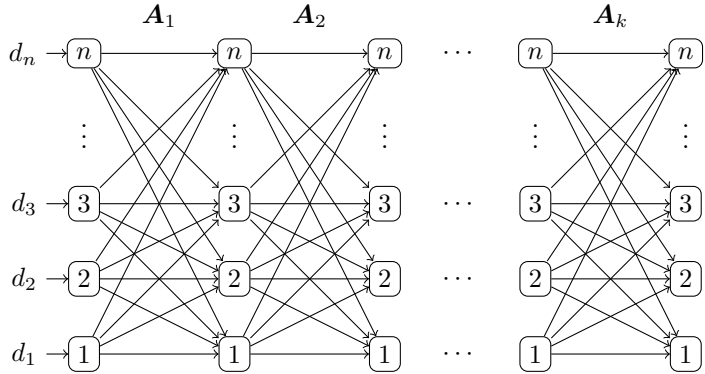
\begin{figure}[!ht]
    \centering
    \begin{tikzpicture}[vertex/.style = {draw, rounded corners = .3em}, xscale=2]
        \node[vertex] (01) at (0,1) {$1$};
        \node[vertex] (02) at (0,2) {$2$};
        \node[vertex] (03) at (0,3) {$3$};
        \node at (0,4) {$\vdots$};
        \node[vertex] (0n) at (0,5) {$n$};
        \draw[<-] (01) to (-.25,1) node[left] {$d_1$};
        \draw[<-] (02) to (-.25,2) node[left] {$d_2$};
        \draw[<-] (03) to (-.25,3) node[left] {$d_3$};
        \draw[<-] (0n) to (-.25,5) node[left] {$d_n$};
        
        \node[vertex] (11) at (1,1) {$1$};
        \node[vertex] (12) at (1,2) {$2$};
        \node[vertex] (13) at (1,3) {$3$};
        \node at (1,4) {$\vdots$};
        \node[vertex] (1n) at (1,5) {$n$};
        \draw[->] (01) to (11);
        \draw[->] (01) to (12);
        \draw[->] (01) to (13);
        \draw[->] (01) to (1n);
        \draw[->] (02) to (11);
        \draw[->] (02) to (12);
        \draw[->] (02) to (13);
        \draw[->] (02) to (1n);
        \draw[->] (03) to (11);
        \draw[->] (03) to (12);
        \draw[->] (03) to (13);
        \draw[->] (03) to (1n);
        \draw[->] (0n) to (11);
        \draw[->] (0n) to (12);
        \draw[->] (0n) to (13);
        \draw[->] (0n) to (1n);
        \node at (.5, 5.5) {$\mat{A}_1$};
        \node[vertex] (21) at (2,1) {$1$};
        \node[vertex] (22) at (2,2) {$2$};
        \node[vertex] (23) at (2,3) {$3$};
        \node at (2,4) {$\vdots$};
        \node[vertex] (2n) at (2,5) {$n$};
        \draw[->] (11) to (21);
        \draw[->] (11) to (22);
        \draw[->] (11) to (23);
        \draw[->] (11) to (2n);
        \draw[->] (12) to (21);
        \draw[->] (12) to (22);
        \draw[->] (12) to (23);
        \draw[->] (12) to (2n);
        \draw[->] (13) to (21);
        \draw[->] (13) to (22);
        \draw[->] (13) to (23);
        \draw[->] (13) to (2n);
        \draw[->] (1n) to (21);
        \draw[->] (1n) to (22);
        \draw[->] (1n) to (23);
        \draw[->] (1n) to (2n);
        \node at (1.5, 5.5) {$\mat{A}_2$};
        
        \node at (2.5,1) {$\cdots$};
        \node at (2.5,2) {$\cdots$};
        \node at (2.5,3) {$\cdots$};
        \node at (2.5,5) {$\cdots$};
        
        \node[vertex] (31) at (3,1) {$1$};
        \node[vertex] (32) at (3,2) {$2$};
        \node[vertex] (33) at (3,3) {$3$};
        \node at (3,4) {$\vdots$};
        \node[vertex] (3n) at (3,5) {$n$};
        \node[vertex] (41) at (4,1) {$1$};
        \node[vertex] (42) at (4,2) {$2$};
        \node[vertex] (43) at (4,3) {$3$};
        \node at (4,4) {$\vdots$};
        \node[vertex] (4n) at (4,5) {$n$};
        \draw[->] (31) to (41);
        \draw[->] (31) to (42);
        \draw[->] (31) to (43);
        \draw[->] (31) to (4n);
        \draw[->] (32) to (41);
        \draw[->] (32) to (42);
        \draw[->] (32) to (43);
        \draw[->] (32) to (4n);
        \draw[->] (33) to (41);
        \draw[->] (33) to (42);
        \draw[->] (33) to (43);
        \draw[->] (33) to (4n);
        \draw[->] (3n) to (41);
        \draw[->] (3n) to (42);
        \draw[->] (3n) to (43);
        \draw[->] (3n) to (4n);
        
        \node at (3.5, 5.5) {$\mat{A}_k$};
    \end{tikzpicture}
    \caption{A pictorial representation of the matrix product $\mat{A}_1 \cdots \mat{A}_k$ of stochastic matrices as a random walk on a directed graph. The random walk samples a vertex on the left side of the graph with probability distribution $\vec{d}$, and then transitions through the graph, with the probability transition matrix of every layer equal to $\mat{A}_j$. After $k$ steps, the probability distribution on the final layer of vertices is $\vec{d}^T\mat{A}_1 \cdots \mat{A}_k$.}
    \label{fig:random-walk}
\end{figure}

We subtly modify this approach, by analyzing what happens when we start with an arbitrary initial distribution $\vec{d}$. We denote the resulting sampled path of vertices $(j_0, \dots, j_k)$, and define the corresponding random variable $\mat{X}(j_0, \dots, j_k) = \vec{e}_{j_0}\vec{e}_{j_k}^T$. We then prove in \cref{lem:CL-expectation} that this can be used as an estimator for the matrix product, as
\[\E[\mat{X}] = \mathrm{diag}(\vec{d}) \cdot \mat{A}_1 \cdots \mat{A}_k\,, \qquad \text{and} \qquad \Tr[\mat{\Sigma}] \leq 1\,,\]
where we denote $\mathrm{diag}(\vec{d})$ as the diagonal matrix who shares diagonal entries with $\vec{d}$.

Next, we adapt the above approach to work for matrices with negative entries as well, i.e., we explain how to compute $\mat{A}_1 \cdots \mat{A}_k$, where $\mat{\ols{A}}_1, \dots, \mat{\ols{A}}_k$ are all stochastic matrices. The idea here, already observed by Cohen \& Lewis, is to label all the edges of the directed graph with the signs of the matrix entries. For a sampled path $(j_0, \dots, j_k)$, we then multiply all the signs together to obtain $s(j_0, \dots, j_k) \in \{\pm 1\}$, and we modify the random variable to be $\mat{X}(j_0, \dots, j_k) = s(j_0, \dots, j_k)\vec{e}_{j_0}\vec{e}_{j_k}^T$. Crucially, this does not impact the upper bound on the trace of the covariance matrix, since we can still show that $\Tr[\mat{\Sigma}] \leq 1$.

Finally, we adapt the construction to any matrix product, by observing that for any sequence of matrices $\mat{A}_1, \dots, \mat{A}_k$, we can construct a \textit{stochastic matrix product decomposition}. That is, we can find a vector $\vec{d}$ and a sequence of matrices $\mat{A}_1', \dots, \mat{A}_k'$, such that $\mat{\ols{A}}_1', \dots, \mat{\ols{A}}_k'$ are all stochastic, and such that $\mat{A}_1 \cdots \mat{A}_k = \mathrm{diag}(\vec{d})\mat{A}_1' \cdots \mat{A}_k'$ (see \cref{thm:matrix-product-decomposition}). Moreover, we prove that we can perform this stochastic matrix product decomposition as a preprocessing routine in time linear in the size of the input. We can now interpret the vector $\mat{d}$ as the probability distribution $\vec{d}/\|\vec{d}\|_1$, and absorb the resulting factor $\norm{\vec{d}}_1$ in the random variable, i.e., $\mat{X}(j_0, \dots, j_k) = \norm{\vec{d}}_1 \cdot s(j_0, \dots, j_k) \cdot \vec{e}_{j_0}\vec{e}_{j_k}^T$. We then find that
\[\E[\mat{X}] = \mat{A}_1 \cdots \mat{A}_k\,, \quad \max_{ij} [\mat{\Sigma}]_{ii} \leq \|\mat{A}_1 \cdots \mat{A}_k\|_{\infty,\infty} \cdot \|\mat{A}_1 \cdots \mat{A}_k\|_{1,1}\,, \quad \text{and} \quad \Tr[\mat{\Sigma}] \leq \|\mat{A}_1 \cdots \mat{A}_k\|_{1,1}^2\,.\]
Plugging these expressions into Eqs.~\eqref{eq:classical-mean-estimation}, \eqref{eq:quantum-mean-estimation} results in the complexities in \cref{tbl:complexities}.

%

To actually implement these algorithms, we require efficiently addressable read/write-memory in both the classical and quantum settings. Putting all these ideas together gives the following result.

\begin{theorem}[Simplified version of Theorems \ref{thm:improved-cohen-lewis} \textit{\&} \ref{thm:quantum-cohen-lewis}]\label{thm:informal-cohen-lewis}
    Consider a matrix product of $k$ matrices $\mat{A}_1 \cdots \mat{A}_k$ and the problem in Definition \ref{def:problem}. We use an overline over any matrix to denote the matrix of element-wise absolute values. We give classical algorithms in the RAM model, with run-times
    \[\widetilde{O}\left(k \cdot \frac{\norm{\mat{\ols{A}}_1 \cdots \mat{\ols{A}}_k}_{\infty,\infty} \cdot \norm{\mat{\ols{A}}_1 \cdots \mat{\ols{A}}_k}_{1,1}}{\varepsilon^2} \right), \qquad \text{and} \qquad \widetilde{O}\left(k \cdot \frac{\norm{\ols{\vec{A}_1} \cdots \ols{\vec{A}_k}}_{1,1}^2}{\varepsilon^2} \right),\]
    that solve the $\max$-norm and the Frobenius-norm problems, respectively, with high probability.  Similarly, we give quantum algorithms in the QRAM-model, with run-times
    \[\widetilde{O}\left(k \cdot \frac{\norm{\ols{\vec{A}_1} \cdots \ols{\vec{A}_k}}_{1,1}}{\varepsilon}\right), \qquad \text{and} \qquad \widetilde{O}\left(k \cdot \sqrt{nm} \frac{\norm{\ols{\vec{A}_1} \cdots \ols{\vec{A}_k}}_{1,1}}{\varepsilon}\right),\]
    that solve the $\max$-norm and the Frobenius-norm problems, respectively, with high probability. Here, $\mat{A}_1$ has $n$ rows, and $\mat{A}_k$ has $m$ columns. Both algorithms require classical preprocessing time linear in the size of the input up to polylogarithmic factors.
\end{theorem}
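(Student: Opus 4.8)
The strategy is to instantiate the generic templates of Eqs.~\eqref{eq:classical-mean-estimation} and \eqref{eq:quantum-mean-estimation} with the random-walk estimator, using \cref{thm:matrix-product-decomposition} to reduce a general product to a stochastic one and (the sign- and rescaling-variant of) \cref{lem:CL-expectation} for the moment bounds. \textbf{Preprocessing.} First I would run the stochastic matrix product decomposition of \cref{thm:matrix-product-decomposition} on $\mat{A}_1,\dots,\mat{A}_k$ to obtain $\vec{d}\in\Rbb^n$ and matrices $\mat{A}_1',\dots,\mat{A}_k'$ with every $\mat{\ols{A}}_j'$ stochastic and $\mat{A}_1\cdots\mat{A}_k=\mathrm{diag}(\vec{d})\,\mat{A}_1'\cdots\mat{A}_k'$; that theorem guarantees this in time linear in the input size. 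In the same pass I would build, for the distribution $\vec{d}/\|\vec{d}\|_1$ and for each row of each $\mat{\ols{A}}_j'$, a sampling data structure (alias tables classically; a QRAM-resident prefix-sum structure quantumly), so that one step of the layered random walk of \cref{fig:random-walk} costs $\widetilde{O}(1)$ time --- classically as a draw, quantumly as a controlled state-preparation unitary --- and record the edge-sign labels so that the running product $s(j_0,\dots,j_k)$ and the scalar $\|\vec{d}\|_1$ can be tracked (reversibly, in the quantum case).

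\textbf{Estimator, moments, and the classical algorithm.} With the walk producing a path $(j_0,\dots,j_k)$, set $\mat{X}(j_0,\dots,j_k)=\|\vec{d}\|_1\, s(j_0,\dots,j_k)\,\vec{e}_{j_0}\vec{e}_{j_k}^T$. By \cref{lem:CL-expectation} and the sign/rescaling arguments of the introduction, $\E[\mat{X}]=\mat{A}_1\cdots\mat{A}_k$, $\max_i[\mat{\Sigma}]_{ii}\le\|\mat{A}_1\cdots\mat{A}_k\|_{\infty,\infty}\|\mat{A}_1\cdots\mat{A}_k\|_{1,1}$, and $\Tr[\mat{\Sigma}]\le\|\mat{A}_1\cdots\mat{A}_k\|_{1,1}^2$, while one sample of $\mat{X}$ costs $\widetilde{O}(k)$ (namely $k$ walk steps). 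Feeding $L$ i.i.d.\ samples into the two estimators of Eq.~\eqref{eq:classical-mean-estimation}, the choice $L=\widetilde{O}(\max_i[\mat{\Sigma}]_{ii}/\varepsilon^2)$ gives an $\varepsilon$-accurate answer in max-norm and $L=\widetilde{O}(\Tr[\mat{\Sigma}]/\varepsilon^2)$ gives one in Frobenius norm, each with high probability; substituting the moment bounds, bounding $\|\mat{A}_1\cdots\mat{A}_k\|\le\|\mat{\ols{A}}_1\cdots\mat{\ols{A}}_k\|$ entry-wise, and multiplying by the $\widetilde{O}(k)$ per-sample cost yields the two stated classical runtimes. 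The empirical mean is a sum of at most $L$ rank-one terms, so it is maintained and output within the same budget.

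\textbf{Quantum algorithm and the hard part.} Handing the walk-sample unitary from the preprocessing step to the quantum multivariate mean estimator of \cite{cornelissen2022multiVariateMeanEstimation} and applying Eq.~\eqref{eq:quantum-mean-estimation} with $d=nm$, one needs $L=\widetilde{O}(\sqrt{\Tr[\mat{\Sigma}]}/\varepsilon)$ quantum samples for max-norm accuracy and $L=\widetilde{O}(\sqrt{nm\,\Tr[\mat{\Sigma}]}/\varepsilon)$ for Frobenius-norm accuracy; inserting $\Tr[\mat{\Sigma}]\le\|\mat{\ols{A}}_1\cdots\mat{\ols{A}}_k\|_{1,1}^2$, multiplying by $\widetilde{O}(k)$, and amplifying the success probability to a constant inside the $\widetilde{O}$ gives the two quantum runtimes. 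I expect the main obstacle to be entirely on the quantum side: checking that the decomposition and sampling structures of \cref{thm:matrix-product-decomposition} can be queried \emph{in superposition} from QRAM in $\mathrm{polylog}$ time, that $s(j_0,\dots,j_k)$ and $\|\vec{d}\|_1$ are produced reversibly without uncontrolled garbage, and that the resulting oracle meets the interface and normalization requirements of the estimator of \cite{cornelissen2022multiVariateMeanEstimation} --- in particular that the estimator's cost scales with $\Tr[\mat{\Sigma}]$ rather than with the worst-case magnitude $\|\vec{d}\|_1$ of $\mat{X}$. The classical claims, by contrast, are essentially bookkeeping once the linear-time decomposition and the $\widetilde{O}(1)$-per-step sampling are in place.
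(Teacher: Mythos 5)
Your proposal is correct and follows essentially the same route as the paper: linear-time stochastic matrix product decomposition, the signed and rescaled rank-one estimator of \cref{lem:CL-expectation} with $\vec{q}=\vec{d}_0/\|\vec{d}_0\|_1$, the vector Bernstein inequality for the classical runtimes, and the quantum multivariate mean estimator with reversible sampling and QRAM-backed sign/inner-product oracles for the quantum ones (the paper gets the quantum Frobenius bound by the equivalent norm conversion $\|\cdot\|_F\le\sqrt{nm}\|\cdot\|_{\max}$ rather than by invoking Eq.~\eqref{eq:quantum-mean-estimation} directly). The quantum implementation issues you flag are exactly the ones the paper resolves, via a reversible implementation of the alias-method sampler and $\widetilde{O}(k)$ QROM/QRAM look-ups per query.
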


The quantum complexity for Frobenius norm approximation which we demonstrate above simply follows by considering standard norm conversion between the max-norm and the Frobenius norm, and rescaling the error parameter. 

We remark that aside from being unconditionally faster than other algorithms in certain norm regimes (and the absence of fast matrix multiplication), the algorithms based on random walks are particularly well-suited to stochastic matrices. In particular, their complexities grow linearly in the number of matrices, rather than geometrically as is the case for all other approaches.

\subsection{Sketching algorithms}

The key idea of sketching algorithms as used for matrix multiplication is to sample vectors $\vec{s} \in \mathbb{R}^n$ and to compute pairs of vectors $\mat{A}\vec{s}$ and $\vec{s}^T\mat{B}$, which together form ``sketches" of $\mat{A}$ and $\mat{B}$, respectively. With judicious choice of $\vec{s}$, we can ensure first that the outer product $\mat{A}\vec{s}\vec{s}^T\mat{B}$ forms an unbiased estimator of $\mat{AB}$, and second that the second-order moments of this estimator are controlled. Sarl\'os \cite{sarlos2006improved} and Drineas et al.~\cite{drineas2006fast} propose different constructions for $\vec{s}$, and we build on these for our generalization to multiple matrices, as well as for our quantum algorithms. We will refer to their approaches as the tug-of-war sketch and column-sample sketch, respectively.

The tug-of-war sketch \cite{alon1996spaceComplexityFrequencyMoments} (also known as the AMS sketch) populates entries of $\vec{s}$ with a $4$-wise independent hash function $h:[n] \rightarrow\{-1,1\}$  the resulting sketched rows of $\mat{A}$ and sketched columns of $\mat{B}$ are ``pulled" in either the positive or negative directions with equal probability, hence the name ``tug-of-war sketch". The column-sample sketch samples a column of $\mat{A}$ (and correspondingly, a row of $\mat{B}$) according to a some probability distribution of choice. Thus, $\vec{s}$ takes the form of a 1-sparse vector weighted by the chosen probability distribution. Utilizing these sketches we give our results on sketching-based matrix multiplication algorithms.

\begin{theorem}[Simplified version of Theorems \ref{thm:Sarlos-multi} and \ref{thm:Sarlos-multi-fast}]
    Consider a matrix product of $k$ square matrices  $\mat{A}_1 \cdots \mat{A}_k \in \mathbb{R}^{n \times n}$ and the problem in Definition \ref{def:problem}. With classical preprocessing time linear in the size of the input up to polylogarithmic factors, we give:
    \[\makecell[l]{\text{a classical algorithm for the Frobenius}\\ \text{norm problem with run-time}} \quad\; \widetilde{O}\left( n^2 \cdot  \left(3^k\frac{\norm{\vec{A}_1}_{2,2}^2 \cdots \norm{\vec{A}_k}_{2,2}^2}{\varepsilon^2}\right)^{\omega -2} \right), \]
    \[\text{ a quantum algorithm for the max-norm problem with run-time \;} \widetilde{O}\left(n^2 \cdot 3^k\frac{\norm{\vec{A}_1}_{2,2} \cdots \norm{\vec{A}_k}_{2,2}}{\varepsilon} \right) \,, \]
    with high probability, where $\omega$ is the matrix multiplication exponent. Both algorithms do not require efficiently-addressable memory, and can run in the circuit model.
\end{theorem}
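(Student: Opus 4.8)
The plan is to realize both algorithms through the mean‑estimation template \eqref{eq:classical-mean-estimation}--\eqref{eq:quantum-mean-estimation}, applied to the tug‑of‑war (AMS) sketch \cite{alon1996spaceComplexityFrequencyMoments,sarlos2006improved}. Fix independent $4$-wise independent hash functions $h_1,\dots,h_{k-1}\colon[n]\to\{-1,1\}$, set $\vec{s}_j=(h_j(1),\dots,h_j(n))^T$, and define the $\Rbb^{n\times n}$-valued random variable
\[\mat{X}\;=\;\mat{A}_1\vec{s}_1\vec{s}_1^{\,T}\mat{A}_2\vec{s}_2\vec{s}_2^{\,T}\cdots\vec{s}_{k-1}\vec{s}_{k-1}^{\,T}\mat{A}_k\;=\;(\mat{A}_1\vec{s}_1)\Big(\,\textstyle\prod_{j=2}^{k-1}\vec{s}_{j-1}^{\,T}\mat{A}_j\vec{s}_j\Big)(\vec{s}_{k-1}^{\,T}\mat{A}_k).\]
Since each $h_j$ is in particular pairwise independent, $\E[\vec{s}_j\vec{s}_j^{\,T}]=\mat{I}$, so $\E[\mat{X}]=\mat{A}_1\cdots\mat{A}_k$; thus $\mat{X}$, viewed as a vector in $\Rbb^{d}$ with $d=n^2$, is an unbiased estimator of the matrix product. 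The quantitative heart of the argument is the covariance bound
\[\Tr[\mat{\Sigma}]\;\le\;\E\big[\|\mat{X}\|_F^2\big]\;\le\;3^{\,k-1}\,\|\mat{A}_1\|_F^2\cdots\|\mat{A}_k\|_F^2,\]
which I would prove by induction on $k$: writing $\mat{X}=(\mat{A}_1\vec{s}_1)(\vec{s}_1^{\,T}\mat{Y})$ with $\mat{Y}=\mat{A}_2\vec{s}_2\vec{s}_2^{\,T}\cdots\mat{A}_k$ independent of $\vec{s}_1$, we get $\|\mat{X}\|_F^2=(\vec{s}_1^{\,T}\mat{A}_1^T\mat{A}_1\vec{s}_1)(\vec{s}_1^{\,T}\mat{Y}\mat{Y}^T\vec{s}_1)$, and conditioning on $\vec{s}_2,\dots,\vec{s}_{k-1}$ the fourth‑moment identity $\E_{\vec{s}}[(\vec{s}^{\,T}\mat{M}\vec{s})(\vec{s}^{\,T}\mat{N}\vec{s})]\le 3\Tr[\mat{M}]\Tr[\mat{N}]$ (for PSD $\mat{M},\mat{N}$ and $4$-wise independent $\vec{s}\in\{\pm1\}^n$) gives $\E_{\vec{s}_1}\|\mat{X}\|_F^2\le 3\|\mat{A}_1\|_F^2\|\mat{Y}\|_F^2$; taking expectations and applying the inductive hypothesis to the $(k-1)$-fold product $\mat{Y}$ closes the induction. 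This is what Theorems~\ref{thm:Sarlos-multi} and \ref{thm:Sarlos-multi-fast} establish (possibly with sharper constants), and I would cite them for the precise statement, which is what yields the $3^k$ in the claim.

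For the classical Frobenius‑norm algorithm I would average $L$ independent copies of $\mat{X}$. The $\ell_2$ bound of \eqref{eq:classical-mean-estimation} then reads $\|\tilde{\vec{\mu}}-\vec{\mu}\|_F=\OC\big(\sqrt{\Tr[\mat{\Sigma}]/L}\big)$, so $L=\widetilde{O}\big(3^k\prod_j\|\mat{A}_j\|_F^2/\varepsilon^2\big)$ copies give error at most $\varepsilon$ with constant probability, and a median‑of‑means over $\widetilde{O}(1)$ independent runs — returning one whose Frobenius distance to a strict majority of the others is at most $2\varepsilon$ — boosts this to high probability at an extra cost of $\widetilde{O}(n^2)$. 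To compute the average efficiently, stack the layer‑$j$ sketch vectors into thin matrices $\mat{S}_j=\big[\vec{s}_j^{(1)}\mid\cdots\mid\vec{s}_j^{(L)}\big]\in\Rbb^{n\times L}$; then
\[\tilde{\vec{\mu}}\;=\;\tfrac1L\,(\mat{A}_1\mat{S}_1)\,\mat{D}\,(\mat{S}_{k-1}^{\,T}\mat{A}_k),\qquad \mat{D}\;=\;\prod_{j=2}^{k-1}\mathrm{diag}\!\big(\mat{S}_{j-1}^{\,T}\mat{A}_j\mat{S}_j\big),\]
where $\mat{D}$ is a single $L\times L$ diagonal matrix whose entries are read off from the diagonals of the matrices $\mat{S}_{j-1}^{\,T}\mat{A}_j\mat{S}_j$. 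Each product here involves an $n\times L$, $L\times L$, or $L\times n$ factor, so by tiling into $L\times L$ blocks and invoking (square) fast matrix multiplication, each costs $\widetilde{O}(n^2L^{\omega-2})$ in the regime $L\le n$ (and when $L>n$ one just multiplies the $\mat{A}_j$ exactly in $O(k\,n^\omega)$, within the same bound); summing over the $O(k)$ products and substituting $L$ gives the claimed runtime $\widetilde{O}\big(n^2(3^k\prod_j\|\mat{A}_j\|_F^2/\varepsilon^2)^{\omega-2}\big)$. Because each $h_j$ is specified by an $O(\log n)$-bit seed and evaluated on the fly, and fast matrix multiplication is an arithmetic circuit, the procedure runs in the circuit model with no addressable memory; the only preprocessing is computing $\|\mat{A}_j\|_F$ to fix $L$, in time $O(k\,n^2)$.

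For the quantum max‑norm algorithm I would instead hand single copies of $\mat{X}$ to the quantum multivariate mean estimator of \cite{cornelissen2022multiVariateMeanEstimation} and use the $\ell_\infty$ bound of \eqref{eq:quantum-mean-estimation}: from $\widetilde{O}(L)$ coherent samples it returns $\tilde{\vec{\mu}}$ with $\|\tilde{\vec{\mu}}-\vec{\mu}\|_\infty=\OC\big(\sqrt{\Tr[\mat{\Sigma}]}/L\big)$, so taking $L=\widetilde{O}\big(\sqrt{\Tr[\mat{\Sigma}]}/\varepsilon\big)=\widetilde{O}\big(3^{k}\prod_j\|\mat{A}_j\|_F/\varepsilon\big)$ makes the reshaped output $\varepsilon$-close to $\mat{A}_1\cdots\mat{A}_k$ in max‑norm with high probability. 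One coherent sample of $\mat{X}$ is produced by a reversible circuit of size $\widetilde{O}(k\,n^2)$ — expand the seeds into $\vec{s}_1,\dots,\vec{s}_{k-1}$, form $\mat{A}_1\vec{s}_1$, the scalar $\prod_{j=2}^{k-1}\vec{s}_{j-1}^{\,T}\mat{A}_j\vec{s}_j$, and $\vec{s}_{k-1}^{\,T}\mat{A}_k$ via $O(k)$ matrix--vector products, then take the scaled outer product — and this circuit only touches the hard‑wired entries of the $\mat{A}_j$ and the short seed, so no QRAM is needed. Multiplying the per‑sample cost by the sample count gives total time $\widetilde{O}\big(k\,n^2\cdot 3^{k}\prod_j\|\mat{A}_j\|_F/\varepsilon\big)$, matching the claim, with the same $O(k\,n^2)$ preprocessing.

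The step I expect to be the main obstacle is the covariance bound $\Tr[\mat{\Sigma}]\le 3^{O(k)}\prod_j\|\mat{A}_j\|_F^2$: pushing the fourth‑moment identity through the $k-1$ nested sketches is delicate because, after conditioning on the inner hash functions, the matrix $\mat{N}=\mat{Y}\mat{Y}^T$ to which the identity is applied is itself random (it is the Gram matrix of the remaining sketched product), so the induction must track how $\E[\|\cdot\|_F^2]$ of the partial products propagates from layer to layer without the per‑sketch constant degrading — and one must decide whether the crude ``one fresh estimator per sample'' bound $3^{k-1}$ suffices or whether a sketch‑reuse variant (which can shave the exponential dependence on $k$) is what underlies the stated ``fast'' runtime. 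The remaining ingredients — the rectangular‑matrix‑multiplication accounting across the $L\le n$ and $L>n$ regimes, the vector median‑of‑means, and checking that the oracle and bounded‑covariance hypotheses of \cite{cornelissen2022multiVariateMeanEstimation} are met by the construction above — are routine.
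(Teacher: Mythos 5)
Your proposal is correct, and in two places it takes a genuinely different route from the paper. First, for the covariance bound: the paper proves $\Tr[\mat{\Sigma}]\le 3^k\prod_j\norm{\mat{A}_j}_F^2$ by fully expanding the covariance into indices (Lemmas \ref{lem:covariance-multiple-matrices} and \ref{lem:covariance-dkm-multiple}), substituting $\Cov[s_is_j,s_ks_\ell]=(\delta_{ik}\delta_{j\ell}+\delta_{i\ell}\delta_{jk})(1-\delta_{ij})$, decoupling the sums and counting terms. Your induction via the fourth-moment inequality $\E[(\vec{s}^T\mat{M}\vec{s})(\vec{s}^T\mat{N}\vec{s})]\le 3\Tr[\mat{M}]\Tr[\mat{N}]$ is cleaner and even yields the slightly better constant $3^{k-1}$; the obstacle you flag (that $\mat{N}=\mat{Y}\mat{Y}^T$ is random) is not a real one, since after conditioning on $\vec{s}_2,\dots,\vec{s}_{k-1}$ the inequality holds pointwise with $\Tr[\mat{N}]=\norm{\mat{Y}}_F^2$, and taking the outer expectation is exactly the inductive quantity --- so there is no degradation of the per-layer constant. (Do make the bound self-contained rather than citing Theorems \ref{thm:Sarlos-multi} and \ref{thm:Sarlos-multi-fast} for it, as that would be circular here.) Second, for the fast classical algorithm: the paper switches to the estimator $\bigl(\prod_{\ell=1}^{k-1}\mat{A}_\ell\mat{S}_\ell^T\mat{S}_\ell\bigr)\mat{A}_k$, which for $k>2$ is \emph{not} a sample mean, and therefore recomputes its covariance from scratch and falls back on Markov plus median-of-means. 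You instead keep the exact sample mean of the rank-one sketches and observe that it equals $\frac1L(\mat{A}_1\mat{S}_1)\mat{D}(\mat{S}_{k-1}^T\mat{A}_k)$ for a diagonal $\mat{D}$ built from the diagonals of $\mat{S}_{j-1}^T\mat{A}_j\mat{S}_j$, which is still computable with fast rectangular multiplication in $\widetilde{O}(kn^2L^{\omega-2})$; this buys a direct application of the vector Bernstein inequality and avoids the second covariance computation entirely. The quantum part of your argument coincides with the paper's (Theorems \ref{thm:quantum-sarlos-2} and \ref{thm:Sarlos-multi}).
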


The complexity of our classical algorithm we state above reduces to that of \cite{sarlos2006improved} and \cite{drineas2006fast} in the case of multiplication of two matrices. To the authors' knowledge this property is not shared by previous analyses of sketching algorithms for multiple matrices  -- they either only naturally extend to 3 matrices \cite[Appendix A1]{drineas2006fast} or incur large (super-quadratic) runtime in the inverse precision $\varepsilon^{-1}$ \cite[Appendix B2]{sarlos2006improved}.  

The classical algorithm can be sped up by use of fast matrix multiplication, as can all the sketching-based classical algorithms we consider in this manuscript. 
Interestingly, for any matrix multiplication exponent $\omega < 2.5$, this yields an algorithm with time complexity scaling sublinearly with $\varepsilon^{-1}$, which has better dependence on the precision even over quantum algorithms. The origin of the advantage using fast matrix multiplication can be viewed as follows. Conventionally, we might wish to estimate $\E[\mat{A}\vec{s} \vec{s}^T\mat{B}] = \mat{A}\mat{B}$ by taking a sample mean of $c$ independently drawn samples $\{\vec{s}_j \}_j$. Naturally, the time complexity is linear in $c$. However, due to the structure of the outer product, such a sample mean is also exactly equal to $\mat{A}\vec{S}^T \vec{S}\mat{B}$ where the rows of the matrix $\mat{S} \in \mathbb{R}^{c \times n}$ are given by $\{\vec{s}_j /\sqrt{c}\}_j$. The matrix $\mat{A}\mat{S}^T \mat{S}\mat{B}$ can now be evaluated with fast matrix multiplication with time complexity proportional to $c^{\omega-2}$. This provides a way to return the sample mean sublinearly (after preprocessing time proportional to $nc$ to construct $\mat{S}$). 

Our quantum algorithm as quoted above is based on the tug-of-war sketch utilized by Sarl\'os. We can also give a quantum algorithm with similar complexity based on the column-sample sketch of Drineas et al., however, unlike the algorithms based on the tug-of-war sketch both classical and quantum approaches can only accommodate multiplication of two matrices. We note that the quantum algorithm we state has unconditionally worse time complexity than the one we give in Theorem \ref{thm:informal-cohen-lewis}. However, we record it as it uses a weaker assumption on the type of data structure required, which we discuss in Section \ref{sec:computational-model} (and it can handle multiple matrices which the algorithm of \cite{shao2018quantum} cannot).






\subsection{Discussion}




Our quantum algorithms for max-norm approximation depend on the trace of the covariance matrix, which is the same quantity that characterizes approximation in Frobenius norm. We note that in classical mean estimation, there is a different quantity that controls max-norm approximations: the maximum entry-wise variance --- as clear examples, note the disparity in the complexity comparison of \cref{tbl:complexities}. It is an interesting open question if the quantum multivariate mean estimation of \cite{cornelissen2022multiVariateMeanEstimation} in max-norm can be refined to depend on the same second-moment quantity that we see for classical mean estimation.

In the presentation of Table \ref{tab:results-introduction}
we have made the simplifying assumption that the matrix multiplication exponent is $\omega = 3$. In reality, one can choose $\omega < 3$ by exploiting fast matrix multiplication, in which case there are interesting asymptotic conclusions about the runtime of sketching algorithms --- they can outperform the quantum approaches we present in certain regimes. We remark that the most costly step of our quantum algorithms essentially is matrix-vector multiplication in coherent arithmetic (rather than rectangular matrix multiplication in classical sketching algorithms), and we leave it as an open question whether or not fast matrix multiplication can be exploited in our quantum algorithms. Meanwhile, in comparing our classical algorithm to those which can be sped up with fast matrix multiplication, we do stress that we may be more interested in practical considerations for immediate application. From this perspective, we recall that the Strassen algorithm \cite{strassen1969gaussian} with $\omega \approx 2.8$ is commonly implemented. However, refinements based around the work of Coppersmith and Winograd beyond $\omega < 2.38$ \cite{coppersmith1990matrixMultiplication} are considered ``galactic", and not practically implementable \cite{iliopoulos1989worstCaseMultiplication}.

Finally, our work considers generic unstructured matrices. However, other classical approaches to approximate matrix multiplication are particularly useful in exploiting sparse matrices \cite{pagh2013compressed}. We leave open the question of whether we can exploit sparsity in the quantum setting as well.

\section{Preliminaries}

\subsection{Notation}

We denote matrices and vectors in bold type $\mat{A} \in \mathbb{C}^{N\times N}$ and $\mat{a} \in \mathbb{C}^{N}$. We denote their entries as $[\mat{A}]_{ij}$ and $a_{j}$ respectively for all $i, j \in [N]$. Denote $[\mat{A}]_{k \bdot} \in \mathbb{C}^{1 \times N}$ as the $k^{th}$ row matrix of $\mat{A}$ for all $k \in [N]$, and denote $[\mat{A}]_{\bdot j} \in \mathbb{C}^{N\times 1}$ as the $k^{th}$ column matrix of $\mat{A}$ for all $j \in [N]$. For any matrix, we write $\ols{\mat{A}}$ as the matrix with element-wise absolute values $[\ols{\mat{A}}]_{ij} = |[\mat{A}]_{ij}|$. We denote $\mat{I}$ as the identity matrix and we denote $\vec{1}_m$ as the all-ones vector of dimension $m$. 

The notation $\widetilde{O}(\cdot)$ hides polylogarithmic factors, that is we denote $\widetilde{O}(g(n)) = O\big(g(n)\log^k(g(n))\big)$ for some constant $k$.

\subsection{Linear algebra and probability theory results}\label{sec:lin-algebra}

For vectors $\vec{a} \in \mathbb{R}^n$, we denote the usual $\ell_p$-norm as $\|\vec{a}\|_p = (\sum_i a_i^p)^{1/p} $. We also make use of some element-wise matrix norms:

\begin{definition}[Element-wise matrix norms]
    Let $\mat{M} \in \Rbb^{n \times m}$, and let $p,q \in [1,\infty]$. We write
    \[\norm{\mat{M}}_{p,q} = \left(\sum_{j=1}^n \left(\sum_{k=1}^m \big|[\mat{M}]_{jk}\big|^p\right)^{\frac{q}{p}}\right)^{\frac1q}.\]
    We note $\norm{\cdot}_{\max} = \norm{\cdot}_{\infty,\infty}$ and $\norm{\cdot}_F = \norm{\cdot}_{2,2}$ are the max-norm and Frobenius norm, respectively.
\end{definition}

We can think of this norm intuitively as follows: first we take the vector of $\ell_p$-norms of all the rows of the matrix, and then we take the $\ell_q$-norm of the resulting vector. If we want to swap the roles of rows and columns, we can consider the norm of the transpose instead. An important subtlety is that $\norm{\mat{A}}_{p,q}$ is in general not equal to $\norm{\mat{A}^T}_{q,p}$. On the other hand, if $p = q$, then the norm for any given matrix is the same as the norm for its transpose.

Using this interpretation of the element-wise matrix norms, we also observe that H\"older's inequality holds.

\begin{lemma}[H\"older's inequality for element-wise matrix norms]\label{lem:holder}
    Let $\mat{M} \in \Rbb^{n \times m}$. Let $1 \leq p' \leq p \leq \infty$, and $1 \leq q' \leq q \leq \infty$. Then,
    \[\norm{\mat{M}}_{p,q} \leq \norm{\mat{M}}_{p',q'} \leq n^{\frac1q-\frac1{q'}}m^{\frac1p-\frac1{p'}}\norm{\mat{M}}_{p,q}.\]
\end{lemma}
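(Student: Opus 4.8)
The plan is to reduce the matrix statement to two applications of the ordinary vector Hölder/monotonicity-of-$\ell_p$-norms inequalities: one ``inner'' application over the columns (the $p$-index) applied row by row, and one ``outer'' application over the rows (the $q$-index). First I would recall the scalar facts for a vector $\vec{v} \in \Rbb^N$ and $1 \le r' \le r \le \infty$: we have $\norm{\vec{v}}_r \le \norm{\vec{v}}_{r'} \le N^{1/r' - 1/r}\norm{\vec{v}}_r$, where the left inequality is monotonicity of $\ell_r$-norms in $r$ and the right is Hölder applied to $\sum_i |v_i|^{r'} = \sum_i |v_i|^{r'}\cdot 1$ (using exponents $r/r'$ and its conjugate). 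Both inequalities also hold in the limiting cases $r$ or $r'$ equal to $\infty$, read off directly.

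The key steps, in order: (i) For each fixed row $j \in [n]$, apply the scalar inequality with $N = m$, $r = p$, $r' = p'$ to the row vector $([\mat{M}]_{j1},\dots,[\mat{M}]_{jm})$, giving
\[
\Bigl(\sum_k |[\mat{M}]_{jk}|^{p}\Bigr)^{1/p} \;\le\; \Bigl(\sum_k |[\mat{M}]_{jk}|^{p'}\Bigr)^{1/p'} \;\le\; m^{\frac1{p'}-\frac1{p}} \Bigl(\sum_k |[\mat{M}]_{jk}|^{p}\Bigr)^{1/p}.
\]
Write $a_j := (\sum_k |[\mat{M}]_{jk}|^p)^{1/p}$ and $b_j := (\sum_k |[\mat{M}]_{jk}|^{p'})^{1/p'}$, so this reads $a_j \le b_j \le m^{1/p'-1/p} a_j$ entrywise in the vectors $\vec{a},\vec{b} \in \Rbb^n$. (ii) For the left inequality of the lemma: since $\norm{\mat{M}}_{p,q} = \norm{\vec{a}}_q \le \norm{\vec{b}}_q$ (using $a_j \le b_j$ and monotonicity of $\ell_q$-norms under entrywise domination of nonnegative vectors) and $\norm{\vec{b}}_q \le \norm{\vec{b}}_{q'} = \norm{\mat{M}}_{p',q'}$ (scalar monotonicity in the index, with $q' \le q$), we get $\norm{\mat{M}}_{p,q} \le \norm{\mat{M}}_{p',q'}$. (iii) For the right inequality: $\norm{\mat{M}}_{p',q'} = \norm{\vec{b}}_{q'} \le \norm{\vec{b}}_{q'} $; bound $\norm{\vec{b}}_{q'} \le n^{1/q'-1/q}\norm{\vec{b}}_q$ by the scalar Hölder step with $N=n$, then $\norm{\vec{b}}_q \le m^{1/p'-1/p}\norm{\vec{a}}_q = m^{1/p'-1/p}\norm{\mat{M}}_{p,q}$ using $b_j \le m^{1/p'-1/p} a_j$ entrywise and homogeneity of $\ell_q$. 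Multiplying, $\norm{\mat{M}}_{p',q'} \le n^{1/q'-1/q} m^{1/p'-1/p}\norm{\mat{M}}_{p,q}$, which is the claimed bound since $\tfrac1q - \tfrac1{q'} = -(\tfrac1{q'}-\tfrac1q)$ and likewise for $p$ — i.e. the stated exponents $n^{1/q-1/q'} m^{1/p - 1/p'}$ are exactly $n^{-(1/q'-1/q)} m^{-(1/p'-1/p)}$; one checks the sign conventions match (the paper's exponents are negative, consistent with $\norm{\mat{M}}_{p',q'}$ being the larger quantity, so I would double check whether the intended reading places the $n,m$ powers on the correct side and, if needed, rewrite the factor as $n^{\frac1{q'}-\frac1q} m^{\frac1{p'}-\frac1p}$ on the $\norm{\mat{M}}_{p,q}$ side).

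The main obstacle is bookkeeping rather than mathematics: getting the direction and sign of every exponent right (the lemma as typeset writes $n^{1/q-1/q'}$ which is $\le 1$, so it must multiply the smaller quantity $\norm{\mat{M}}_{p,q}$ — one must be careful that the displayed inequality chain is read with this factor attached on the right end), and handling the $\infty$ endpoints uniformly (when $p=\infty$ or $q=\infty$ the sums become maxima and the powers of $m,n$ become $m^{0}$ or $n^{-1/q'}$ etc., all of which are covered by the limiting forms of the scalar inequalities). I expect no genuine difficulty: the whole proof is two nested invocations of the one-line scalar estimate $\norm{\vec{v}}_r \le \norm{\vec{v}}_{r'} \le N^{1/r'-1/r}\norm{\vec{v}}_r$ plus monotonicity of $\ell_q$ under entrywise domination, so I would keep the write-up short and let the reader verify the exponent arithmetic.
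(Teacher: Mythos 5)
Your proof is correct, and since the paper states Lemma \ref{lem:holder} without proof (it is offered as an observation following the row-then-column reading of $\norm{\cdot}_{p,q}$), your two-step nested argument --- the scalar comparison $\norm{\vec{v}}_r \le \norm{\vec{v}}_{r'} \le N^{1/r'-1/r}\norm{\vec{v}}_r$ applied first within each row and then to the vector of row norms --- is precisely the intended route. Your instinct about the exponents is also right and worth stating plainly: as typeset, $n^{\frac1q-\frac1{q'}}m^{\frac1p-\frac1{p'}} \le 1$, so the displayed chain would force $\norm{\mat{M}}_{p,q} = \norm{\mat{M}}_{p',q'}$ whenever the factor is strictly below $1$, which fails already for the all-ones matrix with $(p',q')=(1,1)$ and $(p,q)=(2,2)$; the correct factor is $n^{\frac1{q'}-\frac1q}m^{\frac1{p'}-\frac1p}$, which is what your derivation yields and what the paper actually uses later (e.g.\ $\norm{\mat{X}}_{1,1} \le n\norm{\mat{X}}_{2,2}$ in Figure \ref{fig:norms}). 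The only blemish is the vacuous step ``$\norm{\vec{b}}_{q'} \le \norm{\vec{b}}_{q'}$'' at the start of (iii), clearly a slip that does not affect the argument.
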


As a preliminary for classical mean estimation, we recall the following inequalities. They are slight reformulations of the concentration inequalities known as the vector and matrix Bernstein inequalities, see e.g.~\cite{gross2011recovering,tropp2015introduction}.

\begin{lemma}[Vector and matrix Bernstein inequality]
    \label{lem:vector-bernstein-inequality}
    Let $\widetilde{\vec{a}}^{(L)} = \sum^{\kappa}_{i=1} \vec{a}_i/L$ be an empirical average over $L$ independent and indentically distributed $d$-dimensional random variables $\{\vec{a}_i\}_i$ which satisfy $\mathbb{E}[\vec{a}_i]=\vec{a}$. Then, with probability at least $(1-\delta)$ we have
    \[
        \big\| \widetilde{\vec{a}}^{(L)} - \vec{a} \big\|_{\infty} = O\! \left( \sqrt{\frac{\max_{i \in [d]} [\mat{\Sigma}]_{ii} \log(d/\delta)}{L}} \right)\,, \qquad \text{and} \qquad \big\| \widetilde{\vec{a}}^{(L)} - \vec{a} \big\|_2 = O\! \left( \sqrt{\frac{\Tr[\mat{\Sigma}] \log(1/\delta)}{L}} \right)\,,
    \]
    where $\mat{\Sigma} = \mathbb{E}[(\vec{a}_i - \vec{a})^T(\vec{a}_i - \vec{a})]$ is the covariance matrix.
\end{lemma}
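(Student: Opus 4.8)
The plan is to derive both bounds from scalar, respectively vector, Bernstein-type tail inequalities: the $\ell_\infty$ statement coordinate-by-coordinate with a union bound, and the $\ell_2$ statement directly from the dimension-free vector Bernstein inequality.

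For the $\ell_\infty$ bound I would fix a coordinate $k \in [d]$ and set $Y_{i,k} := [\vec{a}_i]_k - a_k$, so that $[\widetilde{\vec{a}}^{(L)} - \vec{a}]_k = \frac1L\sum_{i=1}^L Y_{i,k}$ is an average of $L$ i.i.d.\ mean-zero scalars of variance $[\mat{\Sigma}]_{kk}$. The scalar Bernstein inequality then gives, in the relevant range of $t$, $\Pbb\big[\,|\tfrac1L\sum_i Y_{i,k}| > t\,\big] \le 2\exp\!\big(-\Omega(L t^2 / [\mat{\Sigma}]_{kk})\big)$, and choosing $t = t_k := \Theta\big(\sqrt{[\mat{\Sigma}]_{kk}\log(d/\delta)/L}\big)$ makes this at most $\delta/d$. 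A union bound over the $d$ coordinates shows that, with probability at least $1-\delta$, $|[\widetilde{\vec{a}}^{(L)} - \vec{a}]_k| \le t_k$ simultaneously for all $k$; since $t_k \le \Theta\big(\sqrt{\max_{i}[\mat{\Sigma}]_{ii}\log(d/\delta)/L}\big)$, this is exactly the claimed $\ell_\infty$ estimate.

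For the $\ell_2$ bound I would apply the dimension-free vector Bernstein inequality (see e.g.~\cite{gross2011recovering}) to the independent mean-zero summands $\vec{X}_i := (\vec{a}_i - \vec{a})/L$. Their second moments sum to $\sum_{i=1}^L \E\|\vec{X}_i\|_2^2 = \frac1L\,\E\|\vec{a}_1 - \vec{a}\|_2^2 = \frac1L\Tr[\mat{\Sigma}]$, so vector Bernstein yields $\Pbb\big[\,\|\textstyle\sum_i \vec{X}_i\|_2 \ge t\,\big] \le \exp\!\big(-\Omega(t^2 L/\Tr[\mat{\Sigma}])\big)$, and taking $t = \Theta\big(\sqrt{\Tr[\mat{\Sigma}]\log(1/\delta)/L}\big)$ gives the second inequality. (A weaker $\sqrt{1/\delta}$ version is immediate from Markov applied to $\E\|\widetilde{\vec{a}}^{(L)} - \vec{a}\|_2^2 = \Tr[\mat{\Sigma}]/L$; invoking vector Bernstein only serves to upgrade this to the stated $\sqrt{\log(1/\delta)}$ dependence.)

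The one point that needs care — and the reason this is a \emph{reformulation} of, rather than literally, the textbook Bernstein inequalities — is that Bernstein bounds require a boundedness (or sub-exponential moment) hypothesis on $\vec{a}_i$ in addition to finite covariance: the exponent is genuinely $\Omega(Lt^2/\sigma^2)$ only for $t$ below a crossover threshold fixed by the almost-sure bound on $\|\vec{a}_i - \vec{a}\|$, beyond which the tail is merely sub-exponential. In every application in this paper the relevant estimator is bounded (it is sign-weighted and supported on rank-one $\pm\vec{e}_j\vec{e}_k^T$ terms, or a rescaled one-sparse column), so the target accuracies always lie in the sub-Gaussian regime and the clean form above applies; when making the argument rigorous I would either state this boundedness assumption explicitly or restrict the range of $L$ accordingly. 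Obtaining the $\ell_2$ bound without a spurious $\log d$ factor likewise relies on using the dimension-free vector Bernstein inequality, rather than the matrix-Bernstein embedding of a vector into a $2\times 2$ block matrix.
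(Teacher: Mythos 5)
Your proof is correct and follows the standard route: the paper itself does not prove this lemma, deferring instead to the cited references \cite{gross2011recovering,tropp2015introduction}, and your coordinate-wise Bernstein plus union bound for the $\ell_\infty$ case and dimension-free vector Bernstein for the $\ell_2$ case is exactly how those references establish it. Your observation that the statement tacitly requires a boundedness (or sub-exponential moment) hypothesis beyond finite covariance is accurate and worth recording, and your check that this hypothesis holds for every estimator actually used in the paper resolves the issue correctly.
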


\subsection{Quantum computational model and memory}\label{sec:computational-model}

In both our classical and quantum algorithms, we assume access to different levels of memory architecture (see ``Data model" in Table \ref{tab:results-introduction}). Specifically, some algorithms call an architecture that allows for efficient reading and writing operations -- given a register containing a memory address, we assume to be able to swap the bit at that location in memory with one from another register. In the quantum setting, this means that we assume to have access to a quantum random-access gate (QRAM gate), that for all data $\vec{x} \in \{0,1\}^d$, $j \in [d]$ and $b \in \{0,1\}$ acts as

\[\mathrm{QRAM} : \ket{x_1, \dots, x_d}\ket{j}\ket{b} \mapsto \ket{x_1,\dots,x_{j-1},b,x_{j+1},\dots,x_d}\ket{j}\ket{x_j}.\]

We remark that this is a stronger assumption than having access to a quantum read-only memory (QROM),\footnote{We also remark here that the many different nomenclatures for read-only and read-write operations on quantum memory. Instead of the acronyms QROM/QRAM~\cite{cornelissen2025quantum}, one can find QRAM/QRAG~\cite{allcock2023quantum}, QCRAM/full-QRAM~\cite{apeldoorn2020QConvexOptThesis}, QACM/QAQM~\cite{naya2020optimal}, QRACM/QRAQM \cite{kuperberg2013another} and more. We prefer the nomenclature QROM/QRAM, to mimic the nomenclature ROM/RAM for the corresponding classical counterparts.} where one assumes to have access to a gate
\[
\mathrm{QROM} : \ket{x_1, \dots, x_d}\ket{j}\ket{b}
\mapsto \ket{x_1, \dots, x_d}\ket{j}\ket{b \oplus x_j}.
\]

Having access to efficiently-addressable write operations is standard in the classical literature, but its quantum counterpart has been somewhat contentious, particularly as implementing an $O(d)$-sized QRAM/QROM fault-tolerantly may require $O(d)$ (classical) effort \cite{jaques2023qram, dalzell2025distillation}. We stress here that we only require our quantum algorithm to be able to perform the same operations as its corresponding classical counterpart, i.e., if our original classical algorithms do not actually use efficiently-addressable writing operations, then neither do our corresponding quantum algorithms. In this work we focus on theoretical comparison between the classical and quantum computational models. 

In other settings, we do not need to assume any efficient read or write operations. Instead, we operate in a \textit{circuit model}. Here, our classical (or quantum) algorithm can be written wholly in terms of a fixed classical (or quantum) circuit acting on input data \cite{sipser1996introduction}. When we consider this setting, it is sufficient to consider the input data in the form of a binary encoding $\ket{x_1} \otimes \cdots \otimes \ket{x_d}$.

So far in the above exposition we have used a simplifying picture of binary data. In both the classical and quantum models, we make the assumption that we can represent real numbers exactly in memory, and that we can retrieve them and perform basic arithmetic operations (that is, addition, multiplication, subtraction and division) on them in constant time. This is a standard assumption in the classical literature and we make the same assumption quantumly.

\subsection{Classical subroutines}

First, we recall how we can sample from a given distribution of $n$ outcomes, with probabilities written in memory. The most efficient method to achieve this is known as the alias method, and we recall a result by Vose.

\begin{theorem}[Alias method \cite{vose1991linear}]
    \label{thm:classical-sampling-from-distribution}
    Given a probability distribution over $n$ elements, with the probabilities written in memory, we can sample from it using $O(n)$ preprocessing time in the RAM-model, and $O(1)$ time per sample performing only read operations.
\end{theorem}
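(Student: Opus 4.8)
The plan is to reconstruct Vose's alias construction explicitly and verify its correctness and its cost. Given probabilities $p_1,\dots,p_n$ written in memory, I would first form the scaled weights $q_i = n\,p_i$, so that $\sum_i q_i = n$ and the "average" weight is $1$. The target data structure is a pair of length-$n$ tables $\mathrm{prob}[\cdot]$ and $\mathrm{alias}[\cdot]$ with the property that bin $i$ encodes the two-point mixture "return $i$ with probability $\mathrm{prob}[i]$, otherwise return $\mathrm{alias}[i]$". A sample is then produced by drawing $i\in[n]$ uniformly, drawing a uniform $u\in[0,1)$, and returning $i$ if $u<\mathrm{prob}[i]$ and $\mathrm{alias}[i]$ otherwise; this uses one index draw, one coin, one comparison, and at most two array reads, all in $O(1)$ time and with no writes, which establishes the per-sample claim.

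For the construction I would maintain two worklists $\mathrm{Small}=\{i:q_i<1\}$ and $\mathrm{Large}=\{i:q_i\geq 1\}$, both built in $O(n)$. While both are nonempty, pop $\ell$ from $\mathrm{Small}$ and $g$ from $\mathrm{Large}$, set $\mathrm{prob}[\ell]\leftarrow q_\ell$ and $\mathrm{alias}[\ell]\leftarrow g$, update $q_g\leftarrow q_g-(1-q_\ell)$, and reinsert $g$ into $\mathrm{Small}$ or $\mathrm{Large}$ according to whether its new weight is below $1$; once one list is exhausted, set $\mathrm{prob}[i]\leftarrow 1$ for every remaining index $i$. Each iteration permanently removes $\ell$, so there are at most $n$ iterations, each costing $O(1)$ with random-access arrays, giving $O(n)$ preprocessing in the RAM model; only the two tables are written during preprocessing.

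Correctness follows from the loop invariant that, restricted to the not-yet-processed indices, $\sum q_i$ equals the number of such indices: it holds initially ($=n$), and each iteration decreases both sides by exactly $1$, since mass $q_\ell$ is deposited in bin $\ell$ as outcome $\ell$, mass $1-q_\ell$ is deposited in bin $\ell$ as outcome $g$, these total $1$, and $\ell$ is removed while $q_g$ drops by $1-q_\ell$. The invariant guarantees that whenever a small index exists and at least two indices remain a large index also exists, so the loop never stalls, and that every leftover index has weight exactly $1$, making the final assignment consistent. Tracking the residual mass of a fixed outcome $j$ — it starts at $q_j$, decreases by $1-q_\ell$ each time $j$ is the large partner of some $\ell$, and is deposited in full into bin $j$ when $j$ itself is processed — shows that the total height of outcome $j$ summed over all bins equals $q_j$, whence $\Pr[\text{output}=j]=\tfrac1n q_j=p_j$.

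The main (and essentially only) delicate point is this termination/feasibility argument: one must rule out the loop getting stuck without a valid (small, large) pair and must know the residual indices carry weight exactly $1$, which is precisely what the invariant above delivers. In finite-precision arithmetic one additionally clamps any residual $\mathrm{prob}$ value to the interval $[0,1]$, but under the exact-arithmetic model assumed in Section~\ref{sec:computational-model} this is automatic, and the uniform draws over $[n]$ and $[0,1)$ needed per sample are $O(1)$ by the same model.
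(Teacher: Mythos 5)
Your proof is correct: the paper states this result as a black-box citation to Vose and gives no proof of its own, and your reconstruction is exactly the standard alias-method construction from that reference, with the key invariant (unprocessed weights sum to the unprocessed count) correctly used to rule out stalling and to justify setting the leftover bins to $1$. The complexity and read-only-per-sample claims also check out under the paper's exact-arithmetic RAM model, so nothing further is needed.
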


The fact that we only perform read operations per sample is important, because it means that we can port this algorithm to the QROM-setting. Indeed, we can implement Vose's sampling algorithm using a quantum read-only memory of size $O(n)$, which we initialize using classical RAM only, and then access with read-only operations by a quantum algorithm.
        
\subsection{Quantum subroutines}\label{sec:quantum-primitives}

We will make use of a multivariate quantum mean estimation routine from \cite{cornelissen2022multiVariateMeanEstimation} (and later refined in \cite{tang2025more}), which we restate here for convenience.

\begin{theorem}[Quantum mean estimation {\cite[Theorem 3.4]{cornelissen2022multiVariateMeanEstimation}}]\label{thm:quantum-mean-estimation}
    \label{thm:multivariate-mean-estimation}
    Let $d \in \mathbb{N}$, $\delta \in (0,1)$, $(\Omega,\Pbb)$ be a probability space, and $\vec{X} : \Omega \to \Rbb^d$ a random variable. Suppose we have access to the following two operations:
    \[U_\Pbb : \ket{0} \mapsto \sum_{\omega \in \Omega} \sqrt{\Pbb(\omega)}\ket{\omega}, \qquad \text{and} \qquad O_{\vec{X}} : \ket{\mat{Y}}\ket{\omega}\ket{0} \mapsto \ket{\mat{Y}}\ket{\omega}\ket{\mat{Y}^T\vec{X}(\omega)},\]
    where $\omega \in \Omega$ and $\vec{Y} \in \Rbb^d$. Let $\vec{\mu}$ and $\mat{\Sigma}$ be the expectation and covariance matrix of $X$, and let $N \geq \log(d/\delta)$. Then, we can produce a vector $\widetilde{\vec{\mu}} \in \Rbb^d$, such that $\norm{\widetilde{\vec{\mu}} - \vec{\mu}}_{\infty} \leq \sqrt{\Tr[\mat{\Sigma}]}\log(d/\delta)/N$, with probability at least $(1-\delta)$, with $\widetilde{O}(N)$ queries to $U_\Pbb$, $O_{\vec{X}}$ and an inner product routine between the random variable $X(\omega)$ and an arbitrary vector $\mat{Y} \in \Rbb^d$, and $\widetilde{O}(d)$ additional elementary gates.
\end{theorem}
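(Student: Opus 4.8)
This statement is essentially a restatement of \cite[Theorem 3.4]{cornelissen2022multiVariateMeanEstimation} (later sharpened in \cite{tang2025more}), so the plan is really just to invoke that reference; but let me sketch how I would reconstruct the argument, which has two layers — a one-dimensional quantum mean-estimation primitive, and a reduction that lifts it to $d$ dimensions.

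\textbf{The univariate primitive.} For a real random variable $Z : \Omega \to \Rbb$ accessed through $U_\Pbb$ and an oracle returning $Z(\omega)$, I would first establish that $\widetilde O(N)$ queries suffice to output $\widehat Z$ with $|\widehat Z - \E[Z]| = O\big(\sqrt{\Var[Z]}\,\log(1/\delta)/N\big)$, along the now-standard lines of quantum Monte Carlo: split $Z$ into positive and negative parts and truncate each at a geometric sequence of $O(\log N)$ scales (using Chebyshev and the variance bound to discard the tails); on each bounded non-negative piece run quantum amplitude estimation, which estimates its expectation at the quadratically improved rate $1/N$ instead of the classical $1/\sqrt N$; then recombine and take a median over $O(\log(1/\delta))$ repetitions for the confidence boost. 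Feeding a vector $\vec Y\in\Rbb^d$ into $O_{\vec X}$ exposes the scalar random variable $\vec Y^T\vec X$, whose variance is $\vec Y^T\mat\Sigma\vec Y$, so this primitive estimates $\vec Y^T\vec\mu$ to additive error $\sqrt{\vec Y^T\mat\Sigma\vec Y}\,\log(1/\delta)/N$.

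\textbf{The lift to $d$ dimensions, and the main obstacle.} Estimating the $d$ coordinates of $\vec\mu$ one at a time would cost $\widetilde O(Nd)$ queries, whereas the theorem promises $\widetilde O(N)$ in total with the error governed by $\Tr[\mat\Sigma]=\E\norm{\vec X-\vec\mu}_2^2$ rather than by $\max_i[\mat\Sigma]_{ii}$; closing this gap is the crux. The route I would take (and the one taken in \cite{cornelissen2022multiVariateMeanEstimation}) is a reduction to quantum gradient estimation: the characteristic function $\vec Y\mapsto\E_\omega[e^{\,i\,\vec Y^T\vec X(\omega)}]$ has gradient $i\vec\mu$ at $\vec Y=\vec 0$, the map $\omega\mapsto e^{\,i\,\vec Y^T\vec X(\omega)}$ is merely a phase that $O_{\vec X}$ can apply, and a Jordan-type gradient-estimation routine — with the function value at each grid point $\vec Y$ supplied by amplitude estimation — recovers all $d$ components of $\vec\mu$ from $\widetilde O(N)$ queries; the achieved precision per evaluation point is $\sqrt{\vec Y^T\mat\Sigma\vec Y}/N$, and on the suitably scaled grid the typical value of $\vec Y^T\mat\Sigma\vec Y$ is $\Theta(\Tr[\mat\Sigma])$, which is exactly where the $\Tr[\mat\Sigma]$ in the bound comes from. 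The $\widetilde O(d)$ extra elementary gates go into the arithmetic over $d$-dimensional vectors (forming grid points, reading out coordinates). The delicate step — and the place where I would have to follow \cite{cornelissen2022multiVariateMeanEstimation} closely — is propagating the per-grid-point amplitude-estimation errors through the gradient reconstruction to a clean $\norm{\cdot}_\infty$ bound, and controlling the accumulated failure probability over the $d$ coordinates, which produces the $\log(d/\delta)$ factor.
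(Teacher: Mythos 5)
Your proposal matches the paper's treatment: the paper does not prove this statement but imports it verbatim from \cite[Theorem 3.4]{cornelissen2022multiVariateMeanEstimation}, only adding a separate discussion of the time complexity (superposition over the grid, the inner-product oracle, and the inverse QFT), and your reconstruction via the univariate amplitude-estimation primitive plus Jordan-type gradient estimation of the characteristic function is an accurate sketch of that reference's actual argument. Nothing further is needed.
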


The time complexity is not explicitly analyzed in \cite{cornelissen2022multiVariateMeanEstimation}, so we discuss this now here. The three key steps of the algorithm consist of (1) preparation of a uniform superposition of a grid of points $\mat{Y} \in G \subset (-1/2, 1/2)^d$;  (2) computing the inner product of $\mat{Y}$ with a data vector; (3) applying the inverse quantum Fourier transform.

Step (1) costs $\widetilde{O}(d)$ without any need of quantum memory assumptions, due to efficient state preparation algorithms for uniform superpositions. The inverse quantum Fourier transform is also efficient. We can thus focus on step (2) as the non-generic cost. Explicitly, we require an operation that computes the inner product, i.e., for all $\mat{Y} \in \Rbb^d$,
    \[O_{\vec{X}}' : \ket{\mat{Y}}\ket{\omega}\ket{0} \mapsto \ket{\mat{Y}}\ket{\omega}\ket{\mat{Y}^T \mat{X}(\omega)}.\]
Since we assume to be able to do arithmetic operations in unit time, this leads to an additional $O(d)$ overhead per sample in the circuit model, bringing the total time complexity to $\widetilde{O}(dN)$. A similar observation was made in \cite{apers2023quantumSpeedupsLP}. However, what we will use in this work is that if we can implement this inner product more efficiently, say with time complexity $\eta$, then the resulting total time complexity becomes $\widetilde{O}(\eta N + d)$.



\section{The Cohen-Lewis algorithm}
    
In this section, we describe our first classical and quantum algorithms for implementing approximate matrix multiplication. The idea of this algorithm is based on the seminal work by Cohen and Lewis~\cite{cohen1999approximating}.

Suppose that we are given $k$ matrices, $\mat{A}_1, \dots, \mat{A}_k$, such that $\mat{A}_j \in \Rbb^{n_{j-1},n_j}$, for all $j \in [k]$. Then, $\mat{A}_1 \cdots \mat{A}_k$ is well-defined, and our goal is to output an approximation of this matrix product. We write $n = n_0$ and $m = n_k$, so that the resulting matrix product is of dimension $n \times m$. We assume that full descriptions of the matrices are available to us in memory.

\subsection{Stochastic matrix product decomposition}

We start by showing that we can assume all matrices have absolute row-sum $1$, without loss of generality. To that end, we show that we can classically perform a stochastic matrix product decomposition as a preprocessing step, using time and memory that is similar to the size of the matrices themselves.

We start by formally introducing the decomposition.

\begin{theorem}[Stochastic matrix product decomposition]
    \label{thm:matrix-product-decomposition}
    Let $k \in \mathbb{N}$, $n_j \in \mathbb{N}$ for all $\ell \in [k]_0$, and let $\mat{A}_\ell \in \Rbb^{n_{\ell-1} \times n_\ell}$ for all $\ell \in [k]$. We recursively define diagonal matrices $\mat{D}_0, \dots, \mat{D}_k$, such that
    \[
    \mat{D}_k = I\,,
    \qquad \text{and} \qquad \forall \ell \in [k], i \in [n_{\ell-1}]\,, \quad 
    [\mat{D}_{\ell-1}]_{ii} = \sum_{j=1}^{n_{\ell}} [\ols{\mat{A}}_{\ell}]_{ij} \cdot [\mat{D}_{\ell}]_{jj}
    = \|[\ols{\mat{A}}_\ell \mat{D}_\ell]_{i\bullet}\|_1\,.
    \]
    
    Next, for all $\ell \in [k]$, we let $\mat{A}_{\ell}' = \mat{D}_{\ell-1}^{-1}\mat{A}_{\ell}\mat{D}_{\ell}$. Then, $\mat{A}_1 \cdots \mat{A}_k = \mat{D}_0\mat{A}_1' \cdots \mat{A}_k'$, all $(\mat{A}_j')$'s have all row-sums equal to $1$, and for all $\ell \in [k]$ and $i \in [n_{\ell-1}]$,
    \[
    [\mat{D}_{\ell-1}]_{ii} = \sum_{j=1}^{n_{\ell}} [\ols{\mat{A}_{\ell}} \cdots \ols{\mat{A}_k}]_{ij}
    = [\ols{\mat{A}_{\ell}} \cdots \ols{\mat{A}_k} \mathbf{1}_{n_k}]_i\,.
    \]
    Moreover, we can compute all $\mat{D}_j$'s and $(\mat{A}_j')$'s classically in time linear in the input size.
\end{theorem}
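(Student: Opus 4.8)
The plan is to check the claimed properties one at a time; apart from a degeneracy issue around the matrix inverses, everything reduces to unwinding the recursive definition of the $\mat{D}_\ell$. I would begin with the product identity. Substituting $\mat{A}_\ell' = \mat{D}_{\ell-1}^{-1}\mat{A}_\ell\mat{D}_\ell$ gives
\[\mat{D}_0\mat{A}_1'\cdots\mat{A}_k' = \mat{D}_0(\mat{D}_0^{-1}\mat{A}_1\mat{D}_1)(\mat{D}_1^{-1}\mat{A}_2\mat{D}_2)\cdots(\mat{D}_{k-1}^{-1}\mat{A}_k\mat{D}_k) = \mat{A}_1\cdots\mat{A}_k\mat{D}_k,\]
where every interior pair $\mat{D}_\ell\mat{D}_\ell^{-1}$ telescopes away, and $\mat{D}_k = \mat{I}$ finishes it. This is the place where invertibility of the $\mat{D}_\ell$ is used, and I will come back to it.

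Next I would establish the closed form $[\mat{D}_{\ell-1}]_{ii} = [\ols{\mat{A}_\ell}\cdots\ols{\mat{A}_k}\mathbf{1}_{n_k}]_i$ by downward induction on $\ell$: the base case $\ell = k$ is immediate from $\mat{D}_k = \mat{I}$, and for the step one plugs the hypothesis $[\mat{D}_\ell]_{jj} = [\ols{\mat{A}_{\ell+1}}\cdots\ols{\mat{A}_k}\mathbf{1}_{n_k}]_j$ into $[\mat{D}_{\ell-1}]_{ii} = \sum_j [\ols{\mat{A}}_\ell]_{ij}[\mat{D}_\ell]_{jj}$ and reads the sum over $j$ as a matrix--vector product. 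The other expressions in the statement are then just rewritings: $\sum_j [\ols{\mat{A}_\ell}\cdots\ols{\mat{A}_k}]_{ij} = [\ols{\mat{A}_\ell}\cdots\ols{\mat{A}_k}\mathbf{1}_{n_k}]_i$ by definition, and $\|[\ols{\mat{A}}_\ell\mat{D}_\ell]_{i\bullet}\|_1 = \sum_j [\ols{\mat{A}}_\ell]_{ij}[\mat{D}_\ell]_{jj}$ since $\mat{D}_\ell$ is diagonal with nonnegative entries. Stochasticity is equally short: nonnegativity of the $\mat{D}_\ell$ means $[\mat{A}_\ell']_{ij} = [\mat{D}_{\ell-1}]_{ii}^{-1}[\mat{A}_\ell]_{ij}[\mat{D}_\ell]_{jj}$ has the same sign as $[\mat{A}_\ell]_{ij}$, so every row satisfies $\|[\mat{A}_\ell']_{i\bullet}\|_1 = [\mat{D}_{\ell-1}]_{ii}^{-1}\sum_j [\ols{\mat{A}}_\ell]_{ij}[\mat{D}_\ell]_{jj} = [\mat{D}_{\ell-1}]_{ii}^{-1}[\mat{D}_{\ell-1}]_{ii} = 1$ by the definition of $[\mat{D}_{\ell-1}]_{ii}$, i.e.\ $\ols{\mat{A}}_\ell'$ is stochastic.

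The delicate point — the part where I expect the real care to be needed — is that $\mat{D}_{\ell-1}^{-1}$ only exists when every $[\mat{D}_{\ell-1}]_{ii} > 0$, and by the closed form $[\mat{D}_{\ell-1}]_{ii} = 0$ precisely when row $i$ of $\ols{\mat{A}_\ell}\cdots\ols{\mat{A}_k}$, hence of $\mat{A}_\ell\cdots\mat{A}_k$, is identically zero. I would deal with this by replacing $\mat{D}_{\ell-1}^{-1}$ with the diagonal pseudoinverse (reciprocals on the nonzero entries, $0$ elsewhere) and defining the corresponding zero rows of $\mat{A}_\ell'$ to be an arbitrary probability row vector, e.g.\ $\vec{e}_1^T$, which keeps $\ols{\mat{A}}_\ell'$ stochastic by fiat. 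One then verifies that inserting the resulting diagonal projectors $\mat{D}_\ell\mat{D}_\ell^{+}$ between the $\mat{A}_\ell$ changes nothing: a vanishing $[\mat{D}_\ell]_{jj}$ forces row $j$ of $\mat{A}_{\ell+1}\cdots\mat{A}_k$ to vanish, so these projectors act as the identity on the relevant products and the added rows of $\mat{A}_\ell'$ are annihilated on the left by the zero entries of $\mat{D}_{\ell-1}$; thus the telescoped identity $\mat{A}_1\cdots\mat{A}_k = \mat{D}_0\mat{A}_1'\cdots\mat{A}_k'$ survives unchanged. This is routine bookkeeping but is the one spot that is easy to get wrong.

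Finally, for the complexity claim I would run the recursion from $\ell = k$ down to $\ell = 1$: computing $\mat{D}_{\ell-1}$ from $\mat{D}_\ell$ is a single weighted row-sum pass over $\mat{A}_\ell$, i.e.\ $O(n_{\ell-1}n_\ell)$ arithmetic operations; inverting a diagonal matrix is $O(n_{\ell-1})$; and forming $\mat{A}_\ell' = \mat{D}_{\ell-1}^{-1}\mat{A}_\ell\mat{D}_\ell$ is another entrywise pass of the same cost. Summing over $\ell$ gives $O\big(\sum_{\ell=1}^k n_{\ell-1}n_\ell\big)$ time and space, which is linear in the input size, completing the proof.
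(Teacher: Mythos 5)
Your proof is correct and follows essentially the same route as the paper's: telescoping the definition $\mat{A}_\ell' = \mat{D}_{\ell-1}^{-1}\mat{A}_\ell\mat{D}_\ell$ for the product identity, downward induction for the closed form of $[\mat{D}_{\ell-1}]_{ii}$, the same one-line absolute-row-sum computation for stochasticity of $\ols{\mat{A}}_\ell'$, and a per-layer pass for the linear-time claim. The only difference is that you also patch the degenerate case $[\mat{D}_{\ell-1}]_{ii}=0$ (diagonal pseudoinverse plus arbitrary stochastic rows, checking the zero rows are annihilated on the left), which the paper's proof silently assumes away by writing $\mat{D}_{\ell-1}^{-1}$; your fix is correct and makes the argument strictly more careful.
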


\begin{proof}
    It is clear by plugging in the definitions of $\mat{A}_\ell'$ that for all $\ell \in [k]$,
    \[\mat{D}_{\ell-1}\mat{A}_\ell' \cdots \mat{A}_k' = \mat{A}_\ell \cdots \mat{A}_k\mat{D}_k = \mat{A}_{\ell} \cdots \mat{A}_k.\]
    Moreover, for every $\ell \in [k]$ and $i \in [n_{\ell-1}]$, we have
    \[\sum_{j=1}^{n_{\ell}} |[\mat{A}_{\ell}']_{ij}| = \frac{1}{[\mat{D}_{\ell-1}]_{ii}} \cdot \sum_{j=1}^{n_{\ell}} |[\mat{A}_{\ell}]_{ij}| \cdot [\mat{D}_{\ell}]_{jj} = 1.\]
    The characterization of $[\mat{D}_{\ell-1}]_{ii}$ follows by induction. Finally, it is clear from the recursive definition how we compute the $\mat{D}_{\ell}$'s and $(\mat{A}_{\ell}')$'s in linear time.
\end{proof}

Using this decomposition, we can also compute some element-wise norms of the matrix product efficiently. Specifically, for $p = 1$ and $q \in [1,\infty]$, we show that the element-wise matrix norm can be computed in time linear in the input size.

\begin{lemma}
    \label{lem:element-wise-norm-characterization}
    Let $k \in \mathbb{N}$, $n_j \in \mathbb{N}$ for all $\ell \in [k]_0$, and let $\mat{A}_\ell \in \Rbb^{n_{\ell-1} \times n_\ell}$ for all $\ell \in [k]$. Let $q \in [1,\infty]$. Let $\mat{D}_0$ be as in \cref{thm:matrix-product-decomposition}, and let $\vec{d}_0$ be the vector containing the diagonal of $\mat{D}_0$. Then,
    \[\norm{\ols{\mat{A}_1} \cdots \ols{\mat{A}_k}}_{1,q} = \norm{\vec{d}_0}_q,\]
    and as such it can be computed in time linear in the total size of the input.
\end{lemma}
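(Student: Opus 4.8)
The plan is to read off $\vec{d}_0$ directly from \cref{thm:matrix-product-decomposition} and then unfold the definition of the $\norm{\cdot}_{1,q}$-norm. Concretely, \cref{thm:matrix-product-decomposition} (applied with $\ell = 1$) gives that for every $i \in [n_0]$,
\[
[\vec{d}_0]_i = [\mat{D}_0]_{ii} = \sum_{j=1}^{n_k}[\ols{\mat{A}_1}\cdots\ols{\mat{A}_k}]_{ij} = [\ols{\mat{A}_1}\cdots\ols{\mat{A}_k}\,\mathbf{1}_{n_k}]_i.
\]
The key observation is that each $\ols{\mat{A}_\ell}$ is entry-wise non-negative, hence so is the product $\ols{\mat{A}_1}\cdots\ols{\mat{A}_k}$; therefore the row-sum on the right-hand side equals $\sum_{j}\big|[\ols{\mat{A}_1}\cdots\ols{\mat{A}_k}]_{ij}\big| = \big\|[\ols{\mat{A}_1}\cdots\ols{\mat{A}_k}]_{i\bullet}\big\|_1$, i.e.\ the $\ell_1$-norm of the $i$-th row. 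This is exactly the ``inner'' norm that appears in $\norm{\cdot}_{1,q}$ when $p=1$.

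Plugging this into the definition of the element-wise matrix norm with $p = 1$ then yields, for $q \in [1,\infty)$,
\[
\norm{\ols{\mat{A}_1}\cdots\ols{\mat{A}_k}}_{1,q}
= \left(\sum_{i=1}^{n_0}\Big(\sum_{j=1}^{n_k}\big|[\ols{\mat{A}_1}\cdots\ols{\mat{A}_k}]_{ij}\big|\Big)^{q}\right)^{1/q}
= \left(\sum_{i=1}^{n_0}[\vec{d}_0]_i^{\,q}\right)^{1/q}
= \norm{\vec{d}_0}_q,
\]
and the case $q = \infty$ follows identically with $\max_i$ in place of the $q$-sum. Finally, for the complexity claim: \cref{thm:matrix-product-decomposition} already states that $\mat{D}_0$ (hence $\vec{d}_0$) can be computed classically in time linear in the input size, and computing $\norm{\vec{d}_0}_q$ from $\vec{d}_0$ takes a further $O(n_0)$ time, which is within the input size. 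This completes the proof.

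\textbf{Main obstacle.} There is essentially no hard step here; the only thing to be careful about is the interplay between $p=1$ and non-negativity — one must note that $\ols{\mat{A}_1}\cdots\ols{\mat{A}_k}$ has non-negative entries so that the row-sum identity from \cref{thm:matrix-product-decomposition} coincides with the absolute-value row-sum that the norm definition demands. The statement is otherwise an immediate corollary of \cref{thm:matrix-product-decomposition}, and the linear-time bound is inherited verbatim from that theorem.
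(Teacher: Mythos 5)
Your proof is correct and follows exactly the paper's route: the paper also derives the identity directly from the characterization of $[\mat{D}_0]_{ii}$ in \cref{thm:matrix-product-decomposition} (its proof is literally that one sentence), and you have simply written out the row-sum/non-negativity bookkeeping in full. Nothing to change.
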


\begin{proof}
    The result follows directly from the characterization of $(\mat{D}_0)_{ii}$ in \cref{thm:matrix-product-decomposition}.
\end{proof}

We now state the crucial observation that motivates the stochastic matrix product decomposition.
It is a variant on path-integral Monte Carlo, and it mimics the approach taken by Cohen and Lewis~\cite{cohen1999approximating}, but differs in one crucial aspect: Cohen and Lewis estimate every row of the product matrix in a separate subroutine. Here, we sample over the rows as well, leading to a random variable with expectation exactly equal to the matrix product we are after. This is the core of the improvement we present in this work.

\begin{lemma}
    \label{lem:CL-expectation}
    Let $k \in \mathbb{N}$, $n_j \in \mathbb{N}$ for all $\ell \in [k]_0$, and let $\mat{A}_\ell \in \Rbb^{n_{\ell-1} \times n_\ell}$ for all $\ell \in [k]$. Let $\mat{D}_0, \mat{A}_1', \dots, \mat{A}_k'$ be the stochastic matrix product decomposition from \cref{thm:matrix-product-decomposition}. Let $\vec{d}_0$ be the vector containing the diagonal entries of $\mat{D}_0$. Then, we define the probability distribution $p$ by sampling $j_0 \sim \vec{q}$, for a probability distribution $\vec{q}$, and then $j_{\ell} \sim (\ols{\mat{A}_{\ell}'})_{j_{\ell-1},\bullet}$, for all $\ell \in [k]$. Finally, we define the random variable
    \[\mat{X}(j_0, \dots, j_k) = \frac{d_{j_0}}{q_{j_0}} \cdot \left(\prod_{\ell=1}^k \mathrm{sign}([\mat{A}_{\ell}']_{j_{\ell-1},j_\ell})\right) \vec{e}_{j_0} \vec{e}_{j_k}^T,\]
    where $\mathrm{sign}(x)$ is $1$ if $x \geq 0$, and $-1$ otherwise. Then,
    \[\mat{A}_1 \cdots \mat{A}_k = \underset{(j_0, \dots, j_k) \sim p}{\E} \left[\mat{X}\right], \qquad \text{and} \qquad \underset{(j_0, \dots, j_k) \sim p}{\Var}[\mat{X}_{ij}] \leq \frac{(\vec{d}_0)_i}{\vec{q}_i} \cdot (\mat{\ols{A}}_1 \cdots \mat{\ols{A}}_k)_{ij}.\]
    Hence, if we take $\vec{q} = \vec{d}_0/\norm{\vec{d}_0}_1$, then we obtain 
    \[\max_{i \in [n_0], j \in [n_k]} \Var[\mat{X}_{ij}] \leq \|\mat{\ols{A}}_1 \cdots \mat{\ols{A}}_k\|_{\infty,\infty} \cdot \|\mat{\ols{A}}_1 \cdots \mat{\ols{A}}_k\|_{1,1}, \qquad \text{and} \qquad  \Tr[\mat{\Sigma}] \leq \|\mat{\ols{A}}_1 \cdots \mat{\ols{A}}_k\|_{1,1}^2,\]
    where $\mat{\Sigma} = \Cov(\mat{X})$. If we instead take ${q}_i \propto \|(\mat{\ols{A}}_1 \cdots \mat{\ols{A}}_k)_{i,\bullet}\|_1 \cdot \|(\mat{\ols{A}}_1 \cdots \mat{\ols{A}}_k)_{i,\bullet}\|_{\infty}$, then
    \[\max_{i \in [n_0], j \in [n_k]} \Var[\mat{X}_{ij}] \leq \sum_{i=1}^{n_0} \|(\mat{\ols{A}}_1 \cdots \mat{\ols{A}}_k)_{i,\bullet}\|_1 \cdot \|(\mat{\ols{A}}_1 \cdots \mat{\ols{A}}_k)_{i,\bullet}\|_{\infty},\]
    and if we take $\vec{q} = \vec{d}_0^2 / \norm{\vec{d}_0}_2^2$, then we obtain
    \[\max_{i \in [n_0], j \in [n_k]} \Var[\mat{X}_{ij}] \leq \norm{\mat{\ols{A}}_1 \cdots \mat{\ols{A}}_k}_{1,2}^2.\]
\end{lemma}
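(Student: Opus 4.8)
The plan is to reduce the whole statement to a short computation resting on two structural consequences of \cref{thm:matrix-product-decomposition}. Write $\mat{P} := \ols{\mat{A}_1}\cdots\ols{\mat{A}_k}$ for brevity. Since $\mat{A}'_\ell = \mat{D}_{\ell-1}^{-1}\mat{A}_\ell\mat{D}_\ell$ with the $\mat{D}_\ell$ diagonal and nonnegative, we have $\ols{\mat{A}'_\ell} = \mat{D}_{\ell-1}^{-1}\ols{\mat{A}_\ell}\mat{D}_\ell$, so the product telescopes: $\ols{\mat{A}'_1}\cdots\ols{\mat{A}'_k} = \mat{D}_0^{-1}\mat{P}$ (using $\mat{D}_k = I$). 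Second, by \cref{thm:matrix-product-decomposition} we have $(\vec{d}_0)_i = [\mat{P}\,\mathbf{1}_{n_k}]_i = \norm{[\mat{P}]_{i\bullet}}_1$. A zero diagonal entry of some $\mat{D}_\ell$ forces the corresponding row/column of $\mat{A}'_\ell$ to vanish, so such indices are never sampled and may simply be discarded; I would remark on this once and otherwise treat the $\mat{D}_\ell$ as invertible.

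First I would verify the expectation. The sampled path $(j_0,\dots,j_k)$ has probability $q_{j_0}\prod_{\ell=1}^k[\ols{\mat{A}'_\ell}]_{j_{\ell-1}j_\ell}$, since by \cref{thm:matrix-product-decomposition} each row of $\ols{\mat{A}'_\ell}$ is a genuine probability distribution. Inside $\E[\mat{X}]$ the weight $q_{j_0}$ cancels against the factor $d_{j_0}/q_{j_0}$ in $\mat{X}$, and $[\ols{\mat{A}'_\ell}]_{j_{\ell-1}j_\ell}\cdot\mathrm{sign}([\mat{A}'_\ell]_{j_{\ell-1}j_\ell}) = [\mat{A}'_\ell]_{j_{\ell-1}j_\ell}$, so $\E[\mat{X}] = \sum_{j_0,\dots,j_k} d_{j_0}\bigl(\prod_{\ell}[\mat{A}'_\ell]_{j_{\ell-1}j_\ell}\bigr)\vec{e}_{j_0}\vec{e}_{j_k}^T = \mat{D}_0\mat{A}'_1\cdots\mat{A}'_k = \mat{A}_1\cdots\mat{A}_k$, the last step again by \cref{thm:matrix-product-decomposition}.

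For the variance I would bound $\Var[\mat{X}_{ij}] \le \E[\mat{X}_{ij}^2]$. The sign factors square to $1$ and $\vec{e}_{j_0}\vec{e}_{j_k}^T$ has a single nonzero entry, so $\mat{X}_{ij}^2 = (d_i/q_i)^2$ on the event $\{j_0=i,\ j_k=j\}$ and $0$ otherwise; summing out $j_1,\dots,j_{k-1}$ gives $\E[\mat{X}_{ij}^2] = (d_i^2/q_i)\,[\ols{\mat{A}'_1}\cdots\ols{\mat{A}'_k}]_{ij} = (d_i^2/q_i)\, d_i^{-1}\,[\mat{P}]_{ij} = \tfrac{(\vec{d}_0)_i}{q_i}[\mat{P}]_{ij}$ by the telescoping identity. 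This is precisely the general bound claimed.

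Finally, all four specialized bounds follow by substituting the prescribed $\vec{q}$ and using $(\vec{d}_0)_i = \norm{[\mat{P}]_{i\bullet}}_1$, the elementary inequalities $[\mat{P}]_{ij}\le\norm{[\mat{P}]_{i\bullet}}_\infty\le\norm{[\mat{P}]_{i\bullet}}_1$, and $\norm{\vec{d}_0}_q = \norm{\mat{P}}_{1,q}$ (\cref{lem:element-wise-norm-characterization}). For $\vec{q} = \vec{d}_0/\norm{\vec{d}_0}_1$ one gets $(\vec{d}_0)_i/q_i = \norm{\vec{d}_0}_1 = \norm{\mat{P}}_{1,1}$, hence $\max_{ij}\Var[\mat{X}_{ij}]\le\norm{\mat{P}}_{1,1}\norm{\mat{P}}_{\infty,\infty}$ and $\Tr[\mat{\Sigma}] = \sum_{ij}\Var[\mat{X}_{ij}]\le\norm{\mat{P}}_{1,1}\sum_{ij}[\mat{P}]_{ij} = \norm{\mat{P}}_{1,1}^2$; the choices $q_i\propto\norm{[\mat{P}]_{i\bullet}}_1\norm{[\mat{P}]_{i\bullet}}_\infty$ and $\vec{q} = \vec{d}_0^2/\norm{\vec{d}_0}_2^2$ are handled identically, each tuned so that the factor $1/q_i$ exactly cancels the row-norm produced by the entrywise bound. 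I do not expect a genuine obstacle here; the only points requiring care are the degenerate zero-diagonal indices noted above and keeping the $\ell_1$ versus $\ell_\infty$ row-norms straight across the four cases.
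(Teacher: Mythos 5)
Your proposal is correct and follows essentially the same route as the paper's proof: compute $\E[\mat{X}]$ by summing over paths so that the importance weight $d_{j_0}/q_{j_0}$ cancels the sampling weight, bound $\Var[\mat{X}_{ij}]$ by the second moment $(d_i/q_i)^2\,\Pbb[j_0=i\wedge j_k=j]$, and specialize $\vec q$; your explicit telescoping identity $\ols{\mat{A}'_1}\cdots\ols{\mat{A}'_k}=\mat{D}_0^{-1}\ols{\mat{A}_1}\cdots\ols{\mat{A}_k}$ and the remark on zero diagonal entries are details the paper leaves implicit.
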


\begin{proof}
Unwrapping the expectation yields
    \begin{align*}
        &\underset{(j_0, \dots, j_k) \sim p}{\E}[\mat{X}(j_0, \dots, j_k)] \\
        &\quad = \frac{(\vec{d}_0)_{j_0}}{\vec{q}_{j_0}} \cdot \sum_{j_0 = 1}^{n_0} \vec{q}_{j_0} \cdot \sum_{j_1 = 1}^{n_1} |\mat{A}_1'|_{j_0,j_1} \mathrm{sign}((\mat{A}_1')_{j_0,j_1}) \cdots \sum_{j_k=1}^{n_k} |\mat{A}_k'|_{j_{k-1},j_k}\mathrm{sign}((\mat{A}_k')_{j_{k-1},j_k}) \vec{e}_{j_0}\vec{e}_{j_k}^T = \\
        &\quad = \sum_{j_0 = 1}^{n_0} \vec{e}_{j_0} (\vec{d}_0)_{j_0}  \cdot \sum_{j_1 = 1}^{n_1} (\mat{A}_1')_{j_0,j_1} \cdots \sum_{j_k=1}^{n_k} (\mat{A}_k')_{j_{k-1},j_k}\vec{e}_{j_k}^T \\
        &\quad = \sum_{j_0 = 1}^{n_0} \vec{e}_{j_0} \vec{e}_{j_0}^T \mat{D}_0 \vec{e}_{j_0}  \cdot \sum_{j_1 = 1}^{n_1} \vec{e}_{j_0}^T \mat{A}_1' \vec{e}_{j_1} \cdots \sum_{j_k=1}^{n_k} \vec{e}_{j_{k-1}}^T\mat{A}_k'\vec{e}_{j_k}\vec{e}_{j_k}^T \\
        &\quad = \sum_{j_0' = 1}^{n_0} \vec{e}_{j_0'} \vec{e}_{j_0'}^T \mat{D}_0 \sum_{j_0 = 1}^{n_0} \vec{e}_{j_0}\vec{e}_{j_0}^T \mat{A}_1' \sum_{j_1 = 1}^{n_1}\vec{e}_{j_1}\vec{e}_{j_1}^T \cdots \mat{A}_k' \sum_{j_k=1}^{n_k}\vec{e}_{j_k}\vec{e}_{j_k}^T = \mat{D}_0\mat{A}_1' \cdots \mat{A}_k' = \mat{A}_1 \cdots \mat{A}_k.
    \end{align*}
    Next, for the variance of the $(i,j)$-th entry of $\mat{X}$, we observe that
    \begin{align*}
        \underset{(j_0, \dots, j_k) \sim p}{\Var}[\mat{X}(j_0, \dots, j_k)_{ij}] &\leq \underset{(j_0, \dots, j_k) \sim p}{\E}[\mat{X}(j_0, \dots, j_k)_{ij}^2] = \left(\frac{(\vec{d}_0)_i}{\vec{q}_i}\right)^2 \cdot \underset{(j_0, \dots, j_k) \sim p}{\Pbb} \left[j_0 = i \land j_k = j\right] \\
        &= \frac{(\vec{d}_0)_i}{\vec{q}_i} \cdot (\mat{\ols{A}}_1 \cdots \mat{\ols{A}}_k)_{ij}.
    \end{align*}
    The final expressions follow directly.
\end{proof}

From the above analysis, we observe that when we choose $\vec{q} \propto \vec{d}_0$, we minimize the trace of the covariance matrix, which means that this choice yields the best algorithm for the Frobenius-norm problem. On the other hand, by choosing $\vec{q} \propto \vec{d}_0^2$, we recover the complexity from \cite{cohen1999approximating} for the $\max$-norm problem. Both choices for $\vec{q}$ yield incomparable complexities for the $\max$-norm problem.

It is tempting to simply choose the optimal $\vec{q}$ for the $\max$-norm problem, which is when $q_i \propto \norm{(\mat{\ols{A}}_1 \cdots \mat{\ols{A}}_k)_{i,\bullet}}_{\infty} \cdot \norm{(\mat{\ols{A}}_1 \cdots \mat{\ols{A}}_k)_{i,\bullet}}_1$. However, the problem is that it is not clear how to compute this choice for $\vec{q}$ efficiently in linear time as a preprocessing step. Hence, we are left with an algorithm with incomparable complexity to the max-norm algorithm given by Cohen and Lewis' approach, for instance if we take $\vec{q} = \vec{d}_0/\norm{\vec{d}_0}_1$ (see next section for further discussion). If we additionally have good upper bounds on $\norm{(\mat{\ols{A}}_1 \cdots \mat{\ols{A}}_k)_{i,\bullet}}_{\infty}$ a priori, for all $i \in [n]$, then we could indeed  further improve the algorithm.

\subsection{Classical algorithm}

The idea for the classical algorithm is to compute the expectation from \cref{lem:CL-expectation} using Monte Carlo sampling. It remains to figure out the number of samples required, and the cost to perform sampling, which we analyze in the following theorem.

\begin{theorem}[Classical random walk algorithm]\label{thm:improved-cohen-lewis}
    Let $k \in \mathbb{N}$, $n_j \in \mathbb{N}$ for all $\ell \in [k]_0$, and let $\mat{A}_\ell \in \Rbb^{n_{\ell-1} \times n_\ell}$ for all $\ell \in [k]$. Let $\varepsilon > 0$ and $\delta \in (0,1)$, and let $n_0 = n$ and $n_k = m$. Then, with classical preprocessing time linear in the size of the input up to polylogarithmic factors, we give a classical algorithm in the RAM model with run-time
    \[\widetilde{O}\left(k \cdot \frac{\norm{\mat{\ols{A}}_1 \cdots \mat{\ols{A}}_k}_{\infty,\infty} \cdot \norm{\mat{\ols{A}}_1 \cdots \mat{\ols{A}}_k}_{1,1}}{\varepsilon^2} \cdot \log\left(\frac{d}{\delta}\right)\right) \quad \text{and} \quad \widetilde{O}\left(k \cdot \frac{\norm{\ols{\mat{A}_1} \cdots \ols{\mat{A}_k}}_{1,1}^2}{\varepsilon^2} \cdot \log\left(\frac{1}{\delta}\right)\right),\]
    that computes an approximation $\widetilde{\mat{A}}$ of $\mat{A}_1 \cdots \mat{A}_k$ with probability at least $(1-\delta)$, satisfying the precision bounds $\norm{\widetilde{\mat{A}} - \mat{A}_1 \cdots \mat{A}_k}_{\max} \leq \varepsilon$ and $\norm{\widetilde{\mat{A}} - \mat{A}_1 \cdots \mat{A}_k}_F \leq \varepsilon$, respectively.
\end{theorem}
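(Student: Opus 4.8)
The plan is to combine \cref{lem:CL-expectation} with the classical mean-estimation guarantee from \cref{lem:vector-bernstein-inequality}, treating the matrix entries as a $d$-dimensional random vector with $d = nm$. First I would set up the sampling procedure: by \cref{thm:matrix-product-decomposition} we compute the stochastic matrix product decomposition $\mat{A}_1 \cdots \mat{A}_k = \mat{D}_0 \mat{A}_1' \cdots \mat{A}_k'$ and the diagonal $\vec{d}_0$ in time linear in the input size; by \cref{lem:element-wise-norm-characterization} we also obtain $\norm{\ols{\mat{A}_1}\cdots\ols{\mat{A}_k}}_{1,1} = \norm{\vec{d}_0}_1$ and $\norm{\ols{\mat{A}_1}\cdots\ols{\mat{A}_k}}_{1,2} = \norm{\vec{d}_0}_2$ for free. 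For the Frobenius-norm algorithm we take $\vec{q} = \vec{d}_0/\norm{\vec{d}_0}_1$ and for the max-norm algorithm we take $\vec{q} = \vec{d}_0^2/\norm{\vec{d}_0}_2^2$; both are probability distributions that can be written to memory in linear time, so by the alias method (\cref{thm:classical-sampling-from-distribution}) we can draw $j_0 \sim \vec{q}$ in $O(1)$ time per sample after $O(n)$ preprocessing. To draw a full path we preprocess, for each $\ell$ and each row index, the row distribution $(\ols{\mat{A}_\ell'})_{j_{\ell-1},\bullet}$ for the alias method; this costs $O(\sum_\ell n_{\ell-1} n_\ell)$, i.e. linear in the input, and then each of the $k$ successive steps costs $O(1)$, so one sample of the path $(j_0,\dots,j_k)$ — and hence one sample of $\mat{X}$, which is just $\pm d_{j_0}/q_{j_0}$ times a single basis dyad $\vec{e}_{j_0}\vec{e}_{j_k}^T$ — costs $O(k)$ time and $O(1)$ space (we only store the sign product, $j_0$, and $j_k$).

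Next I would count the samples. Set $L$ equal to the claimed per-problem sample count: for the Frobenius problem, $L = \Theta\!\big(\norm{\vec{d}_0}_1^2 \log(1/\delta)/\varepsilon^2\big) = \Theta\!\big(\norm{\ols{\mat{A}_1}\cdots\ols{\mat{A}_k}}_{1,1}^2\log(1/\delta)/\varepsilon^2\big)$, and for the max-norm problem $L = \Theta\!\big(\norm{\ols{\mat{A}_1}\cdots\ols{\mat{A}_k}}_{\infty,\infty}\cdot\norm{\ols{\mat{A}_1}\cdots\ols{\mat{A}_k}}_{1,1}\log(d/\delta)/\varepsilon^2\big)$. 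We draw $L$ i.i.d. copies of $\mat{X}$ and form the empirical mean $\widetilde{\mat{A}}$, which by \cref{lem:CL-expectation} has expectation exactly $\mat{A}_1\cdots\mat{A}_k$. For the Frobenius case, \cref{lem:CL-expectation} with $\vec{q}\propto\vec{d}_0$ gives $\Tr[\mat{\Sigma}] \le \norm{\ols{\mat{A}_1}\cdots\ols{\mat{A}_k}}_{1,1}^2$, so the vector Bernstein bound of \cref{lem:vector-bernstein-inequality} yields $\norm{\widetilde{\mat{A}} - \mat{A}_1\cdots\mat{A}_k}_F = O\!\big(\sqrt{\Tr[\mat{\Sigma}]\log(1/\delta)/L}\big) \le \varepsilon$ for the chosen $L$, with probability $1-\delta$. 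For the max-norm case, \cref{lem:CL-expectation} with $\vec{q}\propto\vec{d}_0^2$ gives $\max_{ij}[\mat{\Sigma}]_{ii} \le \norm{\ols{\mat{A}_1}\cdots\ols{\mat{A}_k}}_{\infty,\infty}\cdot\norm{\ols{\mat{A}_1}\cdots\ols{\mat{A}_k}}_{1,1}$ (note: this bound holds for $\vec{q}\propto\vec{d}_0$; for $\vec{q}\propto\vec{d}_0^2$ one gets $\norm{\cdot}_{1,2}^2$ — I would pick whichever choice of $\vec{q}$ matches the stated bound, namely $\vec{q}=\vec{d}_0/\norm{\vec{d}_0}_1$, which gives exactly $\max_{ij}[\mat{\Sigma}]_{ii} \le \norm{\ols{\mat{A}_1}\cdots\ols{\mat{A}_k}}_{\infty,\infty}\cdot\norm{\ols{\mat{A}_1}\cdots\ols{\mat{A}_k}}_{1,1}$), and the $\ell_\infty$ Bernstein bound gives $\norm{\widetilde{\mat{A}} - \mat{A}_1\cdots\mat{A}_k}_{\max} = O\!\big(\sqrt{\max_{ij}[\mat{\Sigma}]_{ii}\log(d/\delta)/L}\big) \le \varepsilon$. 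Multiplying $L$ by the per-sample cost $O(k)$, adding the linear preprocessing, and absorbing the alias-method and arithmetic overhead into $\widetilde{O}(\cdot)$ gives the two stated runtimes.

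The one genuine subtlety — and the step I expect to need the most care — is the bookkeeping of the estimator and the runtime: the empirical mean $\widetilde{\mat{A}}$ is formally an $n\times m$ matrix, so naively writing it down costs $\Omega(nm)$ regardless of $L$. This is fine because the stated complexities implicitly dominate $nm$ in the regimes of interest (and the theorem already charges $O(n^2)$-type preprocessing), but to be clean I would note that $\widetilde{\mat{A}}$ is a sum of $L$ rank-one sign-weighted dyads and can be accumulated into an $n\times m$ array in $O(nm + Lk)$ total time; alternatively, since each $\mat{X}$ touches a single entry, we can maintain a sparse accumulator. A second point worth a sentence: \cref{lem:vector-bernstein-inequality} as stated requires the random variables to be bounded (or sub-exponential) for Bernstein to apply, and here $\norm{\mat{X}}_\infty = d_{j_0}/q_{j_0}$, which for $\vec{q} = \vec{d}_0/\norm{\vec{d}_0}_1$ equals $\norm{\vec{d}_0}_1$ deterministically, and for $\vec{q}=\vec{d}_0^2/\norm{\vec{d}_0}_2^2$ equals $\norm{\vec{d}_0}_2^2/(\vec{d}_0)_{j_0}$, which is bounded by $\norm{\vec{d}_0}_2^2/\min_i(\vec{d}_0)_i$ — so in the $\vec{q}\propto\vec{d}_0$ case the boundedness is immediate and clean, which is another reason to prefer that choice; one checks that the range term in Bernstein is dominated by the variance term for the chosen $L$ up to the logarithmic factors already displayed. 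Everything else is routine: plug the second-moment bounds from \cref{lem:CL-expectation} into \cref{lem:vector-bernstein-inequality}, solve for $L$, and multiply by the $O(k)$ sampling cost.
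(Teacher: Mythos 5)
Your proposal is correct and follows essentially the same route as the paper: combine the estimator and second-moment bounds of \cref{lem:CL-expectation} (with the choice $\vec{q}=\vec{d}_0/\norm{\vec{d}_0}_1$, which indeed yields both the $\norm{\cdot}_{\infty,\infty}\cdot\norm{\cdot}_{1,1}$ variance bound and the $\norm{\cdot}_{1,1}^2$ trace bound) with the vector/matrix Bernstein inequality of \cref{lem:vector-bernstein-inequality}, and use \cref{thm:matrix-product-decomposition} plus the alias method for linear-time preprocessing and $O(k)$ cost per sample. Your initial suggestion of $\vec{q}\propto\vec{d}_0^2$ for the max-norm case is correctly retracted mid-argument, and your added remarks on the $O(nm)$ output-accumulation cost and the boundedness $\norm{\mat{X}}_\infty=\norm{\vec{d}_0}_1$ needed for Bernstein are sound details that the paper's proof leaves implicit.
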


\begin{proof}
    We take the average of $N$ realizations of the random variable from \cref{lem:CL-expectation}. Using the bound on the trace of the covariance matrix, and the vector-Bernstein inequality~\cref{lem:vector-bernstein-inequality}, we obtain that with probability at least $(1-\delta)$,
    \[\norm{\widetilde{\mat{A}} - \mat{A}_1 \cdots \mat{A}_k}_F = \mathcal{O}\left(\norm{\ols{\mat{A}_1} \cdots \ols{\mat{A}_k}}_{1,1} \cdot \sqrt{\frac{\log(1/\delta)}{N}}\right),\]
    from which we deduce that we can take $N = \Theta(\norm{\ols{\mat{A}_1} \cdots \ols{\mat{A}_k}}_{1,1}^2 / \varepsilon^2 \cdot \log(1/\delta))$ to ensure that the Frobenius error is at most $\varepsilon$. A similar analysis yields the corresponding complexity for the $\max$-norm case.
    
    It remains to prove that we can generate each sample in time $O(k)$. To that end, observe from \cref{thm:matrix-product-decomposition} that we can compute the stochastic matrix product decomposition in time linear in the size of the input, and from \cref{lem:element-wise-norm-characterization} that we can compute the choice for $N$ in time linear in the size of the input as well. Next, we observe that we can use \cref{thm:classical-sampling-from-distribution} to sample from $\vec{d}_0/\norm{\vec{d}_0}_1$ and from any of the rows of the matrices $\mat{A}_{\ell}'$ in polylogarithmic time, with total preprocessing time again linear in the size of the input. Finally, since we need to sample from $k$ probability distributions, and multiply $k$ signs to obtain one realization of the random variable, the cost per sample is indeed $O(k)$.
\end{proof}

The resulting complexity for the Frobenius norm unconditionally improves over the result in \cite{cohen1999approximating}. Indeed, by comparing the expressions in \cref{tbl:complexities}, we can use H\"older's inequality to conclude that
\[\|\mat{A}_1 \cdots \mat{A}_k\|_{1,1}^2 \leq n\|\mat{A}_1 \cdots \mat{A}_k\|_{1,2}^2.\]
The situation is a bit more complicated for the $\max$-norm problem, since the bound from \cite{cohen1999approximating} and our complexity are incomparable. Indeed, if the resulting matrix product $\mat{A}$ is the $n \times n$ all-ones matrix, then the Cohen-Lewis bound is $\norm{\mat{A}}_{1,2}^2 = n^3$, whereas our bound is $\norm{\mat{A}}_{\infty,\infty} \cdot \norm{\mat{A}}_{1,1} = n^2$. On the other hand, if the resulting matrix product $\mat{A}$ is $\mathrm{diag}(\sqrt{n}, 1, \dots, 1)$, then the Cohen-Lewis bound is $2n$, whereas our bound is $n^{3/2}$.

\subsection{Quantum algorithm}

In the quantum setting, we wish to speed-up the classical algorithm using the multivariate quantum mean estimation routine~\cite{cornelissen2022multiVariateMeanEstimation}. We describe our result in the following theorem, in the proof of which we compute the costs of implementing the input routines for this procedure.

\begin{theorem}[Quantum random walk algorithm]\label{thm:quantum-cohen-lewis}
    Let $k \in \mathbb{N}$, $n_j \in \mathbb{N}$ for all $\ell \in [k]_0$, and let $\mat{A}_\ell \in \Rbb^{n_{\ell-1} \times n_\ell}$ for all $\ell \in [k]$. Let $\varepsilon > 0$ and $\delta \in (0,1)$, and let $n_0 = n$ and $n_k = m$. Then, with classical preprocessing in the RAM model with time linear in the size of the input up to polylogarithmic factors, there exists a quantum algorithm in the QRAM model with run-time
    \[\widetilde{O}\left(k \cdot \frac{\norm{\ols{\vec{A}_1} \cdots \ols{\vec{A}_k}}_{1,1}}{\varepsilon} \cdot \log\left(\frac{nm}{\delta}\right)\right),\]
    that computes an approximation $\widetilde{\mat{A}}$ satisfying $\norm{\widetilde{\mat{A}} - \mat{A}_1 \cdots \mat{A}_k}_{\max} < \varepsilon$, with probability at least $(1-\delta)$.
\end{theorem}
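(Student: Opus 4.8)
The plan is to reduce the theorem to a single invocation of quantum multivariate mean estimation (\cref{thm:multivariate-mean-estimation}) applied to exactly the random variable $\mat{X}$ built in \cref{lem:CL-expectation}, run with the sampling distribution $\vec{q} = \vec{d}_0/\norm{\vec{d}_0}_1$. For this choice \cref{lem:CL-expectation} gives $\E[\mat{X}] = \mat{A}_1\cdots\mat{A}_k$ and $\Tr[\mat{\Sigma}] \leq \norm{\ols{\mat{A}_1}\cdots\ols{\mat{A}_k}}_{1,1}^2$. Vectorizing $\mat{X}\in\Rbb^{n\times m}$ puts us in dimension $d = nm$, so \cref{thm:multivariate-mean-estimation} returns $\widetilde{\mat{A}}$ with $\norm{\widetilde{\mat{A}} - \mat{A}_1\cdots\mat{A}_k}_{\max} \leq \sqrt{\Tr[\mat{\Sigma}]}\,\log(nm/\delta)/N \leq \norm{\ols{\mat{A}_1}\cdots\ols{\mat{A}_k}}_{1,1}\log(nm/\delta)/N$ with probability $\geq 1-\delta$, using $\widetilde{O}(N)$ calls to $U_\Pbb$ and to the inner-product oracle $O_{\vec{X}}$, plus $\widetilde{O}(nm)$ further elementary gates. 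I would then pick $N = \Theta\!\big(\norm{\ols{\mat{A}_1}\cdots\ols{\mat{A}_k}}_{1,1}\log(nm/\delta)/\varepsilon\big)$ so that the error drops below $\varepsilon$; one may assume $\varepsilon \leq \norm{\ols{\mat{A}_1}\cdots\ols{\mat{A}_k}}_{1,1}$ (else the zero matrix already answers), hence $N \geq \log(nm/\delta)$ as the theorem requires. What remains is the classical preprocessing and an $\widetilde{O}(k)$ bound on the per-call cost of each oracle.

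For preprocessing, I would first run the stochastic matrix product decomposition of \cref{thm:matrix-product-decomposition} in time linear in the input to obtain $\vec{d}_0$ and a description of the stochastic matrices $\mat{A}_\ell'$; by \cref{lem:element-wise-norm-characterization} this simultaneously yields $\norm{\vec{d}_0}_1 = \norm{\ols{\mat{A}_1}\cdots\ols{\mat{A}_k}}_{1,1}$, hence $N$. Then, for $\vec{d}_0/\norm{\vec{d}_0}_1$ and for every row of every $\ols{\mat{A}_\ell'}$, I would build the standard binary-tree partial-sum data structure that supports coherent preparation of the corresponding amplitude state with $\widetilde{O}(1)$ read queries; its total size and build time stay linear in the input, and this is the only part executed in classical RAM (it is the coherent analogue of the sampling step \cref{thm:classical-sampling-from-distribution}). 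Since each $\mat{D}_\ell$ is a nonnegative diagonal matrix, $\mathrm{sign}([\mat{A}_\ell']_{ij}) = \mathrm{sign}([\mat{A}_\ell]_{ij})$, so signs are read directly off the stored inputs, and all-zero rows are pruned during preprocessing. With these structures, $U_\Pbb$ factorizes along the path: because $\Pbb(j_0,\dots,j_k) = q_{j_0}\prod_{\ell=1}^{k}[\ols{\mat{A}_\ell'}]_{j_{\ell-1},j_\ell}$, the operator $U_\Pbb$ is a composition of $k+1$ conditional state preparations (first $\sum_{j_0}\sqrt{q_{j_0}}\ket{j_0}$, then for $\ell = 1,\dots,k$, controlled on $j_{\ell-1}$, the state $\sum_{j_\ell}\sqrt{[\ols{\mat{A}_\ell'}]_{j_{\ell-1},j_\ell}}\ket{j_\ell}$), each costing $\widetilde{O}(1)$ queries, so $U_\Pbb$ costs $\widetilde{O}(k)$.

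The step I expect to be the crux is the inner-product oracle, and it is also where the write-enabled QRAM gate (rather than read-only QROM) is genuinely needed. The key observation is that $\mat{X}(j_0,\dots,j_k) = \tfrac{(\vec{d}_0)_{j_0}}{q_{j_0}}\, s(j_0,\dots,j_k)\,\vec{e}_{j_0}\vec{e}_{j_k}^T$ is a single-entry matrix, so for the grid vector $\mat{Y}\in\Rbb^{n\times m}$ that lives inside the mean-estimation routine one has $\mat{Y}^T\mat{X}(j_0,\dots,j_k) = \tfrac{(\vec{d}_0)_{j_0}}{q_{j_0}}\, s(j_0,\dots,j_k)\,[\mat{Y}]_{j_0 j_k}$, and with $\vec{q}=\vec{d}_0/\norm{\vec{d}_0}_1$ the scalar prefactor collapses to the constant $\norm{\vec{d}_0}_1$. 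Thus $O_{\vec{X}}$ reduces to: compute $s(j_0,\dots,j_k) = \prod_{\ell}\mathrm{sign}([\mat{A}_\ell]_{j_{\ell-1},j_\ell})$ via $k$ read-only accesses ($\widetilde{O}(k)$), and extract the single coordinate $[\mat{Y}]_{j_0 j_k}$ from the quantum register holding $\mat{Y}$ at the quantum-controlled address $(j_0,j_k)$ — one call to the QRAM gate of Section~\ref{sec:computational-model} (swap the entry into an ancilla, multiply by $\norm{\vec{d}_0}_1\cdot s$ into the output register, swap it back), at cost $\widetilde{O}(1)$. The QRAM (not QROM) model is required here precisely because the addressed array is itself a quantum register. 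Hence $\eta := \mathrm{cost}(O_{\vec{X}}) = \widetilde{O}(k)$.

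Assembling everything via the refined cost bound following \cref{thm:multivariate-mean-estimation} (total time $\widetilde{O}(\eta N + d)$), the quantum running time is $\widetilde{O}(kN + nm) = \widetilde{O}\!\big(k\,\norm{\ols{\mat{A}_1}\cdots\ols{\mat{A}_k}}_{1,1}\log(nm/\delta)/\varepsilon\big)$, with the additive $\widetilde{O}(nm)$ (uniform-grid preparation, inverse QFT, and writing out $\widetilde{\mat{A}}$) being the unavoidable output cost and absorbed into the stated bound. The main obstacle is therefore not the query count — that is immediate from \cref{lem:CL-expectation} plus \cref{thm:multivariate-mean-estimation} — but the implementation details: realizing $U_\Pbb$ as a length-$k$ chain of binary-tree-assisted conditional preparations within the linear preprocessing budget, and, above all, recognizing that the single-entry structure of $\mat{X}$ turns the inner product against the grid register into one QRAM access, so that the per-query overhead is $\widetilde{O}(k)$ rather than $\widetilde{O}(nm)$.
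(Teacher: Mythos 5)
Your proposal is correct and follows essentially the same route as the paper's proof: apply quantum multivariate mean estimation to the random variable of \cref{lem:CL-expectation} with $\vec{q}=\vec{d}_0/\norm{\vec{d}_0}_1$, choose $N = \Theta(\norm{\ols{\mat{A}_1}\cdots\ols{\mat{A}_k}}_{1,1}\log(nm/\delta)/\varepsilon)$, implement $U_\Pbb$ as a length-$k$ coherent sampling chain with linear preprocessing, and exploit the single-entry structure of $\mat{X}$ so that the inner product against $\mat{Y}$ is one QRAM lookup, giving $\widetilde{O}(k)$ per query. The only (inessential) divergence is that you realize coherent sampling via binary-tree partial-sum state preparation where the paper reversibly implements the alias method; your added details on $N \geq \log(nm/\delta)$ and on $\mathrm{sign}([\mat{A}_\ell']_{ij})=\mathrm{sign}([\mat{A}_\ell]_{ij})$ are correct refinements.
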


\begin{proof}
    We take $N$ as the number of q-samples in the quantum multivariate mean estimation algorithm. Then, combining \cref{lem:CL-expectation} and \cref{thm:multivariate-mean-estimation}, the resulting estimate $\widetilde{\mat{A}}$ satisfies with probability at least $(1-\delta)$,
    \[\norm{\widetilde{\mat{A}} - \mat{A}_1 \cdots \mat{A}_k}_{\max} = \widetilde{O}\left(\norm{\ols{\mat{A}_1} \cdots \ols{\mat{A}_k}}_{1,1} \cdot \frac{\log(nm/\delta)}{N}\right),\]
    and so it suffices to choose $N = \Theta(\norm{\ols{\mat{A}_1} \cdots \ols{\mat{A}_k}}_{1,1} \log(nm/\delta)/\varepsilon)$ to ensure that the error is at most $\varepsilon$.
    
    It remains to prove that we can implement the input oracles in time $\widetilde{O}(k)$. To that end, observe that coherent sampling from $p$ can be done by implementing the classical sampling producedure reversibly, as described in \cite{montanaro2015QMonteCarlo}, with the same time complexity $\widetilde{O}(k)$. Thus, total cost per sample for implementing $U_{\Pbb}$ is indeed $\widetilde{O}(k)$.
    
    Finally we compute the cost of implementing the oracle computing the random variable. To that end, observe that it must implement the following operation:
    \[O_{\mat{X}} : \ket{j_0,\dots,j_k}\ket{0} \mapsto \ket{j_0,\dots,j_k}\ket{\prod_{\ell=1}^k \mathrm{sign}([\mat{A}_{\ell}']_{j_{\ell-1},j_{\ell}}) \cdot \vec{e}_{j_0} \vec{e}_{j_k}^T}.\]
    This essentially boils down to $k$ look-ups of the signs of the entries in the input, and then $O(k)$ multiplications of signs. This can be implemented in $\widetilde{O}(k)$ time with the aid of QROM access to preprocessed data. We complete the proof by observing that we can also implement the inner product routine in the same way if we have access to QRAM, i.e., we can store the matrix $\ket{\mat{Y}}$ in QRAM and index into it efficiently to spend at most $\widetilde{O}(k)$ time to implement
    \[O_{\mat{X}}' : \ket{\mat{Y}}\ket{j_0, \dots j_k}\ket{0} \mapsto \ket{\mat{Y}}\ket{j_0, \dots, j_k}\ket{\prod_{\ell=1}^k \mathrm{sign}([\mat{A}_{\ell}']_{j_{\ell-1},j_{\ell}}) Y_{j_0,j_k}}.\qedhere\]
\end{proof}

This approach gives us a quantum algorithm that estimates the matrix product in $\max$-norm, whereas the corresponding classical algorithm provides an approximation in terms of Frobenius norm. We can convert between the two using norm conversion, since $\norm{\cdot}_F \leq \sqrt{nm}\norm{\cdot}_{\max}$. The resulting expressions can be found in \cref{tbl:complexities} and Table \ref{tab:results-full}.

\section{Sketching-based algorithms}

\subsection{Generic analysis}

In this section we provide a unifying picture for matrix multiplication algorithms via sketching, which encapsulates the algorithms of \cite{drineas2006fast} and \cite{sarlos2006improved}, and also enables us to construct classical algorithms for multiple matrices as well as quantum algorithms. For simplicity, in this part of the exposition let us assume both $\mat{A},\mat{B} \in \mathbb{R}^{n \times n}$, though the discussion also generalizes to rectangular matrices. Our goal is to construct a matrix-valued random variable $\mat{S} \in \mathbb{R}^{c \times n}$ (with $c < n$) such that for all $\mat{A},\mat{B}$ we have
\begin{equation}\label{eq:sketch-exp-val}
    \E[\mat{A}\mat{S}^T \mat{S}\mat{B}] = \mat{AB}\,,
\end{equation}
which in particular implies that $\E[\mat{S}^T \mat{S}] = \mat{I}$.
The matrices $\mat{A}\mat{S}^T$ and $\mat{S}\mat{B}$ may be thought of as sketches of $\mat{A}$ and $\mat{B}$ respectively. Specifically, judicious choices of $\mat{S}$ can enable efficient approximation of $\mat{AB}$, as the complexity of evaluating $\mat{A}\mat{S}^T \mat{S}\mat{B}$ is $O(n^2c^{\omega -2})$, compared to $O(n^{\omega})$ for the exact matrix product (here $\omega$ can either be $\omega = 3$, if we disallow fast matrix multiplication, or else it is the matrix multiplication exponent $\omega < 2.37$). Supposing the rows of $\mat{S}$ are chosen independently, and we denote them as the column vectors $\{\hat{\vec{s}}_j\}_{j=1}^c$, then we can also write

\begin{equation}\label{eq:sketch-exp-val-2}
    \E[\mat{A}\vec{s} \vec{s}^T\mat{B}] = \mat{AB} \,,
\end{equation}
for a vector-valued random variable {$\vec{s}:=\hat{\vec{s}}\sqrt{c}$}, and we may view the matrix multiplication as a mean over outer products, each costing $O(n^2)$ times to evaluate.  
In this perspective, we can view one instance of $\mat{A}\mat{S}^T \mat{S}\mat{B}$ as a \textit{sample mean} of $\mat{A}\vec{s} \vec{s}^T\mat{B}$ over $c$ samples. For the sake of analysis, we use this picture and find it useful. However, we stress that the quantum and classical algorithm algorithmically adopt different approaches. The classical algorithms use Eq.~\eqref{eq:sketch-exp-val}, explicitly computing $\mat{A}\mat{S}^T \mat{S}\mat{B}$, and thus can exploit fast matrix multiplication; whereas the quantum algorithms use Eq.~\eqref{eq:sketch-exp-val-2} and explicitly implement a mean estimation of outer products.

Now that the problem is framed in terms of mean estimation, we can determine the convergence speed by investigating properties of the covariance matrix so that we may use Lemma \ref{lem:vector-bernstein-inequality} and Theorem \ref{thm:quantum-mean-estimation} to analyze the classical and quantum algorithm respectively. 

\begin{lemma}[Covariance matrix for sketching algorithms]\label{lem:covariance-sketch}
Consider the matrix-valued random variable $\mat{A}\vec{s} \vec{s}^T\mat{B}$. The covariance matrix $\mat{\Sigma}$ satisfies 
\begin{align*}
    \max_i[\mat{\Sigma}]_{ii} &=  \max_{ij}\Var\big[[\mat{A}\vec{s} \vec{s}^T\mat{B}]_{ij}\big] = \max_{ij}\sum_{\substack{k\ell qr }} [\mat{A}]_{ik} [\mat{B}]_{\ell j} [\mat{A}]_{iq} [\mat{B}]_{rj} \Cov\big[ s_{k}s_{\ell}, s_{q}s_{r} \big] \,,\\
    \Tr[\mat{\Sigma}] &= \E\! \left[ \| \mat{AB} - \mat{A}\mat{S}^T \mat{S}\mat{B} \|_F^2 \right] =  \sum_{ij}\sum_{\substack{k\ell qr }} [\mat{A}]_{ik} [\mat{B}]_{\ell j} [\mat{A}]_{iq} [\mat{B}]_{rj} \Cov\big[ s_{k}s_{\ell}, s_{q}s_{r} \big] \,.
\end{align*}
\end{lemma}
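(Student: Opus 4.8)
The plan is to compute both quantities directly from the definition of the covariance matrix of the vectorized random variable $\mat{A}\vec{s}\vec{s}^T\mat{B}$. Recall that we think of $\mat{A}\vec{s}\vec{s}^T\mat{B}$ as a random variable taking values in $\Rbb^{n\times n} \cong \Rbb^d$ with $d = n^2$, its mean is $\mat{AB}$ by \eqref{eq:sketch-exp-val-2}, and the covariance matrix $\mat{\Sigma}$ is indexed by pairs of matrix positions. The diagonal entry of $\mat{\Sigma}$ at position $(i,j)$ is precisely $\Var[[\mat{A}\vec{s}\vec{s}^T\mat{B}]_{ij}]$, so $\max_i[\mat{\Sigma}]_{ii} = \max_{ij}\Var[[\mat{A}\vec{s}\vec{s}^T\mat{B}]_{ij}]$ is immediate, and similarly $\Tr[\mat{\Sigma}] = \sum_{ij}\Var[[\mat{A}\vec{s}\vec{s}^T\mat{B}]_{ij}] = \E[\|\mat{A}\vec{s}\vec{s}^T\mat{B} - \mat{AB}\|_F^2]$ by linearity of expectation and the definition of the Frobenius norm. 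For the last equality I would also note that the sample-mean matrix $\mat{A}\mat{S}^T\mat{S}\mat{B}$ over $c$ i.i.d.\ rows has the same mean $\mat{AB}$ and covariance scaled so that, for $c=1$, $\mat{A}\mat{S}^T\mat{S}\mat{B}$ coincides in distribution with $\mat{A}\vec{s}\vec{s}^T\mat{B}$ (since $\vec{s} = \hat{\vec{s}}\sqrt{c}$ with $c=1$); alternatively one just unwinds $\E[\|\mat{AB}-\mat{A}\mat{S}^T\mat{S}\mat{B}\|_F^2]$ and observes the cross terms vanish and the per-sample variance is $1/c$ of the single-outer-product variance, giving the stated identity.

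Next I would expand the entry $[\mat{A}\vec{s}\vec{s}^T\mat{B}]_{ij}$ explicitly. Writing it out, $[\mat{A}\vec{s}\vec{s}^T\mat{B}]_{ij} = \sum_{k}\sum_{\ell} [\mat{A}]_{ik}\, s_k\, s_\ell\, [\mat{B}]_{\ell j}$. Hence
\[
\Var\big[[\mat{A}\vec{s}\vec{s}^T\mat{B}]_{ij}\big] = \Var\!\left[\sum_{k,\ell}[\mat{A}]_{ik}[\mat{B}]_{\ell j}\, s_k s_\ell\right] = \sum_{k,\ell}\sum_{q,r}[\mat{A}]_{ik}[\mat{B}]_{\ell j}[\mat{A}]_{iq}[\mat{B}]_{rj}\,\Cov\big[s_k s_\ell,\, s_q s_r\big],
\]
using bilinearity of covariance. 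Summing this identity over all $(i,j)$ gives the claimed formula for $\Tr[\mat{\Sigma}]$, and taking the maximum over $(i,j)$ gives the formula for $\max_i[\mat{\Sigma}]_{ii}$. This is really just careful bookkeeping with indices.

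There is no serious obstacle here; the statement is essentially a definitional unpacking. The only point requiring a line of care is the middle identity $\Tr[\mat{\Sigma}] = \E[\|\mat{AB}-\mat{A}\mat{S}^T\mat{S}\mat{B}\|_F^2]$, where one must be careful about whether we are talking about the single outer product $\mat{A}\vec{s}\vec{s}^T\mat{B}$ or the $c$-sample average $\mat{A}\mat{S}^T\mat{S}\mat{B}$; the cleanest route is to invoke the convention $\vec{s} = \hat{\vec{s}}\sqrt{c}$ set up just before the lemma so that one instance of $\mat{A}\mat{S}^T\mat{S}\mat{B}$ with $c=1$ is exactly $\mat{A}\vec{s}\vec{s}^T\mat{B}$, making $\mat{\Sigma}$ literally the covariance matrix of the displayed random variable, and then $\Tr$ of a covariance matrix of a vector-valued random variable equals the expected squared Euclidean norm of its centered version, which for the vectorized matrix is the expected squared Frobenius norm. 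I would state this reduction in one sentence and then present the index expansion above.
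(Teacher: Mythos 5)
Your proposal is correct and follows essentially the same route as the paper's proof: identify the diagonal of $\mat{\Sigma}$ (under the super-index $(i,j)$) with entrywise variances, sum them and exchange sum with expectation to get the Frobenius-norm identity, then expand $[\mat{A}\vec{s}\vec{s}^T\mat{B}]_{ij}=\sum_{k,\ell}[\mat{A}]_{ik}[\mat{B}]_{\ell j}s_ks_\ell$ and apply bilinearity of covariance. Your extra remark reconciling the single outer product $\mat{A}\vec{s}\vec{s}^T\mat{B}$ with the $c$-sample average $\mat{A}\mat{S}^T\mat{S}\mat{B}$ (via $c=1$) is a point the paper's own proof glosses over, and is a welcome clarification rather than a deviation.
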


\begin{proof}
    In order to define the covariance matrix in terms of a vector-valued random variable, denote the super index $\alpha = (i,j)$. Recall that the diagonal entries of the covariance matrix are variances, that is
    \begin{equation*}
        [\mat{\Sigma}]_{\alpha \alpha} = \Var\left[[\mat{A}\vec{s} \vec{s}^T\mat{B}]_{\alpha}\right] = \Var\left[[\mat{A}\vec{s} \vec{s}^T\mat{B}]_{ij}\right]
    \end{equation*}
    From this the first expression for $\max_{\alpha} [\mat{\Sigma}]_{\alpha \alpha}$ follows, and the first expression for $\Tr[\mat{\Sigma}]$ follows by nothing that 
    \begin{align}
        \Tr[\mat{\Sigma}] &=  \sum_{ij}\Var\big[[\mat{A}\vec{s} \vec{s}^T\mat{B}]_{ij}\big] \nonumber \\
        &= \sum_{ij} \E\! \left[  (\mat{A}\vec{s} \vec{s}^T\mat{B} -  \mat{AB} )^2_{ij} \right] \nonumber \\
        &=  \E\! \left[\sum_{ij}   (\mat{A}\vec{s} \vec{s}^T\mat{B} -  \mat{AB} )^2_{ij} \right] \nonumber \\
        &= \E\! \left[ \| \mat{AB} - \mat{A}\vec{s} \vec{s}^T\mat{B} \|_F^2 \right] \,.\nonumber
    \end{align}
    The final expressions can be gleaned by using the standard formula for the variance of a sum of random variables
    \begin{align*}
        \Var\big[[\mat{A}\vec{s} \vec{s}^T\mat{B}]_{ij}\big] &=  \Var\Big[ \sum_{\substack{k\ell qr }} [\mat{A}]_{ik} [\vec{s} \vec{s}^T]_{k\ell} [\mat{B}]_{\ell j}  [\mat{A}]_{iq} [\vec{s} \vec{s}^T]_{qr} [\mat{B}]_{rj} \Big]  \\ &= \sum_{\substack{k\ell qr }} [\mat{A}]_{ik} [\mat{B}]_{\ell j} [\mat{A}]_{iq} [\mat{B}]_{rj} \Cov\left[[\vec{s} \vec{s}^T]_{k\ell}, [\vec{s} \vec{s}^T]_{qr}\right]\,,
    \end{align*}
    where to obtain the final expression we observe that $[\vec{s} \vec{s}^T]_{k\ell} = s_k s_{\ell}$.
\end{proof}

Lemma \ref{lem:covariance-sketch} shows that the convergence of both classical and quantum algorithms is controlled simply by covariances of elements of $\vec{s}$. In what follows we present two choices for $\vec{s}$, and their implications for both classical and quantum algorithms. 

\subsection{Classical algorithms}

Let us now consider a general rectangular matrix product $\mat{AB}$ for $\mat{A} \in \mathbb{R}^{n \times q}$, $\mat{B} \in \mathbb{R}^{q \times m}$. In classical algorithms we will sample one matrix $\mat{S} \in \mathbb{R}^{c \times q}$ with rows constructed independently from some distribution, and explicitly compute $\mat{A} \mat{S}^T \mat{S} \mat{B}$ (recall that in order to analyze convergence, we will view this as taking a sample mean over $c$ samples of $\mat{A}\vec{s} \vec{s}^T\mat{B}$). This costs $O((mn + nq + mq) c^{\omega - 2})$ runtime. We recall that the approximation error in Frobenius norm and max-norm are controlled by $\Tr[\mat{\Sigma}]$ and $\max_{i} [\mat{\Sigma}]_{ii}$ respectively.

\subsubsection{Tug-of-war sketch}\label{sec:tug-of-war}

In the tug-of-war sketch \cite{alon1996spaceComplexityFrequencyMoments} (also known as AMS sketch), $\mat{S}$ is chosen to be the dense matrix with entries $[\mat{S}]_{ij} = h_i(j)/\sqrt{c}$ with each $h_i:[n]\rightarrow \{-1, 1 \}$ drawn uniformly at random.\footnote{In fact, drawing entries randomly from a 4-wise independent hash family suffices \cite{sarlos2006improved}, though we will not exploit this.}. In the mean estimation picture, our sketching vectors $\vec{s}$ thus satisfy $s_j =h(j)$.

\begin{lemma}[Tug-of-war sketch covariance matrix]\label{lem:tug-of-war-sketch} For the tug-of-war sketch, the covariance matrix $\mat{\Sigma}$ satisfies
\[ \Tr[\mat{\Sigma}] =  \|\mat{A}\|^2_F \|\mat{B}\|_F^2 + \|\mat{AB}\|^2_F - 2\sum^n_{k=1} \|[\mat{A}]_{\bullet k}\|_2^2 \|[\mat{B}]_{k\bullet}\|_2^2  \,,\]
and 
\[ 
\max_{i} [\mat{\Sigma}]_{ii} =  \|\mat{A}\|^2_{2,\infty} \|\mat{B}^T\|^2_{2,\infty} + \|\mat{AB}\|_{\max}^2 - 2\min_{ij}\sum_{k} [\mat{A}]_{ik}^2 [\mat{B}]_{k j}^2 \,.
\]
\end{lemma}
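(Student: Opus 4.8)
The plan is to start from the general covariance formulas in \cref{lem:covariance-sketch} and specialize them to the tug-of-war sketch, where the key input is the joint distribution of the products $s_k s_\ell$. First I would record the relevant moments of the Rademacher-type vector $\vec{s}$ with $s_j = h(j)$: since the $h(j)$ are independent and uniform in $\{-1,1\}$, we have $s_j^2 = 1$ deterministically, $\E[s_j] = 0$, and $\E[s_j s_\ell] = \delta_{j\ell}$. The quantity we actually need is $\Cov[s_k s_\ell, s_q s_r]$, which equals $\E[s_k s_\ell s_q s_r] - \E[s_k s_\ell]\E[s_q s_r]$. The fourth moment $\E[s_k s_\ell s_q s_r]$ is $1$ whenever the multiset $\{k,\ell,q,r\}$ pairs up perfectly (i.e.\ each index appears an even number of times) and $0$ otherwise; subtracting $\delta_{k\ell}\delta_{qr}$ then leaves a nonzero contribution only on a few index patterns. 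Concretely, the covariance is nonzero exactly when (a) $k = q$, $\ell = r$, $k \neq \ell$ (value $1$), or (b) $k = r$, $\ell = q$, $k \neq \ell$ (value $1$), and it is zero when $k=\ell=q=r$ and in the diagonal-against-diagonal cases. This is the only genuinely combinatorial step, and it is routine.

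Next I would substitute this case analysis into
\[
\Var\big[[\mat{A}\vec{s}\vec{s}^T\mat{B}]_{ij}\big] = \sum_{k\ell qr} [\mat{A}]_{ik}[\mat{B}]_{\ell j}[\mat{A}]_{iq}[\mat{B}]_{rj}\,\Cov[s_k s_\ell, s_q s_r].
\]
Pattern (a) contributes $\sum_{k \neq \ell} [\mat{A}]_{ik}[\mat{B}]_{\ell j}[\mat{A}]_{ik}[\mat{B}]_{\ell j} = \sum_{k\neq\ell} [\mat{A}]_{ik}^2 [\mat{B}]_{\ell j}^2$, and pattern (b) contributes $\sum_{k\neq\ell} [\mat{A}]_{ik}[\mat{B}]_{\ell j}[\mat{A}]_{i\ell}[\mat{B}]_{kj}$. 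I would then complete each restricted sum to an unrestricted sum and subtract the diagonal $k=\ell$ terms: the first piece becomes $\big(\sum_k [\mat{A}]_{ik}^2\big)\big(\sum_\ell [\mat{B}]_{\ell j}^2\big) - \sum_k [\mat{A}]_{ik}^2 [\mat{B}]_{kj}^2$, i.e.\ $\|[\mat{A}]_{i\bullet}\|_2^2\,\|[\mat{B}]_{\bullet j}\|_2^2 - \sum_k [\mat{A}]_{ik}^2[\mat{B}]_{kj}^2$; the second becomes $\big(\sum_k [\mat{A}]_{ik}[\mat{B}]_{kj}\big)^2 - \sum_k [\mat{A}]_{ik}^2[\mat{B}]_{kj}^2 = [\mat{AB}]_{ij}^2 - \sum_k [\mat{A}]_{ik}^2[\mat{B}]_{kj}^2$. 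Adding the two gives
\[
\Var\big[[\mat{A}\vec{s}\vec{s}^T\mat{B}]_{ij}\big] = \|[\mat{A}]_{i\bullet}\|_2^2\,\|[\mat{B}]_{\bullet j}\|_2^2 + [\mat{AB}]_{ij}^2 - 2\sum_k [\mat{A}]_{ik}^2[\mat{B}]_{kj}^2.
\]

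From this entrywise formula both claimed identities drop out. For $\Tr[\mat{\Sigma}] = \sum_{ij}\Var[[\mat{A}\vec{s}\vec{s}^T\mat{B}]_{ij}]$ I would sum over $i,j$: $\sum_{ij}\|[\mat{A}]_{i\bullet}\|_2^2\|[\mat{B}]_{\bullet j}\|_2^2 = \|\mat{A}\|_F^2\|\mat{B}\|_F^2$, $\sum_{ij}[\mat{AB}]_{ij}^2 = \|\mat{AB}\|_F^2$, and $\sum_{ij}\sum_k [\mat{A}]_{ik}^2[\mat{B}]_{kj}^2 = \sum_k \big(\sum_i [\mat{A}]_{ik}^2\big)\big(\sum_j [\mat{B}]_{kj}^2\big) = \sum_k \|[\mat{A}]_{\bullet k}\|_2^2\|[\mat{B}]_{k\bullet}\|_2^2$, which reproduces the stated trace formula. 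For $\max_i [\mat{\Sigma}]_{ii} = \max_{ij}\Var[[\mat{A}\vec{s}\vec{s}^T\mat{B}]_{ij}]$, I would bound the first term by its maximum over $i,j$, namely $\big(\max_i \|[\mat{A}]_{i\bullet}\|_2^2\big)\big(\max_j \|[\mat{B}]_{\bullet j}\|_2^2\big) = \|\mat{A}\|_{2,\infty}^2\|\mat{B}^T\|_{2,\infty}^2$ (note this is an overestimate unless the maximizing rows/columns align, which is why the statement has $=$ only because it folds the slack into the last term — I should check whether the paper intends an inequality here or an exact identity with the $\min$ term compensating; the cleanest presentation is to write $\max_{ij}$ of the exact expression and then bound each term, so I would state it as the displayed identity over $\max_{ij}$ and note the norm bound). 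The $\max_{ij}[\mat{AB}]_{ij}^2 = \|\mat{AB}\|_{\max}^2$ term is immediate, and $\max_{ij}\big(-2\sum_k[\mat{A}]_{ik}^2[\mat{B}]_{kj}^2\big) = -2\min_{ij}\sum_k [\mat{A}]_{ik}^2[\mat{B}]_{kj}^2$. The one subtlety worth flagging is that $\max_{ij}(x_{ij}+y_{ij}+z_{ij})$ is not generally $\max x + \max y + \max z$, so the max-norm line should be read as bounding (or, if the indices happen to align, equaling) the maximum of the exact per-entry variance; I would make that reading explicit rather than claim a literal equality. The only mild obstacle in the whole argument is getting the fourth-moment case analysis exactly right (in particular not double-counting the configuration $k=\ell$, $q=r$, $k\neq q$, which has zero covariance), but this is bookkeeping rather than a real difficulty.
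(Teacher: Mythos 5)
Your proposal is correct and follows essentially the same route as the paper: compute $\Cov[s_k s_\ell, s_q s_r] = (\delta_{kq}\delta_{\ell r} + \delta_{kr}\delta_{\ell q})(1-\delta_{k\ell})$, substitute into Lemma~\ref{lem:covariance-sketch}, and then sum (resp.\ maximize) the resulting per-entry variance $\|[\mat{A}]_{i\bullet}\|_2^2\|[\mat{B}]_{\bullet j}\|_2^2 + [\mat{AB}]_{ij}^2 - 2\sum_k [\mat{A}]_{ik}^2[\mat{B}]_{kj}^2$ over $i,j$. Your remark that the max-norm line is really only an upper bound, since $\max_{ij}$ does not distribute over the three terms, is well taken --- the paper's proof makes the same silent replacement of the exact $\max_{ij}(\cdot)$ by the term-wise bound, which is harmless because only an upper bound on $\max_i[\mat{\Sigma}]_{ii}$ is used downstream.
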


\begin{proof} 
    We can first verify that
    \[ \E\big[[\vec{s} \vec{s}^T]_{ij}\big] =  \E [s_{i}s_{j}] = \E[ h(i) h(j)] = \delta_{ij} \quad\implies\quad \E[\vec{s} \vec{s}^T] =\mat{I}\,. \]
    The variance of diagonal terms satisfy
    \begin{align*}
        \Var\!\big[s_i^2\big] &= \E\big[s_i^4 \big] - 1 \\
        &= \E\left[ h(i)^4 \right] - 1\\
        &= 0\,.
    \end{align*}
    The variance of off-diagonal terms ($i\neq j$) satisfy
    \begin{align*}
        \Var\!\big[s_is_j\big] &= \E[(s_is_j)^2] \\
        &= \E\left[h(i)^2 h(j)^2 \right] \\
        &= 1 \,.
    \end{align*}
    Thus, we can write $\Var\left[s_is_j\right] = (1- \delta_{ij})$.
    Finally, the covariance (for $\{i,j\} \neq \{k,\ell\})$ can be evaluated as
    \begin{align*}
        \Cov\!\big[s_is_j, s_{k}s_{\ell}\big] &= \E[s_is_j s_{k}s_{\ell}] - \E\big[s_is_j\big]\, \E[s_{k}s_{\ell}] \\
        &= \E\Big[  h(i) h(j) h(k) h(\ell)  \Big] - \delta_{ij} \delta_{k \ell} \\
        &= \delta_{ij} \delta_{k \ell} - \delta_{ij} \delta_{k \ell}\\
        &= 0 \,.
    \end{align*}
    This means that overall, we have 
    \begin{equation}
        \Cov\!\big[s_is_j, s_{k}s_{\ell}\big] = (\delta_{ik}\delta_{j \ell} + \delta_{i\ell} \delta_{jk}) (1- \delta_{ij})\,. 
    \end{equation}
    We can substitute this into the expressions in Lemma \ref{lem:covariance-sketch} to give
    \begin{align*} 
        \Tr[\mat{\Sigma}]
        &= \sum_{ij} \left(\sum_{k\ell}[\mat{A}]_{ik}^2 [\mat{B}]_{\ell j}^2 + [\mat{AB}]_{ij}^2 - 2\sum_{k\ell}[\mat{A}]_{ik}^2 [\mat{B}]_{k j}^2 \right)\\
        &=   \|\mat{A}\|^2_F \|\mat{B}\|_F^2 + \|\mat{AB}\|^2_F - 2\sum_{ijk} [\mat{A}]_{ik}^2 [\mat{B}]_{kj}^2  \,,
    \end{align*}
    and
    \begin{align} 
        \max_{i} [\mat{\Sigma}]_{ii} &=  \max_{ij} \left(\sum_{k\ell}[\mat{A}]_{ik}^2 [\mat{B}]_{\ell j}^2 + [\mat{AB}]_{ij}^2 - 2\sum_{k\ell}[\mat{A}]_{ik}^2 [\mat{B}]_{k j}^2 \right) \\
        &=  \|\mat{A}\|^2_{2,\infty} \|\mat{B}^T\|^2_{2,\infty} + \|\mat{AB}\|_{\max}^2 - 2\min_{ij}\sum_{k} [\mat{A}]_{ik}^2 [\mat{B}]_{k j}^2 \,, 
    \end{align}
    as required.
\end{proof}

With this, we can write down complexities for approximate matrix multiplication using the tug-of-war sketch. This was detailed for Frobenius norm approximation in \cite{sarlos2006improved}, and we use the above to write the result for max-norm approximation also.

\begin{theorem}[Classical tug-of-war algorithm, 2 matrices]\label{thm:sarlos}
    Let $\mat{A} \in \mathbb{R}^{n \times q}$, $\mat{B} \in \mathbb{R}^{q \times m}$. By sampling a single tug-of-war matrix of appropriate dimension, we can return a $\mat{C}\in \mathbb{R}^{n \times m}$ in the circuit model with success probability at least $(1-\delta)$, which satisfies 
    
    \begin{equation}
        \|\mat{C} - \mat{AB} \|_{\max} \leq \varepsilon \quad \text{in time}\quad O\!
        \left( (mn + nq + qm) \left(\frac{\norm{\mat{A}}_{2,\infty}^2\norm{\mat{B}^T}_{2,\infty}^2}{\varepsilon^2} \right)^{\omega -2} \log \left( \frac{1}{\delta} \right) \right) \,,
    \end{equation}
    \begin{equation}
        \|\mat{C} - \mat{AB} \|_{F} \leq \varepsilon \quad \text{in time}\quad O\!
        \left( (mn + nq + qm) \left( \frac{\norm{\mat{A}}_{F}^2\norm{\mat{B}}_{F}^2}{\varepsilon^2} \right)^{\omega - 2} \log \left( \frac{1}{\delta} \right) \right)\,.
    \end{equation}
\end{theorem}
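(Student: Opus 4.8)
The plan is to run the mean-estimation framework established in Lemma~\ref{lem:covariance-sketch} and Lemma~\ref{lem:tug-of-war-sketch} through the concentration machinery (Lemma~\ref{lem:vector-bernstein-inequality}), then convert the sample-count bound into a runtime bound using the cost of rectangular matrix multiplication. First I would recall from the discussion preceding \S\ref{sec:tug-of-war} that one instance of $\mat{A}\mat{S}^T\mat{S}\mat{B}$ with $\mat{S} \in \Rbb^{c \times q}$ is exactly the sample mean of $c$ i.i.d.\ copies of $\mat{A}\vec{s}\vec{s}^T\mat{B}$, so Lemma~\ref{lem:vector-bernstein-inequality} applied with $L = c$ and $d = nm$ gives, for a single sketch, $\|\mat{C} - \mat{AB}\|_F = O(\sqrt{\Tr[\mat{\Sigma}]\log(1/\delta)/c})$ and $\|\mat{C}-\mat{AB}\|_{\max} = O(\sqrt{\max_i[\mat{\Sigma}]_{ii}\log(nm/\delta)/c})$, where $\mat{\Sigma}$ is the per-sample covariance from Lemma~\ref{lem:tug-of-war-sketch}.

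The next step is to bound $\Tr[\mat{\Sigma}]$ and $\max_i[\mat{\Sigma}]_{ii}$ by clean quantities. From Lemma~\ref{lem:tug-of-war-sketch}, $\Tr[\mat{\Sigma}] = \|\mat{A}\|_F^2\|\mat{B}\|_F^2 + \|\mat{AB}\|_F^2 - 2\sum_k \|[\mat{A}]_{\bullet k}\|_2^2\|[\mat{B}]_{k\bullet}\|_2^2$. Dropping the (nonnegative) middle term and the (nonpositive) last term, and noting $\|\mat{AB}\|_F^2 \le \|\mat{A}\|_F^2\|\mat{B}\|_F^2$ by submultiplicativity, gives $\Tr[\mat{\Sigma}] \le 2\|\mat{A}\|_F^2\|\mat{B}\|_F^2$; similarly $\max_i[\mat{\Sigma}]_{ii} \le 2\|\mat{A}\|_{2,\infty}^2\|\mat{B}^T\|_{2,\infty}^2$ since $\|\mat{AB}\|_{\max} \le \|\mat{A}\|_{2,\infty}\|\mat{B}^T\|_{2,\infty}$ (a row of $\mat{AB}$ is a row of $\mat{A}$ times $\mat{B}$, so each entry is bounded by Cauchy--Schwarz). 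Substituting these into the concentration bounds and solving for $c$ shows it suffices to take $c = \Theta\!\big(\tfrac{\|\mat{A}\|_F^2\|\mat{B}\|_F^2}{\varepsilon^2}\log(1/\delta)\big)$ for the Frobenius guarantee and $c = \Theta\!\big(\tfrac{\|\mat{A}\|_{2,\infty}^2\|\mat{B}^T\|_{2,\infty}^2}{\varepsilon^2}\log(nm/\delta)\big)$ for the max-norm guarantee; the $\log(nm/\delta)$ can be folded into the $\log(1/\delta)$ stated in the theorem, absorbing the $\log n, \log m$ factors as part of the model's convention (or, if one wants literal agreement, using that $nm \le (\text{input size})$). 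Finally I would invoke the runtime claim from the start of \S\ref{sec:tug-of-war}: sampling $\mat{S}$ and forming $\mat{A}\mat{S}^T$, $\mat{S}\mat{B}$, and their product costs $O((mn + nq + qm)c^{\omega-2})$ — the $c^{\omega-2}$ coming from multiplying an $n\times c$ by a $c\times m$ matrix (and analogously the two $O(c)$-rank sketch formations, which are $O((nq+qm)c)$, dominated by the final product's $c^{\omega-2}$ scaling once $\omega \ge 2$) — and substitute the chosen $c$.

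The main obstacle is not any single inequality but getting the runtime accounting exactly right: one must be careful that forming the sketches $\mat{A}\mat{S}^T \in \Rbb^{n\times c}$ and $\mat{S}\mat{B} \in \Rbb^{c \times m}$ costs $O(nqc)$ and $O(qmc)$ respectively (naive, since $\mat{S}$ is dense), while only the final $c\times$-inner-dimension product $\mat{A}\mat{S}^T \cdot \mat{S}\mat{B}$ benefits from fast rectangular matrix multiplication and contributes the $c^{\omega-2}$ factor; and one must check that the stated form $(mn+nq+qm)c^{\omega-2}$ is a valid upper bound on all three (this holds because $c \ge 1$ so $c^{\omega-2} \ge 1$ when $\omega \ge 2$, hence $nqc$ is not obviously $\le nq\,c^{\omega-2}$ unless $c^{\omega-3} \ge 1$, i.e.\ $\omega \ge 3$ — so for $\omega < 3$ one actually wants the bound $(mn + nq + qm)\max(c, c^{\omega-2})$, and the cleanest fix is to note $c \le c^{\omega-2}$ fails but the sketch-formation terms $nqc, qmc$ can be re-expressed, or simply to state the bound as written under the paper's $\omega = 3$ normalization in the introduction and as $O((mn+nq+qm)c^{\omega-2} + (nq+qm)c)$ in full generality). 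For the purposes of this theorem statement I would present it with the $c^{\omega-2}$ factor on the $(mn+nq+qm)$ block, which is correct once one observes that the dominant cost is the rectangular product and the sketch formations are lower order in the regime of interest (where $c = \widetilde\Theta(\varepsilon^{-2}\cdot(\text{norms}))$ and $\omega \le 3$, with $q,n,m$ comparable). A secondary point to handle cleanly is the failure probability: Lemma~\ref{lem:vector-bernstein-inequality} already delivers the $(1-\delta)$ guarantee directly from a single sketch of the right size, so no median-of-means boosting is needed, which keeps the argument short.
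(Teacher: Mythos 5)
Your proposal is correct and follows essentially the same route as the paper: bound $\Tr[\mat{\Sigma}]$ and $\max_i[\mat{\Sigma}]_{ii}$ via Lemma \ref{lem:tug-of-war-sketch}, feed them into Lemma \ref{lem:vector-bernstein-inequality} with $L=c$, and pay $O((mn+nq+qm)c^{\omega-2})$ to evaluate $\mat{A}\mat{S}^T\mat{S}\mat{B}$. The runtime worry you flag as ``the main obstacle'' dissolves once you note that the sketch formations $\mat{A}\mat{S}^T$ and $\mat{S}\mat{B}$ are themselves rectangular products that can be tiled into $c\times c$ blocks and computed with fast matrix multiplication in time $O(nq\,c^{\omega-2})$ and $O(qm\,c^{\omega-2})$ respectively, so all three terms legitimately carry the $c^{\omega-2}$ factor and no case split on $\omega$ is needed.
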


\begin{proof}
    We can directly use the matrix Bernstein inequalities of Lemma \ref{lem:vector-bernstein-inequality}, with expressions for covariance properties given by Lemma \ref{lem:tug-of-war-sketch}.
    The max-norm algorithm follows by choice of $c=L = 2\norm{\mat{A}}_{2,\infty}^2\norm{\mat{B}^T}_{2,\infty}^2\log(nm/\delta)/\varepsilon^2$ for the max-norm algorithm and $c=L = 2\norm{\mat{A}}_{F}^2\norm{\mat{B}}_{F}^2\log(1/\delta)/\varepsilon^2$ for the Frobenius norm algorithm. Finally, as discussed at the top of this section, the time complexity of computing $\mat{A} \mat{S}^T \mat{S} \mat{B}$  classically is $O((mn + nq + mq) c^{\omega - 2})$.
\end{proof}


\subsubsection{Column-sample sketch}

In the algorithm of Drineas, Kannan and Mahoney, for a matrix product of $\mat{A}\in \mathbb{R}^{n \times q}$ with $\mat{B}\in \mathbb{R}^{q \times m}$, $\mat{S} \in \mathbb{R}^{c\times q}$ is chosen to be a row $1$-sparse matrix with entries populated as $[\mat{S}]_{i \bullet} = \vec{e}^T_j/\sqrt{cp_j}$ with probability $p_j$, where we denote $\vec{e}_j$ as the unit column vector with non-zero $j$th entry, and for some probability distribution $\{p_k \}_k$ \cite{drineas2006fast}. This is their general framework, and in the end $\{p_k \}_k$ is judiciously chosen to minimize the convergence time for their algorithm in Frobenius norm approximation -- note, however, that the minimizing distribution for the Frobenius norm does not optimize convergence in other norms, and indeed we find this not to be the case. In the mean estimation picture, we can write sketching vectors chosen as $\vec{s} = \vec{e}_j/\sqrt{p_j} \in \mathbb{R}^q$ with probability $p_j$.

\begin{lemma}[Column-sample sketch covariance]\label{lem:covariance-dkm}
For the column-sample sketch as described above we have $\Cov[s_i^2, s_j^2] = \frac{1}{p_i}\delta_{ij} - 1$ and $\Cov[s_is_j, s_ku_{\ell}]=0$ for and $i \neq j$ or $k\neq \ell$. This implies that  

\[ \Tr[\mat{\Sigma}] = \sum_{i=1}^n \frac{1}{p_i} \|[\mat{A}]_{\bullet, i}\|_2^2 \|[\mat{B}]_{i, \bullet}\|_2^2  - \|\mat{AB}\|_F^2 \,,\]

\[ \max_{i} [\mat{\Sigma}]_{ii} = \max_{ij} \left( \sum_{k} \frac{1}{p_k}  [\mat{A}]_{ik}^2 [\mat{B}]_{k j}^2 - (\mat{AB})^2_{i j}  \right) \,.\]

\end{lemma}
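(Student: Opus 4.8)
The plan is to follow the same template as the proof of Lemma \ref{lem:tug-of-war-sketch}: first nail down the low-order moments of the entries of $\vec{s}$, then substitute the resulting covariances into Lemma \ref{lem:covariance-sketch}. The one structural feature that makes the column-sample case even simpler than the tug-of-war case is that $\vec{s} = \vec{e}_j/\sqrt{p_j}$ is $1$-sparse: writing $s_k = \delta_{kj}/\sqrt{p_j}$ when index $j$ is drawn, the product $s_k s_\ell = \delta_{kj}\delta_{\ell j}/p_j$ is identically zero whenever $k \neq \ell$. Hence only the diagonal products $s_k^2$ are nondeterministic, which instantly gives $\Cov[s_i s_j, s_k s_\ell] = 0$ unless $i=j$ and $k=\ell$, the second assertion of the lemma.

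For the surviving terms I would compute $\E[s_k^2] = \sum_j p_j \cdot \delta_{kj}/p_j = 1$ (reconfirming $\E[\vec{s}\vec{s}^T] = \mat{I}$ and hence $\E[\mat{A}\vec{s}\vec{s}^T\mat{B}] = \mat{AB}$), and $\E[s_k^2 s_q^2]$, which is $0$ for $k \neq q$ and equals $\E[s_k^4] = \sum_j p_j\,\delta_{kj}/p_j^2 = 1/p_k$ for $k=q$. Therefore $\Cov[s_k^2, s_q^2] = \E[s_k^2 s_q^2] - \E[s_k^2]\E[s_q^2] = \tfrac{1}{p_k}\delta_{kq} - 1$, which is the first assertion; equivalently, $\Cov[s_k s_\ell, s_q s_r] = \big(\tfrac{1}{p_k}\delta_{kq} - 1\big)\delta_{k\ell}\delta_{qr}$.

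It remains to plug this into the two formulas of Lemma \ref{lem:covariance-sketch}. In the trace formula the deltas $\delta_{k\ell}\delta_{qr}$ collapse the fourfold inner sum to a double sum over $k,q$; splitting off the $\tfrac{1}{p_k}\delta_{kq}$ part from the $-1$ part yields $\sum_{ij}\sum_k \tfrac{1}{p_k}[\mat{A}]_{ik}^2[\mat{B}]_{kj}^2 - \sum_{ij}\big(\sum_k [\mat{A}]_{ik}[\mat{B}]_{kj}\big)^2$, and then factoring the first term over $i$ and $j$ and recognizing the second as $\|\mat{AB}\|_F^2$ gives $\sum_k \tfrac{1}{p_k}\|[\mat{A}]_{\bullet,k}\|_2^2\|[\mat{B}]_{k,\bullet}\|_2^2 - \|\mat{AB}\|_F^2$. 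Running the identical substitution through the $\max_i [\mat{\Sigma}]_{ii} = \max_{ij}\Var[[\mat{A}\vec{s}\vec{s}^T\mat{B}]_{ij}]$ formula — now without the outer sum over $(i,j)$ — produces $\max_{ij}\big(\sum_k \tfrac{1}{p_k}[\mat{A}]_{ik}^2[\mat{B}]_{kj}^2 - [\mat{AB}]_{ij}^2\big)$, as claimed. There is no genuine obstacle in this argument; the only place demanding care is the index bookkeeping when contracting the fourfold sum against the two Kronecker deltas, in particular making sure the $-1$ cross term reassembles exactly into $[\mat{AB}]_{ij}^2$ rather than some partially-summed residue.
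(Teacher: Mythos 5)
Your proposal is correct and follows essentially the same route as the paper: exploit the $1$-sparsity of $\vec{s}$ to kill all off-diagonal products $s_k s_\ell$, compute $\E[s_k^2]=1$ and $\Cov[s_k^2,s_q^2]=\tfrac{1}{p_k}\delta_{kq}-1$, and substitute into Lemma \ref{lem:covariance-sketch}, with the index contraction carried out exactly as in the paper's proof. No gaps.
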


\begin{proof}
    We can first verify that 
    \[ \E[\vec{s} \vec{s}^T]  = \sum_{j=1}^n  p_j \frac{\vec{e}_j \vec{e}^T_j}{p_j} =\mat{I}\,. \]
    We note that $\vec{s} \vec{s}^T$ is always diagonal, and thus $\Var\!\big[s_is_j\big] = \Var\!\left[[\vec{s} \vec{s}^T]_{ij}\right] = 0 $ for all $i \neq j$, and similarly for the covariance. 
    The covariances of the diagonal terms are
    \begin{align*}
        \Cov[s_i^2, s_j^2] &= \E[s_i^2 s_j^2] - \E[s_i^2]\, \E[s_j^2] \\
        &= \sum_{k=1}^q  p_k \frac{[\vec{e}_k \vec{e}^T_k]_{ii}}{p_k} \frac{[\vec{e}_k \vec{e}^T_k]_{jj}}{p_k} - \sum_{k=1}^q  p_k \frac{[\vec{e}_{k} \vec{e}^T_{k}]_{ii}}{p_k} \sum_{\ell=1}^q  p_{\ell} \frac{[\vec{e}_{\ell} \vec{e}^T_{\ell}]_{jj}}{p_{\ell}} \\
        &= \sum_{k=1}^q \frac{1}{p_k} \delta_{ik} \delta_{jk} - \sum_{k=1}^q \delta_{ik} \sum_{\ell=1}^q \delta_{\ell j} \\
        &= \frac{1}{p_i}\delta_{ij} - 1\,.
    \end{align*}
    We can now substitute this result into the expressions in Lemma \ref{lem:covariance-sketch}  and write
    \begin{align*}
    \Tr[\mat{\Sigma}]&= \sum_{ij} \left(  \sum_{k} \frac{1}{p_k} [\mat{A}]_{ik}^2 [\mat{B}]_{k j}^2 - \sum_{k \ell} [\mat{A}]_{ik} [\mat{B}]_{k j} [\mat{A}]_{i\ell} [\mat{B}]_{\ell j} \right) \\
    &=  \sum_{k=1}^n \frac{1}{p_k} \|[\mat{A}]_{\bullet, k}\|_2^2 \|[\mat{B}]_{k, \bullet}\|_2^2  - \|\mat{AB}\|_F^2 \,,
    \end{align*}
    and 
    \begin{align*}
        \max_{i} [\mat{\Sigma}]_{ii} &=  \max_{ij} \left( \sum_{k} \frac{1}{p_k}  [\mat{A}]_{ik}^2 [\mat{B}]_{k j}^2 - \sum_{k \ell}  [\mat{A}]_{ik} [\mat{B}]_{k j}  [\mat{A}]_{i \ell} [\mat{B}]_{\ell j}   \right) \\
        &= \max_{ij} \left( \sum_{k} \frac{1}{p_k}  [\mat{A}]_{ik}^2 [\mat{B}]_{k j}^2 - (\mat{AB})^2_{i j}  \right) \,.
    \end{align*}
    which completes the proof.
\end{proof}

The probability distribution can now be optimized -- note that the optimal probability distribution for the Frobenius norm (as found in \cite{drineas2006fast}) may differ from the optimal distribution for the max-norm, and indeed we find this to be the case. 

\begin{theorem}[Classical column-sampling algorithm, 2 matrices]
    Let $\mat{A} \in \mathbb{R}^{n \times q}$, $\mat{B} \in \mathbb{R}^{q \times m}$. By sampling a single column-sampling matrix of appropriate dimension, we can return a matrix $\mat{C}\in \mathbb{R}^{n \times m}$, with success probability at least $(1-\delta)$, which satisfies 
    
    \begin{equation}
        \|\mat{C} - \mat{AB} \|_{\max} \leq \varepsilon \quad \text{in time}\quad O\!
        \left( nm \left(\frac{\norm{\mat{A}^T}_{\infty,2}^2\norm{\mat{B}}_{\infty,2}^2}{\varepsilon^2} \right)^{\omega -2} \log\left( \frac{1}{\delta}\right) \right) \,,
    \end{equation}
    \begin{equation}
        \|\mat{C} - \mat{AB} \|_{F} \leq \varepsilon \quad \text{in time}\quad O\!
        \left( nm \left( \frac{\norm{\mat{A}}_{F}^2\norm{\mat{B}}_{F}^2}{\varepsilon^2} \right)^{\omega - 2} \log\left( \frac{1}{\delta}\right) \right)\,,
    \end{equation}
    where $\omega$ is the matrix multiplication exponent.
\end{theorem}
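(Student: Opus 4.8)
The plan is to mirror the proof of \cref{thm:sarlos} almost verbatim, replacing \cref{lem:tug-of-war-sketch} by \cref{lem:covariance-dkm} and inserting one extra ingredient: an optimization over the sampling distribution $\{p_k\}$, carried out separately for each norm. First I would view $\mat{A}\mat{S}^T\mat{S}\mat{B}$, for $\mat{S}\in\Rbb^{c\times q}$ with i.i.d.\ column-sample rows, as the sample mean of $c$ independent copies of the estimator $\mat{A}\vec{s}\vec{s}^T\mat{B}$, which is unbiased by \cref{lem:covariance-dkm} (there $\E[\vec{s}\vec{s}^T]=\mat{I}$). Applying the matrix Bernstein inequality \cref{lem:vector-bernstein-inequality} to these $c$ copies reduces the Frobenius (resp.\ max-norm) claim to taking $c=L=\Theta(\Tr[\mat{\Sigma}]\log(1/\delta)/\varepsilon^2)$ (resp.\ $c=L=\Theta(\max_i[\mat{\Sigma}]_{ii}\log(nm/\delta)/\varepsilon^2)$), with $\mat{\Sigma}$ the covariance of a single sample, and showing that the relevant second-moment quantity is at most the claimed product of norms.

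For the Frobenius norm I would start from $\Tr[\mat{\Sigma}]\le\sum_{k=1}^q p_k^{-1}\|[\mat{A}]_{\bullet,k}\|_2^2\|[\mat{B}]_{k,\bullet}\|_2^2$ (dropping the $-\|\mat{AB}\|_F^2$ term from \cref{lem:covariance-dkm}), minimize the right-hand side over probability distributions by Cauchy--Schwarz — the minimum $\big(\sum_k\|[\mat{A}]_{\bullet,k}\|_2\|[\mat{B}]_{k,\bullet}\|_2\big)^2$ is attained at $p_k\propto\|[\mat{A}]_{\bullet,k}\|_2\|[\mat{B}]_{k,\bullet}\|_2$, i.e.\ exactly the distribution of \cite{drineas2006fast} — and then bound it by $\|\mat{A}\|_F^2\|\mat{B}\|_F^2$ with a second Cauchy--Schwarz. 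For the max-norm I would first bound the summands entrywise, $[\mat{A}]_{ik}^2\le\|[\mat{A}]_{\bullet,k}\|_\infty^2$ and $[\mat{B}]_{kj}^2\le\|[\mat{B}]_{k,\bullet}\|_\infty^2$, which decouples the $\max_{ij}$ from the sum over $k$ in the expression of \cref{lem:covariance-dkm}; the same two Cauchy--Schwarz steps, now with $p_k\propto\|[\mat{A}]_{\bullet,k}\|_\infty\|[\mat{B}]_{k,\bullet}\|_\infty$, yield $\max_i[\mat{\Sigma}]_{ii}\le\big(\sum_k\|[\mat{A}]_{\bullet,k}\|_\infty\|[\mat{B}]_{k,\bullet}\|_\infty\big)^2\le\|\mat{A}^T\|_{\infty,2}^2\|\mat{B}\|_{\infty,2}^2$, after identifying $\sum_k\|[\mat{A}]_{\bullet,k}\|_\infty^2=\|\mat{A}^T\|_{\infty,2}^2$ and $\sum_k\|[\mat{B}]_{k,\bullet}\|_\infty^2=\|\mat{B}\|_{\infty,2}^2$ from the definition of the element-wise norms.

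For the running time: in both cases the sampling distribution is a function only of the $\ell_2$- or $\ell_\infty$-norms of the $q$ columns of $\mat{A}$ and the $q$ rows of $\mat{B}$, computable in $O(nq+qm)$ preprocessing time, after which the $c$ rows of $\mat{S}$ can be sampled with total cost $\widetilde{O}(c+q)$ (e.g.\ via \cref{thm:classical-sampling-from-distribution}, or a scan over the CDF). Since $\mat{S}$ is row $1$-sparse, $\mat{A}\mat{S}^T\in\Rbb^{n\times c}$ and $\mat{S}\mat{B}\in\Rbb^{c\times m}$ are obtained by selecting and rescaling $c$ columns of $\mat{A}$ and $c$ rows of $\mat{B}$ in $O((n+m)c)$ time, and their product — a rectangular multiplication of an $n\times c$ by a $c\times m$ matrix — costs $O(nm\,c^{\omega-2})$ by tiling into $c\times c$ blocks (for $c\le\min(n,m)$; otherwise the exact product is already cheaper). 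Substituting the two values of $c=L$ above gives the stated runtimes.

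I expect the only non-mechanical point to be the max-norm case: one must pass to the column/row $\ell_\infty$-norms \emph{before} optimizing $\{p_k\}$ (trying to minimize the raw quantity $\max_{ij}\sum_k p_k^{-1}[\mat{A}]_{ik}^2[\mat{B}]_{kj}^2$ directly is awkward, since the maximizing pair $(i,j)$ itself depends on $\{p_k\}$), and then one must recognize the resulting sums of squared column- and row-sup-norms as the element-wise matrix norms $\|\mat{A}^T\|_{\infty,2}$ and $\|\mat{B}\|_{\infty,2}$ that appear in the statement. Everything else is a direct transcription of the tug-of-war argument.
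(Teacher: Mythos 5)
Your proposal is correct and follows essentially the same route as the paper's proof: unbiasedness plus the covariance expressions of Lemma \ref{lem:covariance-dkm}, the matrix Bernstein inequality, a Cauchy--Schwarz optimization of $\{p_k\}$ carried out separately for each norm (with $p_k\propto\|[\mat{A}]_{\bullet,k}\|_2\|[\mat{B}]_{k,\bullet}\|_2$ for Frobenius and $p_k\propto\max_i|[\mat{A}]_{ik}|\max_j|[\mat{B}]_{kj}|$ for max-norm, exactly as in the paper), followed by a second Cauchy--Schwarz to reach the stated norms, and the $O(nm\,c^{\omega-2})$ cost from the row-sparsity of $\mat{S}$. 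Your remark about passing to the column/row sup-norms before optimizing $\{p_k\}$ in the max-norm case matches the paper's upper-bounding step, so there is nothing to correct.
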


\begin{proof}

Let us consider the Frobenius-norm algorithm first. Drineas et al.~show that the minimizing probability distribution for $\Tr[\mat{\Sigma}]$ is the one with $p_k \propto \|[\mat{A}]_{\bullet, k}\|_2 \|[\mat{B}]_{k, \bullet}\|_2$, which gives

\[ \Tr[\mat{\Sigma}] =  \big(\sum_i  \|[\mat{A}]_{\bullet, i}\|_2 \|[\mat{B}]_{i, \bullet}\|_2\big)^2  - \|\mat{AB}\|_F^2  \,.\]

The Frobenius norm algorithm follows by inspecting the matrix Bernstein inequalities (see Lemma \ref{lem:vector-bernstein-inequality}) and taking $c = L = \big(\sum_i  \|[\mat{A}]_{\bullet, i}\|_2 \|[\mat{B}]_{i, \bullet}\|_2\big)^2 \leq \norm{\mat{A}}_{F}^2\norm{\mat{B}}_{F}^2/\varepsilon^2$. For the max-norm algorithm we can write

\begin{equation*}
    \max_{i} [\mat{\Sigma}]_{ii}  \leq  \sum_{k} \max_{ij}  [\mat{A}]_{ik}^2 [\mat{B}]_{k j}^2 / p_k - \min_{ij}(\mat{AB})^2_{i j}  \,.
\end{equation*}

By the method of Lagrange multipliers, or the Cauchy-Schwarz inequality, one can show that the minimizing probability distribution for this quantity is the one with probabilities $p_k \propto \max_{ij}  [\mat{A}]_{ik} [\mat{B}]_{k j}$. For this choice of probability distribution we then have 

\begin{align}
    \max_{i} [\mat{\Sigma}]_{ii} &\leq  \Big(\sum_{k} \max_{ij}  |[\mat{A}]_{ik} [\mat{B}]_{k j}|\Big)^2 - \min_{ij}(\mat{AB})^2_{i j}  \nonumber \\
    &=  \Big(\sum_{k} \max_{i}  |[\mat{A}]_{ik}|  \max_{j} |[\mat{B}]_{k j}|\Big)^2 - \min_{ij}(\mat{AB})^2_{i j}  \nonumber \\
    &\leq \Big(\sum_{k} \max_{i}  |[\mat{A}]_{ik}|^2 \Big) \Big(\sum_{\ell}  \max_{j} |[\mat{B}]_{\ell j}|^2\Big) - \min_{ij}(\mat{AB})^2_{i j}   \nonumber \\
    &=   \|\mat{A}^T\|_{\infty,2}^2  \|\mat{B}\|_{\infty,2}^2  - \min_{ij}(\mat{AB})^2_{i j} \,.
\end{align}
where the second inequality is due to Cauchy-Schwarz. The max-norm algorithm follows by choice of $c = L = \norm{\mat{A}}_{\infty,2}^2\norm{\mat{B}^T}_{\infty,2}^2/\varepsilon^2$ and inspecting the max-norm matrix Bernstein inequality of Lemma \ref{lem:vector-bernstein-inequality}.

Finally, we note that we can bring the complexity down with respect to dependencies on the matrix dimension, by observing that the computational cost of evaluating $\mat{A}\mat{S}^T \mat{S} \mat{B}$ is $O(nm c^{\omega-2})$ due to the sparsity of $\mat{S}$ (first evaluate $\mat{A}\mat{S}^T$ and $\mat{S} \mat{B}$, taking time $O(nc)$ and $O(mc)$ respectively, and then evaluate their product using fast matrix multiplication), and there is no dependence on the inner dimension.                                            
\end{proof}

\subsection{Quantum algorithm}



In this section we quantize the tug-of-war and column-sample matrix sketches. Here we coherently construct $\mat{A}\vec{s}\vec{s}^T\mat{B}$ for given $\vec{s} \in \mathbb{R}^q$ from the tug-of-war sketch (see Section \ref{sec:tug-of-war}) and use quantum multivariate mean estimation to estimate $\E[\mat{A}\vec{s}\vec{s}^T\mat{B}]$. As the algorithm is effectively computing matrix-vector products, it is unclear how to exploit fast matrix multiplication in this setting, unlike in the classical algorithms discussed in the previous section which can bunch together samples. Both algorithms have the same complexity for square matrices, but the quantum algorithm based on the column-sample sketch uses a QROM as the sketching vectors are constructed from column and row norms of the matrices. When considering rectangular matrices, ability to use QROM yields an advantage for the column-sample sketch over the tug-of-war sketch for matrix products with large inner dimension ($\mat{A}$ ``wide", $\mat{B}$ ``tall"), as we can exploit the sparsity of the sketching vectors. 

\begin{theorem}[Quantum tug-of-war algorithm, 2 matrices]\label{thm:quantum-sarlos-2}
    Let $\mat{A} \in \mathbb{R}^{n \times q}$, $\mat{B} \in \mathbb{R}^{q \times m}$. We have a quantum algorithm which works in the circuit model and returns a matrix $\mat{C}\in \mathbb{R}^{n \times m}$, with success probability at least $(1-\delta)$, which satisfies 
    
    \begin{equation}
        \|\mat{C} - \mat{AB} \|_{\max} \leq \varepsilon \quad \text{in time}\quad \widetilde{O}\!
        \left( (nm + mq + nq) \left(\frac{\norm{\mat{A}}_{2,2}\norm{\mat{B}}_{2,2}}{\varepsilon} \right) \log\left( \frac{1}{\delta}\right)\right) \,,
    \end{equation}
\end{theorem}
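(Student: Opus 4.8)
The plan is to invoke the quantum multivariate mean estimation routine of Theorem~\ref{thm:quantum-mean-estimation} on the matrix-valued random variable $\mat{X} = \mat{A}\vec{s}\vec{s}^T\mat{B}$, where $\vec{s} = (h(1),\dots,h(q))$ with $h:[q]\to\{-1,1\}$ drawn uniformly, i.e.\ $\Omega = \{-1,1\}^q$ with $\Pbb$ uniform, and we regard $\mat{X}$ as a vector in $\Rbb^{d}$ with $d = nm$. By Lemma~\ref{lem:tug-of-war-sketch} (itself an instance of Lemma~\ref{lem:covariance-sketch}) we have $\E[\vec{s}\vec{s}^T] = \mat{I}$, hence $\E[\mat{X}] = \mat{AB}$, and
\[
\Tr[\mat{\Sigma}] = \|\mat{A}\|_F^2\|\mat{B}\|_F^2 + \|\mat{AB}\|_F^2 - 2\sum_{k}\|[\mat{A}]_{\bullet k}\|_2^2\|[\mat{B}]_{k\bullet}\|_2^2 \;\le\; 2\|\mat{A}\|_{2,2}^2\|\mat{B}\|_{2,2}^2,
\]
using $\|\mat{AB}\|_F \le \|\mat{A}\|_F\|\mat{B}\|_F$ and discarding the last non-negative term. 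Theorem~\ref{thm:quantum-mean-estimation} with $N$ q-samples then returns $\mat{C} = \widetilde{\vec{\mu}}$ with $\|\mat{C}-\mat{AB}\|_{\max} = \|\widetilde{\vec{\mu}}-\vec{\mu}\|_\infty \le \sqrt{\Tr[\mat{\Sigma}]}\,\log(nm/\delta)/N$ with probability at least $1-\delta$, so it suffices to take $N = \Theta\big(\|\mat{A}\|_{2,2}\|\mat{B}\|_{2,2}\log(nm/\delta)/\varepsilon\big)$ (taking the maximum with $\log(nm/\delta)$ if needed, at no asymptotic cost in the nontrivial regime $\varepsilon \lesssim \|\mat{AB}\|_{\max}$).

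It remains to cost the input oracles and exhibit an efficient inner-product subroutine, after which the runtime follows from the $\widetilde{O}(\eta N + d)$ bound recorded after Theorem~\ref{thm:quantum-mean-estimation}. The state-preparation oracle $U_\Pbb$ prepares the uniform superposition over $\{-1,1\}^q$, which is just $q$ Hadamard gates and requires no memory assumptions. For the inner-product oracle we must compute, for a coherently held grid point $\mat{Y}\in\Rbb^{n\times m}$ and a basis state $\vec{s}\in\{-1,1\}^q$, the Frobenius inner product $\langle\mat{Y},\mat{A}\vec{s}\vec{s}^T\mat{B}\rangle_F = (\mat{A}\vec{s})^T\mat{Y}(\mat{B}^T\vec{s})$, since $\mat{A}\vec{s}\vec{s}^T\mat{B} = (\mat{A}\vec{s})(\mat{B}^T\vec{s})^T$. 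I would do this reversibly in the order: form $\vec{p}=\mat{A}\vec{s}$ in $O(nq)$ arithmetic operations, form $\vec{r}=\mat{B}^T\vec{s}$ in $O(mq)$ operations, form $\mat{Y}\vec{r}$ in $O(nm)$ operations, and take $\vec{p}^T(\mat{Y}\vec{r})$ in $O(n)$ operations, uncomputing intermediates as usual. Since real arithmetic is unit-cost by assumption and $\mat{A},\mat{B}$ are part of the (hard-coded) circuit input, this gives $\eta = \widetilde{O}(nm+nq+mq)$ with no QROM/QRAM. Plugging $\eta$, $N$ and $d=nm$ into $\widetilde{O}(\eta N + d)$ yields $\widetilde{O}\big((nm+nq+mq)\,\|\mat{A}\|_{2,2}\|\mat{B}\|_{2,2}/\varepsilon\cdot\log(1/\delta)\big)$, absorbing the additive $nm$ since $N\ge 1$, which is the claim.

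The genuinely routine parts — the covariance bound and the arithmetic-count — are respectively already done in Lemma~\ref{lem:tug-of-war-sketch} and elementary. The main point to get right is that nothing leaves the circuit model: $U_\Pbb$ is generic, the inner-product oracle only evaluates matrix–vector products against the fixed inputs $\mat{A},\mat{B}$ and the coherently-held grid vector $\mat{Y}$ (no data-dependent addressing), and the $\widetilde{O}(d)$ overhead from uniform-superposition preparation and the inverse QFT inside Theorem~\ref{thm:quantum-mean-estimation} is already budgeted. A minor bookkeeping subtlety is the $\log(nm/\delta)$ versus $\log(1/\delta)$ factor, which is swallowed by $\widetilde{O}(\cdot)$, together with the $N\ge\log(nm/\delta)$ side condition, which one simply enforces.
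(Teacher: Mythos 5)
Your proposal is correct and follows essentially the same route as the paper: the tug-of-war random variable $\mat{A}\vec{s}\vec{s}^T\mat{B}$, the covariance bound $\Tr[\mat{\Sigma}]\le 2\norm{\mat{A}}_F^2\norm{\mat{B}}_F^2$ from Lemma~\ref{lem:tug-of-war-sketch}, a Hadamard-based $U_\Pbb$, a per-sample inner-product cost of $\widetilde{O}(nm+nq+mq)$ in the circuit model, and the same choice of $N$ in Theorem~\ref{thm:quantum-mean-estimation}. The only (harmless) difference is that you evaluate the inner product as $(\mat{A}\vec{s})^T\mat{Y}(\mat{B}^T\vec{s})$ without explicitly forming the outer product register, whereas the paper computes $\mat{A}\vec{s}\vec{s}^T\mat{B}$ via $O_{\mat{X}}$ and then takes the trace inner product and uncomputes; both give the same asymptotic runtime.
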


\begin{proof}
In order to perform quantum multivariate mean estimation (as outlined in Section \ref{sec:quantum-primitives}), we use the following oracles. The action of the binary oracle may be written as

\[ O_{\mat{X}} : \ket{\vec{s}}\ket{0} \mapsto \ket{\vec{s}}\ket{\mat{A}\vec{s} \vec{s}^T  \mat{B}}\,,\]

which can be obtained in $\widetilde{O}(nm + mq + qn)$ time, given QROM access to $\mat{A}$ and $\mat{B}$. Note that we will use the first register in superposition, but we do not need fast coherent read or write for the entries of $\mat{A}$ and $\mat{B}$ -- thus with a binary encoding of both we can also achieve the same runtime in the circuit model, without QROM. For the probability oracle, we relax the 4-wise independence assumption, and allow our event space to include all signed bitstrings $\Omega = \{-1, +1 \}^n$. Here, the probability oracle has action:

\[ U_{\mathbb{P}}: \ket{0} \xrightarrow[]{\mat{H}^{\otimes n}} \frac{1}{\sqrt{2^n}} \sum_{2\vec{s}'-\mathbf{1}_n \in \Omega} \ket{\vec{s}'}\,.  \]

which is instantiated simply by the Hadamard operation in constant time. Finally, we can write the action of the inner product oracle as 
\[  O_{\mat{Y},\mat{X}}: \ket{\mat{Y}}\ket{\vec{s}}\ket{0} \mapsto \ket{\mat{Y}}\ket{\vec{s}}\ket{\Tr[\mat{Y}^T \mat{A}\vec{s} \vec{s}^T \mat{B}]}\,, \]
for chosen instances $\mat{Y} \in (-\frac{1}{2}, \frac{1}{2})^{n \times m}$ from the mean estimation algorithm of Theorem \ref{thm:multivariate-mean-estimation}. We remark that the Hilbert-Schmidt (trace) inner product we denote above is entirely equivalent to the vector inner product, should $\mat{Y}$ and $\mat{A}\vec{s} \vec{s}^T \mat{B}$ be unfurled into column vectors. This inner product can be instantiated as

\begin{align*}
   \ket{\mat{Y}}\ket{\vec{s}}\ket{0}\ket{0} &\xrightarrow[]{O_{\mat{X}}} \ket{\mat{Y}}\ket{\vec{s}}\ket{\mat{A}\vec{s} \vec{s}^T  \mat{B}}\ket{0} \\&\longrightarrow \ket{\mat{Y}}\ket{\vec{s}}\ket{\mat{A}\vec{s} \vec{s}^T  \mat{B}}\ket{\Tr[\mat{Y}^T \mat{A}\vec{s} \vec{s}^T \mat{B}]} \xrightarrow[]{O_{\mat{X}}^{\dag}} \ket{\mat{Y}}\ket{\vec{s}}\ket{0}\ket{\Tr[\mat{Y}^T \mat{A}\vec{s} \vec{s}^T \mat{B}]}\,, 
\end{align*}
where the second step costs $O(nm)$. The quantum multivariate mean estimation routine of Theorem \ref{thm:quantum-mean-estimation} can thus be instantiated with complexity $\widetilde{O}((nm + nq + mq) N)$, attaining max-norm error $\sqrt{\Tr[\mat{\Sigma}]}\log(nm/\delta)/N$ with success probability at least $(1-\delta)$, where $\Tr[\mat{\Sigma}]$ is given in Lemma \ref{lem:tug-of-war-sketch}. The theorem statement is satisfied by picking $N = \sqrt{2}\|\mat{A}\|_F \|\mat{B}\|_F \log(nm/\delta) / \varepsilon$.
\end{proof}

\begin{theorem}[Quantum column-sample algorithm, 2 matrices]\label{thm:quantum-dkm}
    Let $\mat{A} \in \mathbb{R}^{n \times q}$, $\mat{B} \in \mathbb{R}^{q \times m}$. We have a quantum algorithm in the QROM model which returns a matrix $\mat{C}\in \mathbb{R}^{n \times m}$, with success probability at least $(1-\delta)$, which satisfies 
    
    \begin{equation}
        \|\mat{C} - \mat{AB} \|_{\max} \leq \varepsilon \quad \text{in time}\quad \widetilde{O}\!
        \left( nm\left(\frac{\norm{\mat{A}}_{2,2}\norm{\mat{B}}_{2,2}}{\varepsilon} \right) \log\left( \frac{1}{\delta}\right)\right) \,.
    \end{equation}
\end{theorem}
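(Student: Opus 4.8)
The plan is to follow the blueprint of the proof of \cref{thm:quantum-sarlos-2}, replacing the tug-of-war sketch by the column-sample sketch and feeding the covariance bounds of \cref{lem:covariance-dkm} into the quantum multivariate mean estimation routine of \cref{thm:multivariate-mean-estimation}. As in the classical case, I would fix the sampling probabilities to the Frobenius-optimal choice $p_k \propto \norm{[\mat{A}]_{\bullet,k}}_2 \norm{[\mat{B}]_{k,\bullet}}_2$; by \cref{lem:covariance-dkm} together with the Cauchy--Schwarz inequality this gives $\Tr[\mat{\Sigma}] = \big(\sum_k \norm{[\mat{A}]_{\bullet,k}}_2\norm{[\mat{B}]_{k,\bullet}}_2\big)^2 - \norm{\mat{AB}}_F^2 \le \norm{\mat{A}}_{2,2}^2\norm{\mat{B}}_{2,2}^2$. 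Since the max-norm guarantee of \cref{thm:multivariate-mean-estimation} is governed by $\sqrt{\Tr[\mat{\Sigma}]}$, this is the relevant second-moment quantity, so (unlike the classical max-norm column-sample algorithm, where the entrywise variance matters and a different distribution is optimal) this is precisely the right distribution to use here.

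Next I would implement the three input oracles of \cref{thm:multivariate-mean-estimation}. In a classical preprocessing step of time $O(nq + qm)$ one computes all column norms of $\mat{A}$ and row norms of $\mat{B}$, hence the probabilities $\{p_k\}_{k=1}^q$, and builds a binary tree of partial sums of the $p_k$'s; storing this (fixed, precomputed) tree in QROM lets us prepare $U_\Pbb : \ket{0} \mapsto \sum_{k=1}^q \sqrt{p_k}\ket{k}$ coherently via $O(\log q)$ QROM-controlled rotations, i.e.\ in $\widetilde{O}(1)$ time. For the binary oracle $O_{\mat{X}} : \ket{k}\ket{0} \mapsto \ket{k}\ket{\mat{A}\vec{s}\vec{s}^T\mat{B}}$ with $\vec{s} = \vec{e}_k/\sqrt{p_k}$, observe that $\mat{A}\vec{s}\vec{s}^T\mat{B} = \tfrac{1}{p_k}[\mat{A}]_{\bullet,k}[\mat{B}]_{k,\bullet}$ is an $n\times m$ rank-one matrix: reading the $n$ entries of column $k$ of $\mat{A}$ and the $m$ entries of row $k$ of $\mat{B}$ from QROM with $k$ in superposition, and writing out the outer product, costs $\widetilde{O}(nm)$. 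The inner-product oracle $O_{\mat{Y},\mat{X}} : \ket{\mat{Y}}\ket{k}\ket{0} \mapsto \ket{\mat{Y}}\ket{k}\ket{\Tr[\mat{Y}^T\mat{A}\vec{s}\vec{s}^T\mat{B}]}$ is then obtained exactly as in \cref{thm:quantum-sarlos-2}: apply $O_{\mat{X}}$, compute the Hilbert--Schmidt contraction with $\mat{Y}$ using $O(nm)$ arithmetic, and uncompute $O_{\mat{X}}$, for a total of $\widetilde{O}(nm)$.

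Finally I would assemble the pieces. By \cref{thm:multivariate-mean-estimation} with $N$ q-samples the output satisfies $\norm{\mat{C} - \mat{AB}}_{\max} = \widetilde{O}\big(\sqrt{\Tr[\mat{\Sigma}]}\log(nm/\delta)/N\big) = \widetilde{O}\big(\norm{\mat{A}}_{2,2}\norm{\mat{B}}_{2,2}\log(nm/\delta)/N\big)$ with probability at least $(1-\delta)$, so choosing $N = \Theta\big(\norm{\mat{A}}_{2,2}\norm{\mat{B}}_{2,2}\log(nm/\delta)/\varepsilon\big)$ makes the error at most $\varepsilon$. Since the inner-product oracle is the per-sample bottleneck with cost $\eta = \widetilde{O}(nm)$ and the ambient dimension is $d = nm$, the total runtime is $\widetilde{O}(\eta N + d) = \widetilde{O}\big(nm \cdot \norm{\mat{A}}_{2,2}\norm{\mat{B}}_{2,2}/\varepsilon \cdot \log(1/\delta)\big)$, as claimed.

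The main obstacle, relative to the tug-of-war case where the probability oracle was merely $\mat{H}^{\otimes n}$, is justifying the coherent preparation of the non-uniform superposition $\sum_k \sqrt{p_k}\ket{k}$ within the read-only QROM model (rather than the stronger read/write QRAM): this is exactly what forces the QROM assumption in the statement, and the point to argue cleanly is that both this preparation and the superposed indexing of a column of $\mat{A}$ / row of $\mat{B}$ are legitimate read-only accesses to classically preprocessed data. Once that is granted, the per-sample cost is $\widetilde{O}(nm)$ with no dependence on the inner dimension $q$, which — as in the classical column-sample algorithm — is what gives this sketch an advantage over the tug-of-war sketch for ``wide'' $\mat{A}$ and ``tall'' $\mat{B}$.
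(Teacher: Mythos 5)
Your proposal is correct and follows essentially the same route as the paper: substitute the column-sample sketch into the mean-estimation framework of Theorem \ref{thm:quantum-sarlos-2}, bound $\Tr[\mat{\Sigma}] \leq \norm{\mat{A}}_{2,2}^2\norm{\mat{B}}_{2,2}^2$ via Lemma \ref{lem:covariance-dkm} with the Frobenius-optimal probabilities, and exploit the $1$-sparsity of $\vec{s}$ together with QROM access so that each oracle call costs $\widetilde{O}(nm)$ independent of the inner dimension. You in fact spell out details (the QROM-based preparation of $\sum_k\sqrt{p_k}\ket{k}$ and the compute--uncompute inner-product oracle) that the paper's proof leaves implicit by deferring to Theorem \ref{thm:quantum-sarlos-2}.
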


\begin{proof}
    The analysis follows in the same way as in the proof of Theorem \ref{thm:quantum-sarlos-2}, now with the column-sample sketch and equivalent bound for $\Tr[\mat{\Sigma}]$ given in Lemma \ref{lem:covariance-dkm}. However, now $\vec{s}$ is constructed from a probability distribution based on column (row) norms of $\mat{A}$ ($\mat{B}$). Thus, we require a QROM for those norms to instantiate the binary oracle $O_{\mat{X}} : \ket{\vec{s}}\ket{0} \mapsto \ket{\vec{s}}\ket{\mat{A}\vec{s} \vec{s}^T  \mat{B}}$. Compared to Theorem \ref{thm:quantum-sarlos-2}, we can improve the runtime to only depend on outer dimensions $n,m$ by exploiting the sparsity of $\vec{s}$ in the column-sample sketch: $\mat{As}$ only costs $\widetilde{O}(n)$ time to evaluate and $\vec{s}^T\mat{B}$ only costs $\widetilde{O}(m)$ time to evaluate given QROM access to $\mat{A}$ and $\mat{B}$, and their outer product costs $\widetilde{O}(nm)$ time.  The rest of the analysis follows as in the proof of Theorem \ref{thm:quantum-sarlos-2}, and the quantum multivariate mean estimation routine of Theorem \ref{thm:quantum-mean-estimation} can thus be instantiated with complexity $\widetilde{O}((nm) N)$ with the stated error guarantee attained using the same choice of $N$.
\end{proof}

\subsection{Multiple matrices}

Now let's look at the general case of the product of $k$ matrices which we label as $\mat{A}_{1}\cdots \mat{A}_{k}$, with $\mat{A}_\ell \in \Rbb^{n_{\ell} \times n_{\ell+1}}$ for all $\ell \in [k]$. We approximate this by inserting $\vec{s}_{\ell}\vec{s}^T_{\ell}$ in between each matrix, that is we evaluate $\left(\Pi_{\ell=1}^{k-1} \mat{A}_{\ell}\vec{s}_{\ell}\vec{s}^T_{\ell}\right) \mat{A}_{k}$. This product costs $O(k \max_{i\neq j} n_i n_j )$ to evaluate classically (or quantumly, using coherent arithmetic). As for 2 matrices, we will estimate the mean of this quantity, and can repeat the analysis as before, with both quantum and classical algorithms analyzed through the lens of mean estimation, if we can bound the relevant covariances.

\begin{lemma}[Trace of covariance, sketched multiple matrix product]\label{lem:covariance-multiple-matrices}
Consider the matrix-valued random variable $\left(\Pi_{\ell=1}^{k-1} \mat{A}_{\ell}\vec{s}_{\ell} \vec{s}^{T}_{\ell}\right) \mat{A}_{k}$ with each $\vec{s}_{\ell} \in \mathbb{R}^{n_{{\ell}+1}}$ drawn independently. We have
\begin{align} \label{eq:trace-covariance-multi}
    \Tr[\mat{\Sigma}] &= \sum_{\substack{r_0=u_0,  \\ q_{k}=t_{k}}} \sum_{\substack{(q_1, r_1),..., (q_{k-1}, r_{k-1})  \\ (t_1,u_1),..., (t_{k-1},u_{k-1})}}\, \prod_{\ell=1}^{k}[\mat{A}_{\ell}]_{r_{\ell -1} q_{\ell}}[\mat{A}_{\ell}]_{u_{\ell -1} t_{\ell}} \cdot \nonumber \\
    &\hspace{10em}\cdot \left( \prod_{\ell=1}^{k-1} \left( \Cov\Big[    s^{(\ell)}_{q_{\ell}} s^{(\ell)}_{r_{\ell}}   ,  s^{(\ell)}_{t_{\ell}} s^{(\ell)}_{u_{\ell}}    \Big] + \delta_{q_{\ell} r_{\ell}}\delta_{t_{\ell}u_{\ell} } \right) - \prod_{\ell=1}^{k-1} \delta_{q_{\ell} r_{\ell}}\delta_{t_{\ell} u_{\ell} }  \right)
     \,,
\end{align}
where $s^{(\ell)}_j$ denotes the $j$th entry of $\vec{s}^{(\ell)}$.
\end{lemma}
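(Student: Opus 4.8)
The plan is to reduce this computation to the two-matrix case of Lemma \ref{lem:covariance-sketch} by treating the whole sketched product as a multilinear function of the independent random matrices $\vec{s}_{\ell}\vec{s}^T_{\ell}$. First I would use the identity $\Tr[\mat{\Sigma}] = \E[\|\widehat{\mat{C}} - \mat{A}_1\cdots\mat{A}_k\|_F^2]$, where $\widehat{\mat{C}} = \left(\Pi_{\ell=1}^{k-1} \mat{A}_{\ell}\vec{s}_{\ell}\vec{s}^T_{\ell}\right)\mat{A}_{k}$; this is exactly the same manipulation used in the proof of Lemma \ref{lem:covariance-sketch}, relying on $\E[\vec{s}_\ell\vec{s}^T_\ell] = \mat{I}$ for every $\ell$, which gives $\E[\widehat{\mat{C}}] = \mat{A}_1\cdots\mat{A}_k$. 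Expanding $\|\widehat{\mat{C}}\|_F^2 = \sum_{i,j}\widehat{\mat{C}}_{ij}^2$ and writing $\widehat{\mat{C}}_{ij}$ out as a sum over all internal indices, each entry involves two independent ``copies'' of the index chain — one chain $r_0, q_1, r_1, q_2, \dots, q_k$ coming from the first factor of $\widehat{\mat{C}}_{ij}$ and one chain $u_0, t_1, u_1, \dots, t_k$ from the second — with the boundary constraints $r_0 = u_0 = i$ and $q_k = t_k = j$ (after summing over $i,j$ these become the constraints $r_0 = u_0$ and $q_k = t_k$ displayed in the statement). Pulling out the deterministic product $\prod_{\ell=1}^k [\mat{A}_\ell]_{r_{\ell-1}q_\ell}[\mat{A}_\ell]_{u_{\ell-1}t_\ell}$, what remains inside the expectation is $\E\!\left[\prod_{\ell=1}^{k-1} (\vec{s}_\ell\vec{s}^T_\ell)_{q_\ell r_\ell}(\vec{s}_\ell\vec{s}^T_\ell)_{t_\ell u_\ell}\right]$, which by independence across $\ell$ factorizes into $\prod_{\ell=1}^{k-1} \E\!\left[s^{(\ell)}_{q_\ell}s^{(\ell)}_{r_\ell}s^{(\ell)}_{t_\ell}s^{(\ell)}_{u_\ell}\right]$.

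The next step is to rewrite each fourth-moment factor as a covariance plus a product of expectations: $\E[s^{(\ell)}_{q_\ell}s^{(\ell)}_{r_\ell}s^{(\ell)}_{t_\ell}s^{(\ell)}_{u_\ell}] = \Cov[s^{(\ell)}_{q_\ell}s^{(\ell)}_{r_\ell}, s^{(\ell)}_{t_\ell}s^{(\ell)}_{u_\ell}] + \E[s^{(\ell)}_{q_\ell}s^{(\ell)}_{r_\ell}]\,\E[s^{(\ell)}_{t_\ell}s^{(\ell)}_{u_\ell}]$, and since $\E[\vec{s}_\ell\vec{s}^T_\ell] = \mat{I}$ the last term is $\delta_{q_\ell r_\ell}\delta_{t_\ell u_\ell}$. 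Thus the whole random factor equals $\prod_{\ell=1}^{k-1}\left(\Cov[s^{(\ell)}_{q_\ell}s^{(\ell)}_{r_\ell}, s^{(\ell)}_{t_\ell}s^{(\ell)}_{u_\ell}] + \delta_{q_\ell r_\ell}\delta_{t_\ell u_\ell}\right)$. To get $\Tr[\mat{\Sigma}]$ rather than $\E[\|\widehat{\mat{C}}\|_F^2]$ we subtract $\|\mat{A}_1\cdots\mat{A}_k\|_F^2 = \E[\widehat{\mat{C}}]_{F}^2$, and the cleanest way to absorb this is to observe that $\|\mat{A}_1\cdots\mat{A}_k\|_F^2$ is obtained by the identical index sum with the random factor replaced by $\prod_{\ell=1}^{k-1}\delta_{q_\ell r_\ell}\delta_{t_\ell u_\ell}$ (this is just $\E[\widehat{\mat{C}}]$ expanded twice, using $\E[\vec{s}_\ell\vec{s}^T_\ell]=\mat{I}$ for each factor independently). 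Subtracting term by term produces exactly the bracketed expression $\prod_{\ell=1}^{k-1}(\cdots) - \prod_{\ell=1}^{k-1}\delta_{q_\ell r_\ell}\delta_{t_\ell u_\ell}$ in the statement, completing the derivation.

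I expect the main obstacle to be purely bookkeeping: getting the index conventions exactly right so that the two chains share endpoints in the claimed way and the matrix entries pair up as $[\mat{A}_\ell]_{r_{\ell-1}q_\ell}[\mat{A}_\ell]_{u_{\ell-1}t_\ell}$ rather than some transposed or mismatched variant, and making sure the $\ell$-range of the $\Cov$ product ($1$ to $k-1$, matching the $k-1$ inserted sketches) stays consistent with the $\ell$-range of the matrix product ($1$ to $k$). A secondary subtlety worth a sentence is justifying the factorization of the expectation over $\ell$: it is valid because the $\vec{s}_\ell$ are drawn independently, so the random variables $(\vec{s}_\ell\vec{s}^T_\ell)_{q_\ell r_\ell}(\vec{s}_\ell\vec{s}^T_\ell)_{t_\ell u_\ell}$ for different $\ell$ are independent, and the expectation of the product is the product of expectations — no fourth-moment structure beyond that of a single sketch is needed. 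Everything else (Fubini to move $\E$ inside the finite sum, expanding $\|\cdot\|_F^2$ entrywise) is routine and mirrors the two-matrix proof of Lemma \ref{lem:covariance-sketch} verbatim.
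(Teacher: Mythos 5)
Your proposal is correct and follows essentially the same route as the paper: the paper starts from $\Tr[\mat{\Sigma}]=\sum_{ij}\Var[\widehat{\mat{C}}_{ij}]$, expands each entry over the two index chains, and invokes the identity $\Cov\big[\prod_\ell X_\ell,\prod_\ell Y_\ell\big]=\prod_\ell\big(\Cov[X_\ell,Y_\ell]+\E[X_\ell]\E[Y_\ell]\big)-\prod_\ell\E[X_\ell]\E[Y_\ell]$ for independent factors, which is exactly the computation you carry out inline via the fourth-moment decomposition $\E[s_{q}s_{r}s_{t}s_{u}]=\Cov[s_qs_r,s_ts_u]+\delta_{qr}\delta_{tu}$ together with the subtraction of $\|\E[\widehat{\mat{C}}]\|_F^2$. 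The only point worth making explicit (as you do, and as the paper does implicitly) is that $\E[\vec{s}_\ell\vec{s}_\ell^T]=\mat{I}$ is being assumed from the generic sketching setup, since the deltas come from there.
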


\begin{proof}
We will use the fact that for a product of independent random variables we have that 

\begin{align*}
    \Var\big[\prod_i^{k} X_i\big] &= \prod_i^{k}\left(\Var[X_i]+ \E[X_i]^2 \right) - \prod_i^{k} \E[X_i]^2\,,\\
    \Cov\big[\prod_i^{k} X_i, \prod_i^{k} Y_i\big] &= \prod_i^{k}\left(\Cov[X_i, Y_i]+ \E[X_i] \E[Y_i] \right) - \prod_i^{k} \E[X_i]\E[Y_i]
\end{align*}

Using this, we can write

\begin{align*} 
    \Tr[\mat{\Sigma}] &=  \sum_{ij}\Var\Big[\Big(\prod_{\ell=1}^{k-1} \mat{A}_{\ell} \vec{s}_{\ell} \vec{s}^T_{\ell} \mat{A}_{k} \Big)_{ij}\Big] \nonumber \\ 
    &=  \sum_{ij}\Var\Big[\sum_{(q_1, r_1),..., (q_{k-1}, r_{k-1})}  \prod_{\ell=1}^{k-1}[\mat{A}_{\ell}]_{r_{\ell -1} q_{\ell}} s^{(\ell)}_{q_{\ell}} s^{(\ell)}_{r_{\ell}}   [\mat{A}_{k}]_{r_{k-1} j} \Big] \nonumber \\
    &=  
    \sum_{\substack{r_0=u_0,  \\ q_{k}=t_{k}}} \sum_{\substack{(q_1, r_1),..., (q_{k-1}, r_{k-1})   \\ (t_1,u_1),..., (t_{k-1},u_{k-1})}}\, \prod_{\ell =1}^{k}[\mat{A}_{\ell}]_{r_{\ell -1} q_{\ell}}[\mat{A}_{\ell}]_{u_{\ell -1} t_{\ell}} \Cov\Big[  \prod_{\ell=1}^{k-1}  s^{(\ell)}_{q_{\ell}} s^{(\ell)}_{r_{\ell}}   , \prod_{\ell'=1}^{k-1}  s^{(\ell')}_{t_{\ell'}} s^{(\ell')}_{u_{\ell'}}    \Big] \nonumber \\
    &=  
    \sum_{\substack{r_0=u_0,  \\ q_{k}=t_{k}}} \sum_{\substack{(q_1, r_1),..., (q_{k-1}, r_{k-1})  \\ (t_1,u_1),..., (t_{k-1},u_{k-1})}}\, \prod_{\ell=1}^{k}[\mat{A}_{\ell}]_{r_{\ell -1} q_{\ell}}[\mat{A}_{\ell}]_{u_{\ell -1} t_{\ell}} \cdot \\
    &\hspace{10em}\cdot \left( \prod_{\ell=1}^{k-1} \left( \Cov\Big[    s^{(\ell)}_{q_{\ell}} s^{(\ell)}_{r_{\ell}}   ,  s^{(\ell)}_{t_{\ell}} s^{(\ell)}_{u_{\ell}}    \Big] + \delta_{q_{\ell} r_{\ell}}\delta_{t_{\ell}u_{\ell} } \right) - \prod_{\ell=1}^{k-1} \delta_{q_{\ell} r_{\ell}}\delta_{t_{\ell} u_{\ell} }  \right)  \nonumber
     \,,
\end{align*}

where in the third line we have assigned the indices $r_0 = u_0 = i, q_k = t_k = j$.
\end{proof}

\begin{lemma}[Tug-of-war variances, multiple matrices]\label{lem:covariance-dkm-multiple}
    Adopt the setting of Lemma \ref{lem:covariance-multiple-matrices}, using tug-of-war sketch matrices. We have 
    \begin{align*}
        \Tr[\mat{\Sigma}] \leq 3^{k}{\norm{\vec{A}_1}_{2,2}^2 \cdots \norm{\vec{A}_k}_{2,2}^2}\,.
    \end{align*}
\end{lemma}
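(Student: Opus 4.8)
The plan is to control $\Tr[\mat{\Sigma}]$ by the raw second moment of the sketched chain and then exploit that this chain is rank one. Write $\mat{Z}$ for the random matrix $\big(\prod_{\ell=1}^{k-1}\mat{A}_\ell\vec{s}_\ell\vec{s}_\ell^T\big)\mat{A}_k$. Then $\Tr[\mat{\Sigma}] = \sum_{ij}\Var[\mat{Z}_{ij}] \le \sum_{ij}\E[\mat{Z}_{ij}^2] = \E\big[\norm{\mat{Z}}_F^2\big]$, where the inequality simply drops the nonnegative term $\sum_{ij}\E[\mat{Z}_{ij}]^2 = \norm{\mat{A}_1\cdots\mat{A}_k}_F^2$ (consistent with \cref{lem:covariance-multiple-matrices}); so it suffices to bound $\E[\norm{\mat{Z}}_F^2]$. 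The middle part $\vec{s}_1^T\mat{A}_2\vec{s}_2\cdots\vec{s}_{k-2}^T\mat{A}_{k-1}\vec{s}_{k-1}$ of $\mat{Z}$ is a scalar $\gamma = \prod_{\ell=2}^{k-1}\vec{s}_{\ell-1}^T\mat{A}_\ell\vec{s}_\ell$, so $\mat{Z} = \gamma\,(\mat{A}_1\vec{s}_1)(\mat{A}_k^T\vec{s}_{k-1})^T$ is rank one and
\[
\norm{\mat{Z}}_F^2 = \norm{\mat{A}_1\vec{s}_1}_2^2\,\norm{\mat{A}_k^T\vec{s}_{k-1}}_2^2\,\prod_{\ell=2}^{k-1}\big(\vec{s}_{\ell-1}^T\mat{A}_\ell\vec{s}_\ell\big)^2 .
\]

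The single computational ingredient I would establish is a fourth-moment bound: for positive semidefinite $\mat{M},\mat{N}$ and a Rademacher vector $\vec{s}$ whose entries are (at least) $4$-wise independent, $\E\big[(\vec{s}^T\mat{M}\vec{s})(\vec{s}^T\mat{N}\vec{s})\big] \le 3\,\Tr[\mat{M}]\Tr[\mat{N}]$. Expanding $\E[s_as_bs_cs_d]$ over its three contributing pairing patterns gives the exact identity $\E[(\vec{s}^T\mat{M}\vec{s})(\vec{s}^T\mat{N}\vec{s})] = \Tr[\mat{M}]\Tr[\mat{N}] + 2\sum_{a\neq b}M_{ab}N_{ab}$; then $\sum_{a\neq b}M_{ab}N_{ab} \le \sum_{a,b}M_{ab}N_{ab} = \Tr[\mat{M}\mat{N}] \le \Tr[\mat{M}]\Tr[\mat{N}]$, the first step because $M_{aa},N_{aa}\ge 0$ and the last because $\mat{N}\preceq\Tr[\mat{N}]\,\mat{I}$. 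Taking $\mat{N}=\vec{w}\vec{w}^T$ specializes this to $\E\big[(\vec{s}^T\mat{M}\vec{s})(\vec{s}^T\vec{w})^2\big] \le 3\,\Tr[\mat{M}]\norm{\vec{w}}_2^2$.

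I would then close the argument by induction on the number of matrices, integrating out the sketches one at a time. For $k\ge 3$, conditioning on $\vec{s}_2,\dots,\vec{s}_{k-1}$, the only factors of $\norm{\mat{Z}}_F^2$ depending on $\vec{s}_1$ are $\norm{\mat{A}_1\vec{s}_1}_2^2 = \vec{s}_1^T(\mat{A}_1^T\mat{A}_1)\vec{s}_1$ and $(\vec{s}_1^T\mat{A}_2\vec{s}_2)^2 = (\vec{s}_1^T\vec{w})^2$ with $\vec{w}=\mat{A}_2\vec{s}_2$, so the ingredient bounds their conditional expectation by $3\norm{\mat{A}_1}_F^2\,\norm{\mat{A}_2\vec{s}_2}_2^2$. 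Taking the outer expectation leaves $3\norm{\mat{A}_1}_F^2$ times exactly $\E[\norm{\mat{Z}'}_F^2]$ for the $(k-1)$-matrix chain $\mat{Z}'=\big(\prod_{\ell=2}^{k-1}\mat{A}_\ell\vec{s}_\ell\vec{s}_\ell^T\big)\mat{A}_k$, so the induction goes through; the base case $k=2$ is the ingredient with $\mat{M}=\mat{A}_1^T\mat{A}_1$ and $\mat{N}=\mat{A}_2\mat{A}_2^T$. Unrolling gives $\E[\norm{\mat{Z}}_F^2] \le 3^{k-1}\prod_{\ell=1}^k\norm{\mat{A}_\ell}_F^2$, whence $\Tr[\mat{\Sigma}] \le 3^{k-1}\prod_\ell\norm{\mat{A}_\ell}_F^2 \le 3^{k}\prod_\ell\norm{\mat{A}_\ell}_F^2$, which is the claim.

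The step needing the most care is the fourth-moment ingredient: getting the pairing-pattern decomposition exactly right (in particular handling the all-four-indices-equal case without double counting) and confirming that only $4$-wise independence within each $\vec{s}_\ell$ together with mutual independence across $\ell$ is used, which is what the tug-of-war construction provides. The factor $3$ lost per sketch — hence the $3^k$ — comes entirely from the cross term $2\sum_{a\neq b}M_{ab}N_{ab}$; one could retain $\Tr[\mat{M}\mat{N}]$ rather than $\Tr[\mat{M}]\Tr[\mat{N}]$ for a sharper constant, but that is not needed. A minor point to check is the rank-one factorization of $\norm{\mat{Z}}_F^2$ above and that the residual after one integration step is genuinely the $(k-1)$-matrix version of the same quantity, so that the induction is legitimate.
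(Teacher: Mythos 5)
Your proof is correct, and it takes a genuinely different route from the paper's. The paper starts from the exact covariance expansion of \cref{lem:covariance-multiple-matrices}, substitutes the tug-of-war identity $\Cov[s_is_j,s_ks_\ell]=(\delta_{ik}\delta_{j\ell}+\delta_{i\ell}\delta_{jk})(1-\delta_{ij})$, bounds entries by absolute values, and then decouples the resulting sums term by term, counting $3$ contributions per sketch insertion. You instead discard the subtracted mean immediately ($\Tr[\mat{\Sigma}]\le\E[\norm{\mat{Z}}_F^2]$), exploit the rank-one factorization $\norm{\mat{Z}}_F^2=\norm{\mat{A}_1\vec{s}_1}_2^2\,\norm{\mat{A}_k^T\vec{s}_{k-1}}_2^2\prod_{\ell=2}^{k-1}(\vec{s}_{\ell-1}^T\mat{A}_\ell\vec{s}_\ell)^2$, and peel off one sketch at a time with the clean fourth-moment lemma $\E[(\vec{s}^T\mat{M}\vec{s})(\vec{s}^T\mat{N}\vec{s})]\le 3\Tr[\mat{M}]\Tr[\mat{N}]$ for PSD $\mat{M},\mat{N}$; your pairing-pattern identity and the chain $\sum_{a\ne b}M_{ab}N_{ab}\le\Tr[\mat{M}\mat{N}]\le\Tr[\mat{M}]\Tr[\mat{N}]$ both check out, as does the verification that the residual after integrating out $\vec{s}_1$ is exactly the $(k-1)$-matrix chain. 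What your approach buys: it is more modular (the only probabilistic input is a standard hypercontractivity-type moment bound using exactly $4$-wise independence per sketch), it avoids the paper's delicate multi-index bookkeeping, it yields the squared Frobenius norms cleanly (the paper's final display in fact drops the squares, apparently a typo), and it even gives the marginally sharper constant $3^{k-1}$ since there are only $k-1$ sketch insertions. What it gives up is the exact expression for $\Tr[\mat{\Sigma}]$ retained by the paper's route, which is what allows the paper to reuse the same computation verbatim for the blocked sketch $\mat{S}_\ell^T\mat{S}_\ell$ in \cref{thm:Sarlos-multi-fast}, where the covariances pick up a factor $1/c$; your induction would need to be redone for that variant.
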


\begin{proof}
    From Lemma \ref{lem:tug-of-war-sketch} recall that for the tug-of-war sketch we have $\Cov\big[s_is_j, s_{k}s_{\ell}\big] = (\delta_{ik}\delta_{j \ell} + \delta_{i\ell} \delta_{jk}) (1- \delta_{ij})$. As these correspond only to variances, they are always positive (if we upper bound all coefficients by their positive counterparts) and we can write $\Cov\big[s_is_j, s_{k}s_{\ell}\big] \leq (\delta_{ik}\delta_{j \ell} + \delta_{i\ell} \delta_{jk})$. This can be substituted into the expression for $\Tr[\mat{\Sigma}]$ in Lemma \ref{lem:covariance-multiple-matrices}, giving
    \begin{align*} 
        \Tr[\mat{\Sigma}] &\leq \sum_{\substack{r_0=u_0,  \\ q_{k}=t_{k}}} \sum_{\substack{(q_1, r_1),..., (q_{k-1}, r_{k-1})  \\ (t_1,u_1),..., (t_{k-1},u_{k-1})}}\, \prod_{\ell=1}^{k}[\ols{\mat{A}_{\ell}}]_{r_{\ell -1} q_{\ell}}[\ols{\mat{A}_{\ell}}]_{u_{\ell -1} t_{\ell}} \cdot  \prod_{\ell'=1}^{k-1} \left( \delta_{q_{\ell'}t_{\ell'}}\delta_{r_{\ell'}u_{\ell'}} + \delta_{q_{\ell'}u_{\ell'}} \delta_{r_{\ell'}t_{\ell'}} + \delta_{q_{\ell} r_{\ell}}\delta_{t_{\ell}u_{\ell} } \right)   \nonumber \\
        &= \sum_{\substack{r_0,q_{k}}} \sum_{\substack{(q_1, r_1),..., (q_{k-1}, r_{k-1})}}\, \left( [\ols{\mat{A}_{1}}]_{r_{0} q_{1}}^2 + [\ols{\mat{A}_{1}}]_{r_{0} q_{1}}[\ols{\mat{A}_{1}}]_{r_{0} r_{1}} + [\ols{\mat{A}_{1}}]_{r_{0} q_{1}}[\ols{\mat{A}_{1}}]_{r_{0} r_{1}} \right) \cdot \\
        &\qquad\qquad \cdot \prod_{\ell=2}^{k-1} \left( [\ols{\mat{A}_{\ell}}]_{r_{\ell -1} q_{\ell}}^2 + [\ols{\mat{A}_{\ell}}]_{r_{\ell -1} q_{\ell}}[\ols{\mat{A}_{\ell}}]_{q_{\ell -1} r_{\ell}} + [\ols{\mat{A}_{\ell}}]_{q_{\ell -1} q_{\ell}}[\ols{\mat{A}_{\ell}}]_{r_{\ell -1} r_{\ell}} \right) \cdot \\
        &\qquad\qquad\qquad  \cdot \left( [\ols{\mat{A}_{k}}]_{r_{k -1} q_{k}}^2 + [\mat{A}_{k}]_{r_{\ell -1} q_{k}}[\ols{\mat{A}_{k}}]_{q_{k -1} q_{k}} + [\ols{\mat{A}_{k}}]_{q_{k -1} q_{k}}[\ols{\mat{A}_{k}}]_{r_{\ell -1} q_{k}} \right)
        \\ 
        &\leq  \sum_{\substack{r_0,q_{k}}} \sum_{\substack{(q_1, r_1),..., (q_{k-1}, r_{k-1})}}\, \left( [\ols{\mat{A}_{1}}]_{r_{0} q_{1}}^2 + [\ols{\mat{A}_{1}}]_{r_{0} q_{1}}[\ols{\mat{A}_{1}}]_{r_{0} r_{1}} + [\ols{\mat{A}_{1}}]_{r_{0} q_{1}}[\ols{\mat{A}_{1}}]_{r_{0} r_{1}} \right) \cdot \\
        &\qquad\qquad \cdot \prod_{\ell=2}^{k-1} \left( [\ols{\mat{A}_{\ell}}]_{r_{\ell -1} q_{\ell}}^2 + [\ols{\mat{A}_{\ell}}]_{r_{\ell -1} q_{\ell}}[\ols{\mat{A}_{\ell}}]_{q_{\ell -1} r_{\ell}} + [\ols{\mat{A}_{\ell}}]_{q_{\ell -1} q_{\ell}}[\mat{A}_{\ell}]_{r_{\ell -1} r_{\ell}} \right) \cdot \\
        &\qquad\qquad\qquad  \cdot \left( [\ols{\mat{A}_{k}}]_{r_{k -1} q_{k}}^2 + [\ols{\mat{A}_{k}}]_{r_{\ell -1} q_{k}}[\ols{\mat{A}_{k}}]_{q_{k -1} q_{k}} + [\ols{\mat{A}_{k}}]_{q_{k -1} q_{k}}[\ols{\mat{A}_{k}}]_{r_{\ell -1} q_{k}} \right)
        \\ 
        &\leq  \sum_{\substack{r_0,q_{k}}} \sum_{\substack{(q_1, r_1),..., (q_{k-1}, r_{k-1}) \\ (q_1', r_1'),..., (q_{k-1}', r_{k-1}')}}\, \left( [\ols{\mat{A}_{1}}]_{r_{0} q_{1}'}^2 + [\ols{\mat{A}_{1}}]_{r_{0} q_{1}'}[\ols{\mat{A}_{1}}]_{r_{0} r_{1}'} + [\ols{\mat{A}_{1}}]_{r_{0} q_{1}'}[\ols{\mat{A}_{1}}]_{r_{0} r_{1}'} \right) \cdot \\
        &\qquad\qquad \cdot \prod_{\ell=2}^{k-1} \left( [\ols{\mat{A}_{\ell}}]_{r_{\ell -1} q_{\ell}'}^2 + [\ols{\mat{A}_{\ell}}]_{r_{\ell -1} q_{\ell}'}[\ols{\mat{A}_{\ell}}]_{q_{\ell -1} r_{\ell};} + [\ols{\mat{A}_{\ell}}]_{q_{\ell -1} q_{\ell}'}[\mat{A}_{\ell}]_{r_{\ell -1} r_{\ell}'} \right) \cdot \\
        &\qquad\qquad\qquad  \cdot \left( [\ols{\mat{A}_{1}}]_{r_{k -1} q_{k}}^2 + [\ols{\mat{A}_{k}}]_{r_{\ell -1} q_{k}}[\ols{\mat{A}_{k}}]_{q_{k -1} q_{k}} + [\ols{\mat{A}_{k}}]_{q_{k -1} q_{k}}[\ols{\mat{A}_{k}}]_{r_{\ell -1} q_{k}} \right)
\end{align*}
where in the second line we have done a relabeling in the third term in each bracket $u_{\ell} \rightarrow r_{\ell}$, and in the final line we have decoupled sums using the fact that all terms are positive. We note that the expression in the final line is a sum of products of squared Frobenius norms of different groupings of matrices from $\mat{A}_1$ to $\mat{A}_k$. Using the submultiplicativity of Frobenius norms, and by counting terms, we have 
\begin{equation*}
    \Tr[\mat{\Sigma}] \leq 3^k \|\mat{A}_1\|_F \cdots \|\mat{A}_k\|_F\,,
\end{equation*}
as required.
\end{proof}

\begin{theorem}[Sketching algorithms for multiple matrices]\label{thm:Sarlos-multi}
    Consider a matrix product of $k$ matrices $\mat{A}_{1}\cdots \mat{A}_{k}$, with $\mat{A}_\ell \in \Rbb^{n_{\ell-1} \times n_\ell}$ for all $\ell \in [k]$. With classical preprocessing time linear in the size of the input up to polylogarithmic factors, we give: 
    
    \noindent A classical algorithm which gives
    \begin{equation}
        \|\mat{C} - \mat{AB} \|_{F} \leq \varepsilon \quad \text{in time}\quad \widetilde{O} \left(3^{k}\cdot \max_{i\neq j} n_i n_j \cdot \frac{\norm{\vec{A}_1}_{2,2}^2 \cdots \norm{\vec{A}_k}_{2,2}^2}{\varepsilon^2} \log\left(\frac{1}{\delta} \right)  \right) \,,
    \end{equation}
    and a quantum algorithm which gives
    \begin{equation}
        \|\mat{C} - \mat{AB} \|_{\max} \leq \varepsilon \quad \text{in time}\quad \widetilde{O}\left(3^{k/2}\cdot \max_{i\neq j} n_i n_j \cdot \frac{\norm{\vec{A}_1}_{2,2} \cdots \norm{\vec{A}_k}_{2,2}}{\varepsilon} \log\left(\frac{1}{\delta}\right) \right) \,,
    \end{equation}
    both with success probability at least $(1-\delta)$.
\end{theorem}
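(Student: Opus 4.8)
The plan is to instantiate the mean-estimation framework of \cref{lem:covariance-multiple-matrices}--\cref{lem:covariance-dkm-multiple} with the matrix-valued random variable $\mat{X} = \left(\prod_{\ell=1}^{k-1}\mat{A}_\ell\vec{s}_\ell\vec{s}_\ell^T\right)\mat{A}_k$, where the $\vec{s}_\ell \in \Rbb^{n_\ell}$ are independent tug-of-war sketch vectors, relaxing $4$-wise independence to fully random signs exactly as in \cref{thm:quantum-sarlos-2}. Since the $\vec{s}_\ell$ are mutually independent and each obeys $\E[\vec{s}_\ell\vec{s}_\ell^T] = \mat{I}$ (verified inside the proof of \cref{lem:tug-of-war-sketch}), a telescoping argument gives $\E[\mat{X}] = \mat{A}_1 \cdots \mat{A}_k$, so $\mat{X}$ is an unbiased estimator, while \cref{lem:covariance-dkm-multiple} supplies the covariance bound $\Tr[\mat{\Sigma}] \le 3^k\norm{\mat{A}_1}_{2,2}^2\cdots\norm{\mat{A}_k}_{2,2}^2$. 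Essentially all analytic content sits in these two facts, which I take as given; the rest is assembling classical and quantum mean estimation around $\mat{X}$ in the manner of \cref{thm:sarlos} and \cref{thm:quantum-sarlos-2}. A useful preliminary observation, valid for both algorithms, is that $\mat{X} = (\mat{A}_1\vec{s}_1)\big(\prod_{\ell=2}^{k-1}\vec{s}_{\ell-1}^T\mat{A}_\ell\vec{s}_\ell\big)(\vec{s}_{k-1}^T\mat{A}_k)$, i.e.\ a rank-one outer product of a column and a row vector scaled by $k-2$ scalars, so a single realization is computable in $O\big(\sum_{\ell}n_{\ell-1}n_\ell + n_0 n_k\big) = O\big(k\max_{i\neq j}n_i n_j\big)$ time (the observation already noted before \cref{lem:covariance-multiple-matrices}).

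\textbf{Classical algorithm.} I would form the empirical mean $\widetilde{\mat{X}}$ of $L$ independent copies of $\mat{X}$ and apply the vector Bernstein inequality (\cref{lem:vector-bernstein-inequality}), which yields $\norm{\widetilde{\mat{X}} - \mat{A}_1\cdots\mat{A}_k}_F = O\big(\sqrt{\Tr[\mat{\Sigma}]\log(1/\delta)/L}\big)$; choosing $L = \Theta\big(3^k\norm{\mat{A}_1}_{2,2}^2\cdots\norm{\mat{A}_k}_{2,2}^2\log(1/\delta)/\varepsilon^2\big)$ forces the Frobenius error below $\varepsilon$ with probability $1-\delta$. Each sample costs $O\big(k\max_{i\neq j}n_i n_j\big)$ by the rank-one evaluation above, and accumulating it into the running average costs a further $O(n_0 n_k)$; multiplying by $L$ gives the stated runtime, where the factor $k$ from the per-sample cost is $O(\log 3^k)$ and hence absorbed into $\widetilde{O}$. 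The only preprocessing is reading the input and preparing the (trivial, sign-valued) samplers, which is linear in the input size.

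\textbf{Quantum algorithm.} I would feed $\mat{X}$ into quantum multivariate mean estimation (\cref{thm:quantum-mean-estimation}). The probability oracle $U_\Pbb$ is a layer of Hadamards on $\sum_{\ell=1}^{k-1}n_\ell$ qubits, producing the uniform superposition over sign strings $(\vec{s}_1,\dots,\vec{s}_{k-1})$; the binary oracle $O_{\mat{X}}\colon\ket{\vec{s}_1,\dots,\vec{s}_{k-1}}\ket{0}\mapsto\ket{\vec{s}_1,\dots,\vec{s}_{k-1}}\ket{\mat{X}}$ performs the same $k$ matrix--vector products and the final outer product in coherent arithmetic, at cost $\widetilde{O}\big(k\max_{i\neq j}n_i n_j\big)$ (garbage cleaned by standard uncomputation since $\mat{X}$ is a deterministic function of the $\vec{s}_\ell$ and the fixed classical matrices); and the inner-product oracle $O_{\mat{Y},\mat{X}}\colon\ket{\mat{Y}}\ket{\vec{s}_\bullet}\ket{0}\mapsto\ket{\mat{Y}}\ket{\vec{s}_\bullet}\ket{\Tr[\mat{Y}^T\mat{X}]}$ is obtained by the compute--evaluate--uncompute pattern of \cref{thm:quantum-sarlos-2}, the extra Hilbert--Schmidt product of two $n_0\times n_k$ matrices costing only $O(n_0 n_k)$. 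Crucially, none of these steps coherently reads or writes the fixed matrices $\mat{A}_\ell$, so the algorithm runs in the circuit model. Then \cref{thm:quantum-mean-estimation} returns $\widetilde{\mat{X}}$ with $\norm{\widetilde{\mat{X}} - \mat{A}_1\cdots\mat{A}_k}_{\max}\le\sqrt{\Tr[\mat{\Sigma}]}\log(n_0 n_k/\delta)/N$ from $\widetilde{O}(N)$ oracle calls; using $\Tr[\mat{\Sigma}]\le 3^k\prod_\ell\norm{\mat{A}_\ell}_{2,2}^2$ and taking $N = \Theta\big(3^{k/2}\prod_\ell\norm{\mat{A}_\ell}_{2,2}\log(n_0 n_k/\delta)/\varepsilon\big)$ makes the error at most $\varepsilon$, which multiplied by the $\widetilde{O}(k\max_{i\neq j}n_i n_j)$ cost per q-sample gives the claimed complexity.

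\textbf{Main obstacle.} Because \cref{lem:covariance-dkm-multiple} already delivers the $3^k$ bound on $\Tr[\mat{\Sigma}]$, the proof has no serious analytic bottleneck; the one point requiring care is the implementation layer --- verifying that the $k$-fold product with inserted sketches collapses to the cheap rank-one object so that both the classical per-sample cost and the coherent binary/inner-product oracles stay at $O(k\max_{i\neq j}n_i n_j)$, and that the arithmetic-precision and uncomputation bookkeeping of \cref{thm:quantum-sarlos-2} extend verbatim to $k-1$ sketch registers. Beyond the two-matrix case, nothing conceptually new arises.
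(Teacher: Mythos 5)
Your proposal is correct and follows the same route as the paper's (very terse) proof: bound $\Tr[\mat{\Sigma}]$ via Lemma \ref{lem:covariance-dkm-multiple}, plug into the vector Bernstein inequality and Theorem \ref{thm:quantum-mean-estimation} with the same choices of $L$ and $N$, and observe that each (q-)sample costs $O(k\max_{i\neq j}n_in_j)$. Your rank-one factorization of the sketched product and the explicit oracle constructions are just a more detailed spelling-out of what the paper leaves implicit.
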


\begin{proof}
    The sample complexity can be found by simply using the bound on $\Tr[\mat{\Sigma}]$ derived in Lemma \ref{lem:covariance-multiple-matrices} for the classical matrix Bernstein inequality quoted in Lemma \ref{lem:vector-bernstein-inequality} and the quantum mean estimation subroutine of Theorem \ref{thm:quantum-mean-estimation}, with a choice of $L =  3^k \|\mat{A}_1\|^2_F \cdots \|\mat{A}_k\|^2_F \log(1/\delta)/\varepsilon^2$ and $N =  3^{k/2} \|\mat{A}_1\| \cdots \|\mat{A}_k\|_F \log(n_0n_k/\delta)/\varepsilon$ respectively. The time complexity per sample costs $O(k \max_{i\neq j} n_i n_j )$ for both the classical and quantum algorithm.
\end{proof}

So far we have not exploited fast matrix multiplication. We can instead insert $\mat{S}^T_{\ell}\mat{S}_{\ell}$ in between each matrix. That is, we approximate $\mat{A}_{1}\cdots \mat{A}_{k}$ by evaluating $\left(\Pi_{\ell=1}^{k-1} \mat{A}_{\ell}\mat{S}^T_{\ell}\mat{S}_{\ell}\right) \mat{A}_{k}$. Note that unlike in the case of a product of 2 matrices, there is no longer a simple interpretation of this as a sample mean over choices of $\vec{s}_{\ell}\mat{s}_{\ell}^T$. Thus, now we need to analyze $\Tr[\mat{\Sigma}]$ for the matrix-valued random variable $\left(\Pi_{\ell=1}^{k-1} \mat{A}_{\ell}\mat{S}^T_{\ell}\mat{S}_{\ell}\right) \mat{A}_{k}$ explicitly.

\begin{theorem}[Sketching algorithm for multiple matrices exploiting fast matrix multiplication]\label{thm:Sarlos-multi-fast}
    Consider a matrix product of $k$ matrices $\mat{A}_{1}\cdots \mat{A}_{k}$, with $\mat{A}_\ell \in \Rbb^{n_{\ell-1} \times n_\ell}$ for all $\ell \in [k]$. With classical preprocessing time linear in the size of the input up to polylogarithmic factors, we give a classical algorithm which returns $\mat{C}$ satisfying
    \begin{equation}
       \|\mat{C} - \mat{AB} \|_{F} \leq \varepsilon \quad \text{in time}\quad \widetilde{O} \left(3^{k(\omega-2)}\cdot \max_{i\neq j} n_i n_j \left(\frac{\norm{\vec{A}_1}_{2,2}^2 \cdots \norm{\vec{A}_k}_{2,2}^2}{\varepsilon^2} \right)^{\omega-2} \log\left(\frac{1}{\delta} \right) \right) \,,
    \end{equation}
    with success probability at least $(1-\delta)$.
\end{theorem}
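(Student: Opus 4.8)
The plan is to run the sketched estimator $\widetilde{\mat{C}} = \bigl(\prod_{\ell=1}^{k-1}\mat{A}_\ell\mat{S}_\ell^T\mat{S}_\ell\bigr)\mat{A}_k$ with independent tug-of-war sketch matrices $\mat{S}_\ell\in\Rbb^{c\times n_\ell}$ (entries in $\{\pm 1/\sqrt{c}\}$, drawn from $4$-wise independent hash families so that each $\mat{S}_\ell$ is specified by an $\widetilde{O}(1)$-size seed and preprocessing stays linear in the input), to evaluate $\widetilde{\mat{C}}$ exactly using fast matrix multiplication, and to boost the success probability by median-of-means. Unbiasedness $\E[\widetilde{\mat{C}}]=\mat{A}_1\cdots\mat{A}_k$ is immediate from $\E[\mat{S}_\ell^T\mat{S}_\ell]=\mat{I}$ and independence, so the real work is the second-moment bound $\Tr[\mat{\Sigma}]=\E\bigl[\|\widetilde{\mat{C}}-\mat{A}_1\cdots\mat{A}_k\|_F^2\bigr]$ (where $\mat{\Sigma}$ is the covariance of $\widetilde{\mat{C}}$ viewed as a vector-valued random variable) together with the runtime.

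For the second moment --- the crux --- I would set $\mat{M}_\ell:=\mat{S}_\ell^T\mat{S}_\ell-\mat{I}$ (mean zero, independent over $\ell$) and expand $\widetilde{\mat{C}}-\mat{A}_1\cdots\mat{A}_k=\sum_{\emptyset\neq T\subseteq[k-1]}\mat{B}_0^{(T)}\mat{M}_{\ell_1}\mat{B}_1^{(T)}\cdots\mat{M}_{\ell_{|T|}}\mat{B}_{|T|}^{(T)}$, where $T=\{\ell_1<\cdots<\ell_{|T|}\}$ and each $\mat{B}_i^{(T)}$ is a contiguous sub-product of the $\mat{A}_\ell$. In $\E\|\cdot\|_F^2$ all cross-terms between $T\neq T'$ vanish: any $\ell\in T\triangle T'$ contributes a lone factor $\mat{M}_\ell$, independent of and sandwiched between everything else, and $\E[\mat{M}_\ell]=0$ together with linearity of the trace kills the term; hence $\Tr[\mat{\Sigma}]=\sum_{\emptyset\neq T}\E\bigl\|(\text{term }T)\bigr\|_F^2$. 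For a fixed $T$ I would peel the $\mat{M}_{\ell_i}$ off from the left, at each step conditioning on everything to its right and applying the two-matrix tug-of-war bound $\E\bigl[\|\mat{X}(\mat{S}_\ell^T\mat{S}_\ell-\mat{I})\mat{Y}\|_F^2\bigr]\le\frac{2}{c}\|\mat{X}\|_F^2\|\mat{Y}\|_F^2$ --- which follows from Lemma~\ref{lem:tug-of-war-sketch} via the sample-mean identity $\E\|\mat{X}\mat{S}^T\mat{S}\mat{Y}-\mat{XY}\|_F^2=\frac1c\Tr[\mat{\Sigma}]\le\frac2c\|\mat{X}\|_F^2\|\mat{Y}\|_F^2$ (dropping the nonpositive term and using $\|\mat{XY}\|_F\le\|\mat{X}\|_F\|\mat{Y}\|_F$). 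Using Frobenius submultiplicativity to bound $\prod_i\|\mat{B}_i^{(T)}\|_F^2\le\prod_{\ell=1}^k\|\mat{A}_\ell\|_F^2$, this gives $\E\bigl\|(\text{term }T)\bigr\|_F^2\le(2/c)^{|T|}\prod_\ell\|\mat{A}_\ell\|_F^2$, and summing over subsets,
\[
\Tr[\mat{\Sigma}]\;\le\;\Bigl(\bigl(1+\tfrac2c\bigr)^{k-1}-1\Bigr)\prod_{\ell=1}^k\|\mat{A}_\ell\|_F^2\;\le\;\frac{3^{k-1}}{c}\prod_{\ell=1}^k\|\mat{A}_\ell\|_F^2,
\]
the last step because $\binom{k-1}{t}(2/c)^t\le\binom{k-1}{t}2^t/c$ and $\sum_{t}\binom{k-1}{t}2^t=3^{k-1}$. (An equivalent, perhaps cleaner, route is induction on $k$, peeling off the leftmost $\mat{A}_1\mat{S}_1^T\mat{S}_1$ and using $\|\mat{A}_1\|_{\mathrm{op}}\le\|\mat{A}_1\|_F$.) Choosing $c=\Theta\bigl(3^k\prod_\ell\|\mat{A}_\ell\|_F^2/\varepsilon^2\bigr)$ then makes $\Tr[\mat{\Sigma}]=O(\varepsilon^2)$.

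It remains to turn this into a $(1-\delta)$-confidence, $\varepsilon$-accurate estimate and to account for the runtime. Since $\widetilde{\mat{C}}$ is a nonlinear function of the $\mat{S}_\ell$, the Bernstein inequality (Lemma~\ref{lem:vector-bernstein-inequality}) does not apply directly; instead, Markov on $\|\widetilde{\mat{C}}-\mat{A}_1\cdots\mat{A}_k\|_F^2$ gives a constant success probability, which I would boost to $1-\delta$ by taking a geometric-median-type aggregate of $O(\log(1/\delta))$ independent copies (standard median-of-means for vector-valued estimates; alternatively one may average $O(\log(1/\delta))$ copies and invoke Lemma~\ref{lem:vector-bernstein-inequality} once one tracks the boundedness of $\widetilde{\mat{C}}$). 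In the relevant regime $c\le\min_\ell n_\ell$ (otherwise the exact product already fits within the budget), I would evaluate one copy of $\widetilde{\mat{C}}$ by forming the $c\times c$ cores $\mat{S}_\ell\mat{A}_{\ell+1}\mat{S}_{\ell+1}^T$, chaining them, and finally multiplying by $\mat{A}_1\mat{S}_1^T$ on the left and $\mat{S}_{k-1}\mat{A}_k$ on the right; each of the $O(k)$ products involved has at most two dimensions exceeding $c$ (each $\le N:=\max_i n_i$), so it costs $\widetilde{O}(\max_{i\neq j}n_in_j\cdot c^{\omega-2})$ by fast rectangular matrix multiplication (each such product tiles into $O((N/c)^2)$ multiplications of $c\times c$ blocks). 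Substituting $c$ and multiplying by the $O(\log(1/\delta))$ repetitions yields the claimed $\widetilde{O}\bigl(3^{k(\omega-2)}\max_{i\neq j}n_in_j(\prod_\ell\|\mat{A}_\ell\|_F^2/\varepsilon^2)^{\omega-2}\log(1/\delta)\bigr)$.

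The one genuinely delicate point is the second-moment estimate: getting the cross-terms between distinct index sets to vanish cleanly, and --- above all --- pinning the base of the exponent to $3$ via the iterated peeling and the identity $\sum_t\binom{k-1}{t}2^t=3^{k-1}$. The fast-matrix-multiplication bookkeeping for the chained product and the median-of-means boosting are routine by comparison.
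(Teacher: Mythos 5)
Your proposal is correct, and the overall algorithmic skeleton --- the estimator $\bigl(\prod_{\ell=1}^{k-1}\mat{A}_\ell\mat{S}_\ell^T\mat{S}_\ell\bigr)\mat{A}_k$, the choice $c=\Theta\bigl(3^k\prod_\ell\|\mat{A}_\ell\|_F^2/\varepsilon^2\bigr)$, Markov plus median-of-means for the confidence boost, and the $O(k\max_{i\neq j}n_in_j\,c^{\omega-2})$ evaluation cost --- matches the paper's. Where you genuinely diverge is in the derivation of the key bound $\Tr[\mat{\Sigma}]\le 3^{k}\prod_\ell\|\mat{A}_\ell\|_F^2/c$. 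The paper recomputes the entrywise covariances of $\mat{S}^T\mat{S}$, finding $\Cov\bigl[[\mat{S}^T\mat{S}]_{ij},[\mat{S}^T\mat{S}]_{k\ell}\bigr]=\tfrac1c(\delta_{i\ell}\delta_{jk}+\delta_{ik}\delta_{j\ell})(1-\delta_{ij})$, and then retraces the explicit multi-index sum of Lemma~\ref{lem:covariance-multiple-matrices} and the term-counting of Lemma~\ref{lem:covariance-dkm-multiple}, enumerating the $3^{k-1}$ Kronecker-delta patterns directly. You instead set $\mat{M}_\ell=\mat{S}_\ell^T\mat{S}_\ell-\mat{I}$, expand the error over nonempty subsets $T\subseteq[k-1]$, kill the cross-terms via independence and $\E[\mat{M}_\ell]=\mat{0}$, and bound each surviving term by iterating the two-matrix estimate $\E\|\mat{X}\mat{M}_\ell\mat{Y}\|_F^2\le\tfrac2c\|\mat{X}\|_F^2\|\mat{Y}\|_F^2$ (a conditional peeling that is valid since the $\mat{M}_\ell$ are independent across layers), arriving at $\bigl((1+2/c)^{k-1}-1\bigr)\prod_\ell\|\mat{A}_\ell\|_F^2\le\tfrac{3^{k-1}}{c}\prod_\ell\|\mat{A}_\ell\|_F^2$. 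This route is cleaner: it reuses the two-matrix lemma as a black box, makes the origin of the $3^k$ factor transparent through $\sum_t\binom{k-1}{t}2^t=3^{k-1}$, and avoids the index bookkeeping in which the paper's own write-up accumulates typos (its displayed final bound even drops the squares on the Frobenius norms, which you state correctly). Your explicit remarks that the vector Bernstein inequality does not apply to this nonlinear estimator and that the regime $c>\min_\ell n_\ell$ is degenerate are both points the paper glosses over; neither affects correctness on either side.
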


\begin{proof}
    We need to modify our prior analysis for the new random variable $\left(\Pi_{\ell=1}^{k-1} \mat{A}_{\ell}\mat{S}^T_{\ell}\mat{S}_{\ell}\right) \mat{A}_{k}$. We can check that the trace of covariance satisfies 
    \begin{align*}
        \Tr[\mat{\Sigma}] &=  \sum_{ij}\Var\Big[\Big[\prod_{\ell=1}^{k-1} \mat{A}_{\ell} \mat{S}^T_{\ell}\mat{S}_{\ell} \mat{A}_{k} \Big]_{ij}\Big] \\ 
        &= \sum_{ij} \E \bigg[ \Big( \Big[\prod_{\ell=1}^{k-1} \mat{A}_{\ell} \mat{S}^T_{\ell}\mat{S}_{\ell} \mat{A}_{k} \Big]_{ij} - \Big[ \prod_{\ell=1}^{k-1} \mat{A}_{\ell} \Big]_{ij} \Big)^2 \bigg] \\
        &=  \E \bigg[ \sum_{ij} \Big[\prod_{\ell=1}^{k-1} \mat{A}_{\ell} \mat{S}^T_{\ell}\mat{S}_{\ell} \mat{A}_{k} -  \prod_{\ell=1}^{k-1} \mat{A}_{\ell} \Big]_{ij}^2 \bigg] \\
        &= \E \Big[ \Big\| \prod_{\ell=1}^{k-1} \mat{A}_{\ell} \mat{S}^T_{\ell}\mat{S}_{\ell} \mat{A}_{k} -  \prod_{\ell=1}^{k-1} \mat{A}_{\ell} \Big\|_F^2 \Big]\,,
    \end{align*}
    which shows that it controls the expected (squared) error. Second, a modification of Lemma \ref{lem:covariance-multiple-matrices} yields
    \begin{align*} 
    \Tr[\mat{\Sigma}] 
    &= \sum_{\substack{r_0=u_0,  \\ q_{k}=t_{k}}} \sum_{\substack{(q_1, r_1),..., (q_{k-1}, r_{k-1})  \\ (t_1,u_1),..., (t_{k-1},u_{k-1})}}\, \prod_{\ell=1}^{k}[\mat{A}_{\ell}]_{r_{\ell -1} q_{\ell}}[\mat{A}_{\ell}]_{u_{\ell -1} t_{\ell}} \cdot \\
    &\hspace{10em}\cdot \left( \prod_{\ell=1}^{k-1} \left( \Cov\Big[    [\mat{S}^T_{\ell}\mat{S}_{\ell}]_{q_{\ell}, r_{\ell}}   ,  [\mat{S}^T_{\ell}\mat{S}_{\ell}]_{t_{\ell},u_{\ell}}    \Big] + \delta_{q_{\ell} r_{\ell}}\delta_{t_{\ell}u_{\ell} } \right) - \prod_{\ell=1}^{k-1} \delta_{q_{\ell} r_{\ell}}\delta_{t_{\ell} u_{\ell} }  \right)  \nonumber
     \,,
    \end{align*}
    and so similar to before, we can the approximation error by determining the quantities $\Cov\big[    [\mat{S}^T_{\ell}\mat{S}_{\ell}]_{q_{\ell}, r_{\ell}}   ,  [\mat{S}^T_{\ell}\mat{S}_{\ell}]_{t_{\ell},u_{\ell}} \big]$. Recall that $\E[\mat{S}^T \mat{S}] = \mat{I}$. Similar to in Lemma \ref{lem:tug-of-war-sketch} we have
    \begin{align*}
        \Var\big[[\mat{S}^T \mat{S}]_{ii}\big] &= \E\big[[\mat{S}^T \mat{S}\big]_{ii}^2] - 1 \\
        &= \frac{1}{c^2} \E\Big[\sum_{k\ell} h_k(i)^2 h_{\ell}(i)^2 \Big] - 1\\
        &= \sum_{k\ell} \frac{1}{c^2} - 1 \\
        &= 0\,.
    \end{align*}
    The variance of off-diagonal terms ($i\neq j$) satisfy
    \begin{align*}
        \Var\big[[\mat{S}^T \mat{S}]_{ij}\big] &= \E[(\mat{S}^T \mat{S})_{ij}^2] - 0 \\
        &= \frac{1}{c^2}\E\Big[\sum_{k\ell} h_k(i) h_k(j) h_{\ell}(i) h_{\ell}(j) \Big] \\
        &= \frac{1}{c^2}\sum_{k} h_k(i) h_k(j) h_{k}(i) h_{k}(j)  \\
        &= \frac{1}{c} \,.
    \end{align*}
    Thus, we can write $\Var\big[[\mat{S}^T \mat{S}]_{ij}\big] = \frac{1}{c}(1- \delta_{ij})$. Finally, the covariance (for $(i,j) \neq (k,\ell))$ can be evaluated as
    \begin{align}
        \Cov\big[[\mat{S}^T \mat{S}]_{ij}, [\mat{S}^T \mat{S}]_{k \ell}\big] &= \E[(\mat{S}^T \mat{S})_{ij} (\mat{S}^T \mat{S})_{k \ell}] - \E[[\mat{S}^T \mat{S}]_{ij}]\, \E[[\mat{S}^T \mat{S}]_{k \ell}] \\
        &= \frac{1}{c^2}\E\Big[ \sum_{qr} h_q(i) h_q(j) h_r(k) h_r(\ell)  \Big] - \delta_{ij} \delta_{k \ell} \\
        &=  \delta_{ij} \delta_{k \ell} + \frac{1}{c} \delta_{i\ell} \delta_{j k} - \delta_{ij} \delta_{k \ell}\\
        &= \frac{1}{c} \delta_{i\ell} \delta_{j k} \,,
    \end{align}
    So that overall, we have $\Cov\big[[\mat{S}^T \mat{S}]_{ij}, [\mat{S}^T \mat{S}]_{k \ell}\big] = \frac{1}{c}(\delta_{i\ell} \delta_{j k} + \delta_{ik}\delta_{j \ell})(1-\delta_{ij})$, which is an analogous result for the covariances to Lemma \ref{lem:covariance-dkm-multiple}, now with a factor $\frac{1}{c}$. We can retrace the proof of Lemma \ref{lem:covariance-multiple-matrices} to obtain the bound
    \begin{equation*}
        \Tr[\mat{\Sigma}] \leq \frac{3^k \|\mat{A}_1\|_F \cdots \|\mat{A}_k\|_F}{c}\,.
    \end{equation*}
    
    Now we observe that the classical matrix product costs $O(k \max_{i\neq j} n_i n_j c^{{\omega}-2})$ time to evaluate, where we take $c = 3^k\norm{\vec{A}_1}_{2,2}^2 \cdots \norm{\vec{A}_k}_{2,2}^2/\varepsilon^2$ to attain the desired approximation guarantee of $\varepsilon$. The success probability is controlled by Markov's inequality, and we can attain any arbitrary constant success probability by increasing $c$ by a constant factor. By performing $O(\log(1/\delta))$ experiments, we can attain success probability at least $(1-\delta)$ by using a median-of-means strategy. 
\end{proof}

We can substitute this into the expressions in Lemma \ref{lem:covariance-sketch} to give
\begin{align*} 
    \Tr[\mat{\Sigma}]
    &= \sum_{ij} \left(\sum_{k\ell}[\mat{A}]_{ik}^2 [\mat{B}]_{\ell j}^2 + [\mat{AB}]_{ij}^2 - 2\sum_{k\ell}[\mat{A}]_{ik}^2 [\mat{B}]_{k j}^2 \right)\\
    &=   \|\mat{A}\|^2_F \|\mat{B}\|_F^2 + \|\mat{AB}\|^2_F - 2\sum_{ijk} [\mat{A}]_{ik}^2 [\mat{B}]_{kj}^2  \,.
\end{align*}

The variance of off-diagonal terms ($i\neq j$) satisfy
\begin{align*}
    \Var\!\big[s_is_j\big] &= \E[(s_is_j)^2] \\
    &= \E\left[h(i)^2 h(j)^2 \right] \\
    &= 1 \,.
\end{align*}
Thus, we can write $\Var\left[s_is_j\right] = (1- \delta_{ij})$.
Finally, the covariance (for $\{i,j\} \neq \{k,\ell\})$ can be evaluated as
\begin{align*}
    \Cov\!\big[s_is_j, s_{k}s_{\ell}\big] &= \E[s_is_j s_{k}s_{\ell}] - \E\big[s_is_j\big]\, \E[s_{k}s_{\ell}] \\
    &= \E\Big[  h(i) h(j) h(k) h(\ell)  \Big] - \delta_{ij} \delta_{k \ell} \\
    &= \delta_{ij} \delta_{k \ell} - \delta_{ij} \delta_{k \ell}\\
    &= 0 \,.
\end{align*}
This means that overall, we have 
\begin{equation}
    \Cov\!\big[s_is_j, s_{k}s_{\ell}\big] = (\delta_{ik}\delta_{j \ell} + \delta_{i\ell} \delta_{jk}) (1- \delta_{ij})\,. 
\end{equation}
We can substitute this into the expressions in Lemma \ref{lem:covariance-sketch} to give
\begin{align*} 
    \Tr[\mat{\Sigma}]
    &= \sum_{ij} \left(\sum_{k\ell}[\mat{A}]_{ik}^2 [\mat{B}]_{\ell j}^2 + [\mat{AB}]_{ij}^2 - 2\sum_{k\ell}[\mat{A}]_{ik}^2 [\mat{B}]_{k j}^2 \right)\\
    &=   \|\mat{A}\|^2_F \|\mat{B}\|_F^2 + \|\mat{AB}\|^2_F - 2\sum_{ijk} [\mat{A}]_{ik}^2 [\mat{B}]_{kj}^2  \,,
\end{align*}

\section{Complexity comparison}
\label{sec:complexity-comparison}

In this section we give a more detailed complexity comparison, as well as compare our time complexity results against prior work whilst relaxing assumptions about square matrices or inability to perform fast matrix multiplication. In Table \ref{tab:results-full} we present a comparison of time complexities for rectangular matrices, where the matrix multiplication exponent is set to $\omega$. We see that exploiting $\omega < 3$ can lead to greatly improved complexities for sketching-based algorithms.

\begin{table}[!h]
\renewcommand{\arraystretch}{2}\label{tab:results-full}
    \centering
    \begin{tabular}{rr|cc|c|c}
        &&  \multicolumn{2}{c|}{Runtime\ $(\widetilde{O}(\cdot))$} & \multirow{2}{*}{\makecell{Data \\ model}} & \multirow{2}{*}{\makecell{Multiple \\ matrices?}} \\
        && $\|\cdot\|_{\max}$ & $\|\cdot\|_F$&    \\\hline
        \multirow{4}{*}{\rotatebox{90}{Classical}} & \cite{cohen1999approximating} & $\frac{\norm{\ols{\mat{A}}  \ols{\mat{B}}}_{1,2}^2}{\varepsilon^2}$ & $\sqrt{nm}\frac{\norm{\ols{\mat{A}} \ols{\mat{B}}}_{1,2}^2}{\varepsilon^2}$ & RAM &\checkmark  \\
        & \cite{sarlos2006improved} & $n_* n_*' \Big(\frac{\norm{\mat{A}}_{2,\infty}^2\norm{\mat{B}^T}_{2,\infty}^2}{\varepsilon^2}\Big)^{\omega-2} $ & $n_* n_*' \left(\frac{\norm{\mat{A}}_{2,2}^2\norm{\mat{B}}_{2,2}^2}{\varepsilon^2}\right)^{\omega - 2}$ & circuit & \checkmark (Thm.~\ref{thm:Sarlos-multi-fast}) \\
        & \cite{drineas2006fast}  & $ n_* n_*'\Big(\frac{\norm{\mat{A}^T}_{\infty,2}^2\norm{\mat{B}}_{\infty, 2}^2}{\varepsilon^2}\Big)^{\omega-2} $ & $n_* n_*'\left(\frac{\norm{\mat{A}}_{2,2}^2\norm{\mat{B}}_{2,2}^2}{\varepsilon^2}\right)^{\omega - 2}$ & circuit & \\
        & \cite{drineas2006fast}  & $ nm\Big(\frac{\norm{\mat{A}^T}_{\infty,2}^2\norm{\mat{B}}_{\infty, 2}^2}{\varepsilon^2}\Big)^{\omega-2} $ & $nm\left(\frac{\norm{\mat{A}}_{2,2}^2\norm{\mat{B}}_{2,2}^2}{\varepsilon^2}\right)^{\omega - 2}$ & RAM & \\
        & Thm.~\ref{thm:improved-cohen-lewis} & $\frac{\norm{\ols{\mat{A}}\mat{\ols{B}}}_{\infty,\infty}  \norm{\mat{\ols{A}}\mat{\ols{B}}}_{1,1}}{\varepsilon^2}$ & $\frac{\norm{\ols{\mat{A}} \ols{\mat{B}}}_{1,1}^2}{\varepsilon^2}$ & RAM &\checkmark
        \\\hline
        \multirow{3}{*}{\rotatebox{90}{Quantum}}  & \cite{shao2018quantum} & $\frac{\norm{\mat{A}}_{2,1}\norm{\mat{B}^T}_{2,1}}{\varepsilon}$ & $\sqrt{nm}\frac{\norm{\mat{A}^T}_{2,1}\norm{\mat{B}}_{2,1}}{\varepsilon}$ & QROM & \\
        & \makecell[r]{Thms.~\ref{thm:quantum-sarlos-2} \\\emph{\&}\,\ref{thm:Sarlos-multi}} & $n_* n_*'\frac{\norm{\mat{A}}_{2,2}\norm{\mat{B}}_{2,2}}{\varepsilon}$ & & \makecell{quantum \\ circuit} & \checkmark \\
        & \makecell[r]{Thm.~\ref{thm:quantum-dkm}} & $nm\frac{\norm{\mat{A}}_{2,2}\norm{\mat{B}}_{2,2}}{\varepsilon}$ & & QROM &  \\
        & \makecell[r]{Thm.~\ref{thm:quantum-cohen-lewis}}  & $\frac{\norm{\ols{\mat{A}} \ols{\mat{B}}}_{1,1}}{\varepsilon}$  & $\sqrt{nm}\frac{\norm{\ols{\mat{A}}\, \ols{\mat{B}}}_{1,1}}{\varepsilon}$ & QRAM &\checkmark 
    \end{tabular}
    \caption{\textbf{Detailed time complexity comparison.} Here we consider multiplication of two rectangular matrices $\mat{A} \in \mathbb{R}^{n \times q}$,  $\mat{B} \in \mathbb{R}^{q \times m}$. We denote $n_*n_*' = \max(nm, nq, qm)$ as the largest pair of dimensions. In all constructions that require memory usage (i.e., either RAM, QROM, or QRAM), we require a classical preprocessing routine that initializes the memory (and hence runs in the RAM-model), and runs in time linear in the size of the input.}
\end{table}

In order to compare the expressions, we recall the following sequence of inequalities that follow directly from H\"older's inequality (see Lemma \ref{lem:holder}):
\[\frac{1}{\sqrt{n}} \norm{\mat{X}}_{1,1} \leq \left\{\begin{array}{c}
    \norm{\mat{X}}_{1,2} \\
    \norm{\mat{X}^T}_{1,2} \\
    \norm{\mat{X}^T}_{2,1} \\
    \norm{\mat{X}}_{2,1}
\end{array}\right\} \leq \sqrt{n}\norm{\mat{X}}_{2,2} = \sqrt{n}\norm{\mat{X}^T}_{2,2} \leq \left\{\begin{array}{c}
    n\norm{\mat{X}}_{2,\infty} \\
    n\norm{\mat{X}^T}_{2,\infty} \\
    n\norm{\mat{X}^T}_{\infty,2} \\
    n\norm{\mat{X}}_{\infty,2}
\end{array}\right..\]

We prove one more inequality here which is stated in Figure \ref{fig:norms}.

\begin{lemma}
Consider $\mat{A} \in \mathbb{C}^{n \times k}$ and $\mat{B} \in \mathbb{C}^{k \times m}$, and denote $\ols{\mat{A}}$ as the matrix with element-wise absolute values. We have
\begin{equation}
    \| \ols{\mat{A}} \ols{\mat{B}} \|_{1,1} \leq  \| {\mat{A}}\|_{2,1} \| {\mat{B}}^T \|_{2,1}\,.
\end{equation}
\end{lemma}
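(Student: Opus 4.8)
The plan is to reduce the claim to an entrywise application of the Cauchy--Schwarz inequality. First I would unfold the definition of the $\norm{\cdot}_{1,1}$ norm: since all entries of $\ols{\mat{A}}\ols{\mat{B}}$ are non-negative,
\[
\norm{\ols{\mat{A}}\,\ols{\mat{B}}}_{1,1} = \sum_{i=1}^n \sum_{j=1}^m \big[\ols{\mat{A}}\,\ols{\mat{B}}\big]_{ij} = \sum_{i=1}^n \sum_{j=1}^m \sum_{\ell=1}^k |[\mat{A}]_{i\ell}|\,|[\mat{B}]_{\ell j}|\,.
\]
The key observation is that the innermost sum over $\ell$ is exactly the Euclidean inner product of the vector $(|[\mat{A}]_{i\ell}|)_{\ell}$ (the entrywise absolute values of the $i$th row of $\mat{A}$) with the vector $(|[\mat{B}]_{\ell j}|)_{\ell}$ (the entrywise absolute values of the $j$th column of $\mat{B}$).

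Next I would apply the Cauchy--Schwarz inequality for each fixed pair $(i,j)$, obtaining
\[
\sum_{\ell=1}^k |[\mat{A}]_{i\ell}|\,|[\mat{B}]_{\ell j}| \;\leq\; \Big(\sum_{\ell=1}^k |[\mat{A}]_{i\ell}|^2\Big)^{1/2}\Big(\sum_{\ell=1}^k |[\mat{B}]_{\ell j}|^2\Big)^{1/2} = \norm{[\mat{A}]_{i\bullet}}_2 \,\norm{[\mat{B}]_{\bullet j}}_2\,.
\]
Summing this bound over all $i$ and $j$ and factoring the double sum gives
\[
\norm{\ols{\mat{A}}\,\ols{\mat{B}}}_{1,1} \leq \Big(\sum_{i=1}^n \norm{[\mat{A}]_{i\bullet}}_2\Big)\Big(\sum_{j=1}^m \norm{[\mat{B}]_{\bullet j}}_2\Big)\,.
\]
Finally I would identify the two factors with the claimed norms: $\sum_i \norm{[\mat{A}]_{i\bullet}}_2$ is precisely $\norm{\mat{A}}_{2,1}$ (the $\ell_1$-norm of the vector of $\ell_2$-norms of the rows of $\mat{A}$), and $\sum_j \norm{[\mat{B}]_{\bullet j}}_2 = \norm{\mat{B}^T}_{2,1}$ since the rows of $\mat{B}^T$ are the columns of $\mat{B}$. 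This completes the argument.

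There is no real obstacle here: the only inequality used is Cauchy--Schwarz applied entrywise, and the rest is bookkeeping in matching the definitions of the element-wise norms. The one point to state carefully is that the factorization of $\sum_{i,j}\norm{[\mat{A}]_{i\bullet}}_2\norm{[\mat{B}]_{\bullet j}}_2$ into a product of sums is valid because each term depends on $i$ and $j$ separately.
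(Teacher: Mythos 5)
Your proposal is correct and is essentially identical to the paper's proof: both expand $\| \ols{\mat{A}}\,\ols{\mat{B}}\|_{1,1}$ as a triple sum, apply Cauchy--Schwarz to the inner sum over the contracted index for each fixed $(i,j)$, and then factor the resulting double sum into $\|\mat{A}\|_{2,1}\|\mat{B}^T\|_{2,1}$. No gaps; the bookkeeping with the $(2,1)$-norms of $\mat{A}$ and $\mat{B}^T$ is handled exactly as in the paper.
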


\begin{proof}
    The proof follows from Cauchy-Schwarz, as
    \[\| \ols{\mat{A}} \ols{\mat{B}} \|_{1,1} = \sum_{j,k,\ell} |[\mat{A}]_{j\ell}| \cdot |[\mat{B}]_{\ell k}| \leq \sum_{j,k} \|[\mat{A}]_{j,\bullet}\|_2 \cdot \|[\mat{B}]_{\bullet,k}\|_2 = \| {\mat{A}}\|_{2,1} \| {\mat{B}}^T \|_{2,1}\,.\qedhere\]
\end{proof}

At a high level, it seems that there is a trade-off between the strength of the computational model, and the run-time of the algorithms for achieving the same precision. Indeed, the random-walk-based algorithms, building on the ideas of Cohen \textit{\&} Lewis, achieve the best run-times, but at the same time they require the strongest computational model assumptions (RAM classically, and QRAM quantumly). On the other hand, the sketching-based algorithms have a somewhat worse run-time, especially if we do not consider fast matrix-multiplication algorithms (i.e.~$\omega = 3$). However, the assumptions on the computational model are a lot weaker, since we only need QROM-access quantumly for the approach based on Drineas et al., and for the approach based on Sarl\'os we can even implement it in the circuit model, both classically and quantumly.

The situation becomes slightly more complicated when we additionally take fast matrix-multiplication algorithms into account, i.e., when $\omega < 3$. It seems that there is little hope to beat any classical sketching-based algorithm with these quantum sketching techniques, if we take $\omega < 2.5$. However, the algorithms that achieve a matrix multiplication constant below $2.5$ are very complicated and introduce very high constant factors, so we believe there is still utility in designing simple classical and quantum algorithms that behave slightly worse asymptotically than the classical state-of-the-art, but might beat most straightforward classical implementations.

\section*{Acknowledgements}

SA was supported in part by the European QuantERA project QOPT (ERA-NET Cofund 2022-25), the French PEPR integrated projects EPiQ (ANR-22-PETQ0007) and HQI (ANR-22-PNCQ-0002), and the French ANR project QUOPS (ANR-22-CE47-0003-01).
AC is supported by a Simons-CIQC postdoctoral fellowship through NSF QLCI Grant No. 2016245.
SW acknowledges funding provided by the Institute for Quantum Information and Matter, an NSF Physics Frontiers Center (NSF Grant PHY-23171100).

\printbibliography

\appendix

\end{document}